\def\BibTeX{{\rm B\kern-.05em{\sc i\kern-.025em b}\kern-.08em
    T\kern-.1667em\lower.7ex\hbox{E}\kern-.125emX}}
\newcommand{\hide}[1]{\iffalse{#1}\fi}
\definecolor{darkorchid}{rgb}{0.6,0.2,0.8}
\newcommand{\StatexIndent}[1][3]{%
  \setlength\@tempdima{\algorithmicindent}%
  \Statex\hskip\dimexpr#1\@tempdima\relax}
\newtheorem{problem}{Problem}
\newcommand{\sosp}{{\sf\small SOSP}\xspace} 
\newcommand{\mosp}{{\sf\small MOSP}\xspace} 
\newcommand{\lrdp}{{\sf\small LRDP}\xspace} 
\newcommand{\budp}{{\sf\small BUDP}\xspace} 
\newcommand{\ouralgorithm}{{\sf\small PEANUT}\xspace} 
\newcommand{\ouralgorithmplus}{{\sf\small PEANUT}\ensuremath{+}\xspace} 
\newcommand{\indsep}{{\sf\small INDSEP}\xspace} 
\newcommand{\qtm}[1]{{VE-{#1}}\xspace}
\newcommand{\variables}{\ensuremath{X}\xspace}
\newcommand{\freq}{\ensuremath{f}\xspace}
\newcommand{\querylog}{\ensuremath{\mathcal{Q}}\xspace}
\newcommand{\vertices}{{\ensuremath{V}}\xspace}
\newcommand{\spot}{\ensuremath{S}\xspace}
\newcommand{\spotvertices}[1]{\ensuremath{\vertices(#1)}\xspace}
\newcommand{\cutS}{\ensuremath{cut(S)}\xspace}
\newcommand{\jtroot}{\ensuremath{r}\xspace}
\newcommand{\treespot}{\ensuremath{T_S}\xspace}
\newcommand{\tkp}{{\sf TKP}\xspace}
\newcommand{\tkpprofit}{\ensuremath{p}\xspace}
\newcommand{\tkpweight}{\ensuremath{w}\xspace}
\newcommand{\knapsack}{\ensuremath{{k_{0}}}\xspace}
\newcommand{\troot}{\ensuremath{t}\xspace}
\newcommand{\targetprofit}{\ensuremath{{p_{0}}}\xspace}
\newcommand{\targetbenefit}{\ensuremath{b}\xspace}
\newcommand{\musep}{\ensuremath{\mu_\mathit{sep}}\xspace}
\newcommand{\mures}{\ensuremath{\mu_\mathit{res}}\xspace}
\newcommand{\alphasep}{\ensuremath{\alpha_\mathit{sep}}\xspace}
\newcommand{\alphares}{\ensuremath{\alpha_\mathit{res}}\xspace}
\newcommand{\kp}{{\sf KP}\xspace}
\newcommand{\forwardmove}{\ensuremath{\textsc{ForwardMove}}\xspace}
\newcommand{\backwardmove}{\ensuremath{\textsc{BackwardMove}}\xspace}
\newcommand{\childs}{\ensuremath{\mathit{Children}}\xspace}
\newcommand{\currentchilds}{\ensuremath{\mathit{currChildren}}\xspace}
\newcommand{\child}{\ensuremath{\mathit{child}}\xspace}
\newcommand{\jtleafs}{\ensuremath{\mathit{leaves}}\xspace}
\newcommand{\jtree}{\ensuremath{T}\xspace}
\newcommand{\parent}{\ensuremath{\pi}\xspace}
\newcommand{\nodebenefit}[1]{\revisioncol{\ensuremath{b^\querylog({#1})}\xspace}}
\newcommand{\nodecost}[1]{\ensuremath{c({#1})}\xspace}
\newcommand{\bopt}[1]{\ensuremath{b^{*}\!({#1})}\xspace}
\newcommand{\tovisit}{\ensuremath{\mathit{toVisit}}\xspace}
\newcommand{\tovisitpop}{\ensuremath{\mathit{toVisit.pop}}\xspace}
\newcommand{\pop}{\ensuremath{\mathit{pop}}\xspace}
\newcommand{\tovisitsize}{\ensuremath{\mathit{toVisit.size}}\xspace}
\newcommand{\currentpath}{\ensuremath{\mathit{currPath}}\xspace}
\newcommand{\currentpathsize}{\ensuremath{\mathit{currPath.size}}\xspace}
\newcommand{\jtpath}{\ensuremath{\mathit{path}}\xspace}
\newcommand{\treepath}{\ensuremath{\mathit{path}_{\tree}}\xspace}
\newcommand{\spotset}{\ensuremath{\mathcal{S}}\xspace}
\newcommand{\NP}{\ensuremath{\mathbf{NP}}}
\newcommand{\NPhard}{{{\NP}-hard}}
\newcommand{\bayesianNet}{\ensuremath{\mathcal{N}}\xspace}
\newcommand{\query}{\ensuremath{q}\xspace}
\newcommand{\tree}{\ensuremath{T}\xspace}
\newcommand{\nodes}{\ensuremath{V}\xspace}
\newcommand{\edges}{\ensuremath{E}\xspace}
\newcommand{\steinertree}{\ensuremath{\tree_\query}\xspace}
\newcommand{\revisioncol}[1]{{#1}} 
\newcommand{\budget}{\ensuremath{K}\xspace}
\newcommand{\cardbudget}{\ensuremath{k}\xspace}
\newcommand{\netprob}[2]{\ensuremath{\mathit{Pr}_{#1}\!\left({#2}\right)}\xspace}
\newcommand{\isuseful}[3]{\ensuremath{\delta_{#1}\!\left({#2}\right)}\xspace}
\newcommand{\benefit}[1]{\ensuremath{B\!\left({#1}\right)}\xspace}
\newcommand{\spara}[1]{\vspace{1mm}\noindent{\bf{#1}}}
\newcommand{\squishlist}{\begin{list}{$\bullet$}
  {\setlength{\itemsep}{0pt}
    \setlength{\parsep}{3pt}
    \setlength{\topsep}{3pt}
    \setlength{\partopsep}{0pt}
    \setlength{\leftmargin}{1.5em}
    \setlength{\labelwidth}{1em}
    \setlength{\labelsep}{0.5em}}}
\newcommand{\squishend}{
\end{list}}
\begin{document}
\title{Workload-Aware Materialization of Junction Trees}

\author{Martino Ciaperoni}
\affiliation{%
	\institution{Aalto University}
    \city{Espoo}
	\country{Finland}
}
\email{martino.ciaperoni@aalto.fi}

\author{Cigdem Aslay}
\affiliation{%
	\institution{Aarhus University}
	\city{Aarhus}
	\country{Denmark}
}
\email{cigdem@cs.au.dk}

\author{Aristides Gionis}
\affiliation{%
	\institution{KTH Royal Institute of Technology}
	\city{Stockholm}
	\country{Sweden}
}
\email{argioni@kth.se}

\author{Michael Mathioudakis}
\affiliation{%
	\institution{University of Helsinki}
	\city{Helsinki}
	\country{Finland}
}
\email{michael.mathioudakis@helsinki.fi}


\begin{abstract}
Bayesian networks are popular probabilistic models that capture the conditional dependencies among a set of variables. 
Inference in Bayesian networks is a fundamental task for answering probabilistic queries over a subset of variables in the data.  
However, exact inference in Bayesian networks is \NP-hard, 
which has prompted the development of many practical inference methods. 

In this paper, we focus on improving the performance of the junction-tree algorithm, a well-known method for exact inference in Bayesian networks.
In particular, we seek to leverage information in the workload of probabilistic queries 
to obtain an optimal workload-aware materialization of junction trees, 
with the aim to accelerate the processing of inference queries. 
We devise an optimal pseudo-polynomial algorithm to tackle this problem 
and discuss approximation schemes. 
Compared to state-of-the-art approaches for efficient processing of inference queries via junction trees, 
our methods are the first to exploit the information provided in query workloads.
Our experimentation on several real-world Bayesian networks confirms the effectiveness of our techniques 
in speeding-up query processing. 
\end{abstract}

\maketitle

\section{Introduction}
\label{sec:introduction}

Bayesian networks are probabilistic graphical models 
that represent a set of variables and their conditional dependencies
via a directed acyclic graph. 
They are powerful models for answering probabilistic queries over variables in the data, 
and making predictions about the likelihood of subsets of variables when other variables are observed. 
Inference in Bayesian networks is a fundamental task with applications in a variety of domains, including  machine learning~\citep{bishop2006pattern} and probabilistic database management~\citep{deshpande2009graphical}.
However, the problem of exact inference in Bayesian networks is \NPhard~\citep{pearl2014probabilistic}.
This challenge has prompted a large volume of literature that aims to develop practical inference algorithms, 
both exact~\citep{chavira2005compiling, chavira2007compiling, dechter1999bucket, darwiche2003differential, lauritzen1988local, poon2011sum, zhang1994simple} and approximate~\citep{henrion1988propagating}. 


 
\begin{figure}[t]
	\begin{tabular}{c}
		\hspace{-6mm} \tikzset{
	treenode/.style = {shape=circle,  draw, align=center, top color=white},
	root/.style     = {treenode, font=\ttfamily\normalsize, bottom color=blue!20},
	clique/.style      = {treenode, font=\ttfamily\normalsize,  bottom color=blue!20},
	separator/.style    = {rectangle,draw,font=\ttfamily\normalsize,  top color=white!20,  bottom color=white!20}
}
\begin{tikzpicture}[
scale = 0.7, 
every node/.style={scale=0.01},
grow=right,
sibling distance=150mm,
level distance = 200mm,
every node/.style = {shape=rectangle, rounded corners,
	draw, align=center,
	top color=white},
every node/.style  = {font=\normalsize},
decoration={
	markings,
	mark=at position 1 with {\arrow[scale=2,black]{stealth}};
},
sloped]

    \node[shape=circle,draw, align=center,top color=white, font=\ttfamily\normalsize, bottom color=blue!20] (l) at (1, -1){l};
    \node[shape=circle,draw, align=center,top color=white, font=\ttfamily\normalsize, bottom color=blue!20] (i) at 
    (-0.6,0) {i};
    \node[shape=circle,draw, align=center,top color=white, font=\ttfamily\normalsize, bottom color=blue!20] (g) at (1,1) {g};
    \node[shape=circle,draw, align=center,top color=white, font=\ttfamily\normalsize, bottom color=blue!20] (h) at (2.8,1) {h};
    \node[shape=circle,draw, align=center,top color=white, font=\ttfamily\normalsize, bottom color=blue!20] (e) at (2.8,-1) {e};
    \node[shape=circle,draw, align=center,top color=white, font=\ttfamily\normalsize, bottom color=blue!20] (f) at (4.6,-1) {f};
    \node[shape=circle,draw, align=center,top color=white, font=\ttfamily\normalsize, bottom color=blue!20] (d) at (6.4,-1) {d};
    \node[shape=circle,draw, align=center,top color=white, font=\ttfamily\normalsize, bottom color=blue!20] (c) at (4.6,1) {c};
    \node[shape=circle,draw, align=center,top color=white, font=\ttfamily\normalsize, bottom color=blue!20] (b) at (6.4,1) {b};
    \node[shape=circle,draw, align=center,top color=white, font=\ttfamily\normalsize, bottom color=blue!20] (a) at (8,0) {a};

    \draw[-{Latex[length=1.5mm,width=1.8mm]}](i) edge node[right] {} (g);
    \draw[-{Latex[length=1.5mm,width=1.8mm]}](b) edge node[right] {} (a);
    \draw[-{Latex[length=1.5mm,width=1.8mm]}](a) edge node[right] {} (d);
    \draw[-{Latex[length=1.5mm,width=1.8mm]}](l) edge node[right] {} (g);
    \draw[-{Latex[length=1.5mm,width=1.8mm]}](i) edge node[right] {} (l);
    \draw[-{Latex[length=1.5mm,width=1.8mm]}](g) edge node[right] {} (e);
    \draw[-{Latex[length=1.5mm,width=1.8mm]}](g) edge node[right] {} (h);
    \draw[-{Latex[length=1.5mm,width=1.8mm]}](h) edge node[right] {} (e); 
    \draw[-{Latex[length=1.5mm,width=1.8mm]}](e) edge node[right] {} (f); 
    \draw[-{Latex[length=1.5mm,width=1.8mm]}](e) edge node[right] {} (c); 
    \draw[-{Latex[length=1.5mm,width=1.8mm]}](c) edge node[right] {} (b); 
    \draw[-{Latex[length=1.5mm,width=1.8mm]}](b) edge node[right] {} (d); 
\end{tikzpicture}  \\
		\textbf{(a)} 
	      \vspace*{0.15cm}\\
		 \hspace{-6mm} \tikzset{
  treenode/.style = {shape=circle,
                     draw, align=center,
                     top color=white},
  basel/.style  = {rectangle,draw, font=\ttfamily\normalsize,fill=red!20}        
  root/.style     = {treenode, font=\ttfamily\normalsize, bottom color=blue!20},
  clique/.style      = {treenode, font=\ttfamily\normalsize,  bottom color=blue!20},
  query_clique/.style      = {treenode, font=\ttfamily\normalsize,  bottom color=red!20},
  separator/.style    = {rectangle,draw,font=\ttfamily\normalsize,  bottom color=white!20}
}


\begin{tikzpicture}[
scale = 0.45, 
every node/.style={scale=0.1},
grow = left,
sibling distance=9em,
level distance = 7em,
  every node/.style = {shape=rectangle,
    draw, align=center,
    top color=white}]

  \node [clique]{bc}
    child { node [separator]  {b} 
        child { node [clique] (q_1) {abd} }}
    child { node [separator] {c} 
     child { node [clique] {ce} 
        child { node [separator] {e}
        child { node [clique] {ef} }}
        child { node [separator] {e} 
        child { node  [query_clique] {egh} 
        child { node [separator] {g} 
        child { node [clique] {gil}}}}}}};

{\tiny 9}\end{tikzpicture}

		\textbf{(b)} 
	\end{tabular}
\caption{
	 \label{fig:diagram} Example: \textbf{(a)} a simple Bayesian network, and \textbf{(b)} corresponding junction tree. An in-clique query $\query = \{{\sf g,h}\}$ can be answered via marginalization from the joint probability distribution associated with the clique node in red.}\vspace*{-0.45cm}
\end{figure}
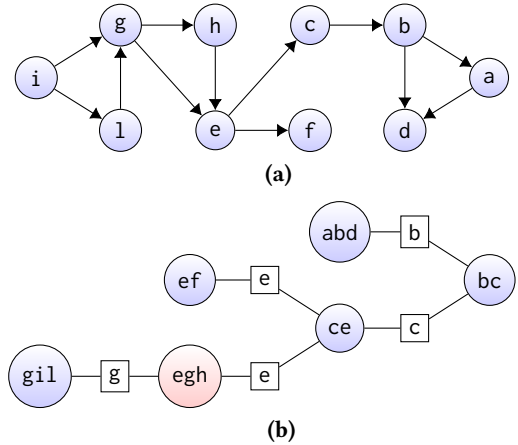


A state-of-the-art method for exact inference in Bayes\-ian networks
is the junction-tree algorithm~\citep{lauritzen1988local}, 
which enables simultaneous execution of a large class of inference queries. 
%
Specifically, the main idea behind the junction-tree algorithm is to convert the Bayesian network into a tree, 
called junction tree,  
and pre\-compute a collection of 
joint probability distributions for selected subsets of variables, 
referred to as \emph{cliques} of the tree.
This pre\-computation allows to answer inference queries 
that involve variables captured by a clique, as demonstrated in Figure~\ref{fig:diagram}. 
However, it does not allow for direct answer of queries that are not captured by a single clique.
Such \emph{out-of-clique} inference queries are instead answered via a message-passing algorithm
over the junction tree, which often demand a large amount of computations.
For instance, in a Bayesian network fully specified by approximately $10^3$ parameters, it can take more than a minute to complete the message passing procedure for some queries in our experiments. 	
To reduce the computational burden associated with processing out-of-clique queries, 
\citet{kanagal2009indexing} proposed to 
materialize suitable joint probability distributions, 
each corresponding to a partition of the junction tree.
In this paper, we follow the approach of \citet{kanagal2009indexing}, and take it one step further by considering a query workload. As in the work of Kanagal and Deshpande, we materialize joint probability distributions corresponding to partitions of the junction tree.
Departing from their approach, however, we choose the distributions to materialize in a \emph{workload-aware} manner, and not only based on the structure of the junction tree.
Leveraging the information that is available in a query workload
allows our method to be optimized for specific applications and be fine-tuned for settings of interest:  
in principle, for different query workloads, our method will choose different distributions to materialize, leading to optimal workload processing time. 
To the best of our knowledge, we are the first to address the task of identifying the \emph{optimal} 
\emph{workload-aware} materialization to speed-up the processing of arbitrary inference queries over junction trees.

More concretely, we make the following contributions.
\squishlist
\item We define the problem of \emph{workload-aware materialization} of junction trees 
as a novel optimization problem (Section~\ref{sec:setting}). 
The general problem of optimally materializing multiple shortcut potentials given a space budget (\mosp)
uses\revisioncol{,} as a subtask\revisioncol{,} the problem of materializing a single optimal shortcut potential (\sosp). 
Thus, we define and solve this simpler problem first as a special case, 
before discussing the general case. 
\item We prove that both problems are {\NPhard} (Section~\ref{sec:complexity}). 
\item 
We propose \ouralgorithm (Sections~\ref{sec:single_sp_alg}--\ref{section:peanut}), a method to optimize query processing based on junction trees through materialization. 
The {offline} component of \ouralgorithm utilizes pseudo-polynomial dynamic-programming algorithms to find the optimal materialization for a given workload and under a budget constraint. 
The {online} component of \ouralgorithm identifies the materialized distributions to exploit during the processing of a query, translating to reduced message-passing cost, and hence, improved query-answering time. 
\item
In addition to the pseudo-polynomial algorithms,
we provide  
an approximate strongly-polynomial algorithm (Section~\ref{sec:approximation_alg}) 
to improve the efficiency of the {offline} component of \ouralgorithm.
\item
Our optimal workload-aware materialization algorithm selects 
\emph{disjoint} shortcut potentials.
This constraint makes the problem tractable, but in practice it is not necessary.
Furthermore, it leads to solutions that do not fully utilize the available space budget.
To overcome this limitation, 
we also propose \ouralgorithmplus (Section~\ref{section:overlap}), which, 
based on a simple but effective greedy heuristic, permits overlapping shortcut potentials to be materialized,
and leads to better utilization of the available budget. 
\ouralgorithmplus is the best performing method in practice, and the method of~choice.

\item We evaluate \ouralgorithm and \ouralgorithmplus empirically over commonly-used benchmarks (Section~\ref{sec:experiments}) 
and show that enhancing a junction tree with the proposed materialization\revisioncol{,} for large enough budget\revisioncol{,} can save on average between $20\%$ and $40\%$ of query-processing cost, 
while typically offering an improvement of $2$ orders of magnitude over the previous work of \citet{kanagal2009indexing}. 
\squishend

\section{Related Work}
Recent years have witnessed an increasing amount of uncertain and correlated data 
generated in a variety of application areas. 
Uncertain data are collected into probabilistic databases, 
which can be efficiently represented as probabilistic graphical models \citep{deshpande2009graphical}.
Thus, the problem of querying large probabilistic databases 
can be formulated as an inference problem in probabilistic graphical models. 
\revisioncol{Although the junction tree algorithm and our materialization can be used for inference on different graphical models, such as Markov random fields, in this work we restrict our attention to Bayesian networks, which are also useful in a variety of tasks beyond querying probabilistic databases.}
For instance, \citet{getoor2001selectivity} 
resort to Bayesian networks to provide selectivity estimates 
for typical relational queries. 
\revisioncol{
		Lately, significant progress has been made towards exploring the connection between data management and graphical models. Khamis et al.~\cite{abo2016faq,khamis2017juggling} propose \emph{FAQ}, a unifying framework for a class of problems sharing the same algebraic structure, which encompasses the processing of the queries considered in our work. 
		Khamis et al.\ develop a dynamic programming algorithm for the general \emph{FAQ} problem and introduce a notion of \emph{FAQ}-width to characterize its complexity.
		In 2019, Schleich et al.~\cite{schleich2019layered} propose \emph{LMFAO}, an in-memory optimized execution engine for multiple queries over a relational database  
		modelled as a junction tree. 
		On the one hand, their work is orthogonal to ours since it does not consider materialization of additional probability distributions not directly captured by the junction tree structure. However, \emph{LMFAO} also identifies, for each query, the direction of the message passing which leads to the lowest computational cost. A similar optimization will be investigated for \ouralgorithm in future work. 
}
The conceptually simplest algorithm for exact inference over Bayesian networks
is variable elimination \citep{zhang1994simple, zhang1996exploiting}.
In our recent work, we have developed workload-aware materialization techniques for the variable-elimination inference method \cite{aslay2021workload}. 
Execution of variable elimination typically involves the computation of marginal-distribution tables, 
which have to be re-computed every time inference is performed.
The junction-tree algorithm \citep{jensen1996introduction, lauritzen1988local}, closely intertwined with variable elimination,
attempts to turn this observation into its advantage by pre\-computing and materializing 
several distributions for different subsets of variables.
The advantage of such pre\-computation is that, 
if an inference query involves only variables within a materialized distribution, as it is the case for all single-variable queries,
then the query can be answered directly via marginalization from that distribution.
It should be noted that, although 
 the junction-tree algorithm allows to perform inference on arbitrary Bayesian networks, 
if no restriction is posed to the tree structure, 
inference may become infeasible. 
In particular, the feasibility of the method depends on the junction-tree \emph{treewidth}, 
which is defined as the maximum number of variables in a materialized distribution minus 1. 
For inference queries that cannot be directly answered from the tree through marginalization, 
a message-passing algorithm needs to be performed, which may be extremely computationally expensive. 
To alleviate this issue, \citet{kanagal2009indexing} 
propose a disk-based hierarchical index structure for the junction tree. 
They also find that\revisioncol{,} by pre\-computing and materializing additional joint probability distributions\revisioncol{,}
it is possible to prune a considerable amount of computations at query time. 
Our approach also relies on extending the materialization of the junction tree, 
but in a workload-aware manner. 

\ReviewOnly{\revisioncol{Finally, note that} o}\FullOnly{O}ur approach is based on a dynamic-programming optimization framework. 
The general idea of framing the selection of a materialization as an optimization problem to be tackled by means of algorithmic techniques was first introduced by \citet{chaudhuri1995optimizing}.

\begin{figure*}[t]
	\begin{tabular}{ccc}
		 \hspace{-10mm}  \tikzset{
  treenode/.style = {shape=circle, 
                     draw, align=center,
                     top color=white},
  root/.style     = {treenode, font=\ttfamily\small, bottom color=blue!20},
  clique/.style      = {treenode, font=\ttfamily\small,  bottom color=blue!20},
   query_clique/.style      = {treenode, font=\ttfamily\small,  bottom color=red!20},
  separator/.style    = {rectangle,draw,font=\ttfamily\small,  bottom color=white!20}, 
  strom/.style={line width=1pt, shorten >=2pt,shorten <=2pt},
         gueterstrom/.style={strom,->},
         geldstrom/.style={strom,->}
}


\begin{tikzpicture}[
scale = 0.56, 
grow = left,
sibling distance=6.6em,
level distance = 5.9em,
  every node/.style = {shape=rectangle,
    draw, align=center,
    top color=white}]
    \node (e) [query_clique]{bc}
    child { node [separator] {b} 
        child { node [clique] {abd} }}
    child { node [separator] {c} 
     child { node (d) [clique] {ce} 
        child { node [separator] {e}
        child { node (c) [query_clique] {ef} }}
        child { node [separator] {e} 
        child { node (b) [clique] {egh} 
        child { node [separator] {g} 
        child { node (a) [query_clique] {gil}}}}}}}; 
        
    \draw[-{Stealth[length=1.8mm,width=2.2mm]}] (a) edge[bend right=45,font=\ttfamily\footnotesize	] node[below] {$m_1(g,i)$} (b);
    \draw[-{Stealth[length=1.8mm,width=2.2mm]}] (c) edge[bend left=40,font=\ttfamily\footnotesize	] node[above] {$m_3(e,f)$} (d);
    \draw[-{Stealth[length=1.8mm,width=2.2mm]}] (b) edge[bend right=65,font=\ttfamily\footnotesize	] node[below] {$m_2(e,i)$} (d);
    \draw[-{Stealth[length=1.8mm,width=2.2mm]}] (d) edge[bend right=70,font=\ttfamily\footnotesize	] node[below] {$m_4(c,f,i)$} (e);
   
\end{tikzpicture}

		  \hspace{-12mm} \tikzset{
  treenode/.style = {shape=circle,
                     draw, align=center,
                     top color=white},
  root/.style     = {treenode, font=\ttfamily\small, bottom color=blue!20},
  clique/.style      = {treenode, font=\ttfamily\small,  bottom color=blue!20},
  sp/.style     = {treenode, font=\ttfamily\small, bottom color=black!20},
  separatorsp/.style     = {rectangle,draw, font=\ttfamily\small, bottom color=black!20},
  query_clique/.style      = {treenode, font=\ttfamily\small,  bottom color=red!20},
  separator/.style    = {rectangle,draw,font=\ttfamily\small,  bottom color=white!20}, 
  strom/.style={line width=1pt, shorten >=2pt,shorten <=2pt},
         gueterstrom/.style={strom,->},
         geldstrom/.style={strom,->}
}


\begin{tikzpicture}[
scale = 0.56, 
grow = left,
sibling distance=6.6em,
level distance = 5.9em,
  every node/.style = {shape=rectangle,
    draw, align=center,
    top color=white}]
    \node (e) [query_clique]{bc}
    child { node [separator] {b} 
        child { node [clique] {abd} }}
    child { node [separator] {c} 
     child { node (d) [sp] {ce} 
        child { node [separator] {e}
        child { node (c) [query_clique] {ef} }}
        child { node [separatorsp] {e} 
        child { node (b) [sp] {egh} 
        child { node [separator] {g} 
        child { node (a) [query_clique] {gil}}}}}}}; 
        
   
\end{tikzpicture}

		   \hspace{-2mm} 
\tikzset{
  treenode/.style = {shape=circle, 
                     draw, align=center,
                     top color=white},
  root/.style     = {treenode, font=\ttfamily\small, bottom color=blue!20},
  sp/.style     = {treenode, font=\ttfamily\small, bottom color=black!20},
  clique/.style      = {treenode, font=\ttfamily\small,  bottom color=blue!20},
  query_clique/.style      = {treenode, font=\ttfamily\small,  bottom color=red!20},
  separator/.style    = {rectangle,draw,font=\ttfamily\small,  bottom color=white!20}, 
  strom/.style={line width=1pt, shorten >=2pt,shorten <=2pt},
 gueterstrom/.style={strom,->},
 geldstrom/.style={strom,->}
}


\begin{tikzpicture}[
scale = 0.56, 
grow = left,
sibling distance=6.6em,
level distance = 5.9em,
  every node/.style = {shape=rectangle,
    draw, align=center,
    top color=white}]]
  \node (e) [query_clique]{bc}
    child { node [separator] {b} 
        child { node [clique] {abd} }}
    child { node [separator] {c} 
    child { node (s) [sp] {ecg} 
    child { node [separator] {e} 
    child { node (d) [query_clique] {ef} }}
    child { node [separator] {g} 
    child { node (a) [query_clique] {gil} }}}};

    \draw[-{Stealth[length=1.8mm,width=2.2mm]}] (a) edge[bend right=65,font=\ttfamily\footnotesize] node[below] {$m_1(g,i)$} (s);
    \draw[-{Stealth[length=1.8mm,width=2.2mm]}] (d) edge[bend left=45,font=\ttfamily\footnotesize] node[above] {$m_2(e,f)$} (s);
    \draw[-{Stealth[length=1.8mm,width=2.2mm]}] (s) edge[bend right=70,font=\ttfamily\footnotesize] node[below] {$m_4(c,f,i)$} (e);
 
\end{tikzpicture}

		\textbf{(a)} &  \textbf{(b)} &  \textbf{(c)} 
	\end{tabular}
	\caption{\label{fig:diagram2} 
	Example: message passing to answer out-of-clique query $q = \{{\sf b,i,f}\}$, 
	In {(a)}, no materialization is used. 
	In {(b)}, the subtree \treespot associated with a shortcut potential \spot is 
	coloured in grey. 
	In {(c)}, materialization of \spot is used.  }
\end{figure*}
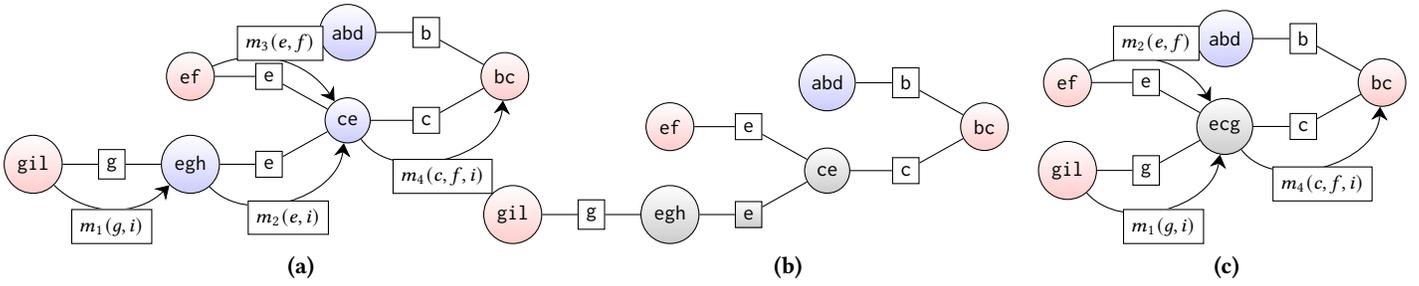

\section{Problem formulation}
\label{sec:setting}

We develop optimal 
workload-aware materialization for the junction-tree algorithm, 
which is used for exact inference on Bayesian networks.

In this section, we formally define the optimization problems we study.
To make the paper self-contained and introduce the necessary notation, 
we begin with a brief overview of the junction-tree algorithm in Section~\ref{sec:background}.
For a more elaborate presentation of the junction-tree algorithm, 
we refer the reader to the classic papers of \citet{jensen1996introduction}
and \citet{lauritzen1988local}.
After describing the junction-tree algorithm, 
we proceed with the formal definition of the optimization problems in Section~\ref{sec:problem}

\subsection{Background on junction trees}
\label{sec:background}

The junction-tree algorithm performs exact inference on Bayesian networks based on the \emph{junction tree} data structure.  
%
In what follows, we provide a brief description of  Bayesian networks, 
the junction-tree structure, and finally the junction-tree inference algorithm.

\spara{Bayesian networks.} 
Bayesian networks are probabilistic graphical models that represent the joint distribution of a set of variables.
More formally, a Bayesian network \bayesianNet is a directed acyclic graph, where nodes represent variables and directed edges 
represent dependencies of variables on their parents. 
%
%
A Bayesian network allows to express the joint probability of all variables as a product of a finite number of factors, so that each factor corresponds to the conditional probability of a variable given the value of its parents.

In our presentation\revisioncol{,} we consider discrete Bayesian networks, 
i.e., all variables are assumed categorical. 
In practice, numerical variables can be handled as categorical via discretization 
into categorical intervals of sufficient precision. 

\spara{Junction trees.} 
Given a Bayesian network \bayesianNet over a set of variables~\variables, a junction tree $\tree=(\vertices,\edges)$ is built from the Bayesian network in five steps: 
(1) moralization, (2) triangulation, (3) clique-graph formation, (4) junction-tree extraction, and 
(5) junction-tree calibration. 
The result of this procedure is the \emph{junction tree} of the Bayesian network \bayesianNet. The nodes of \tree are referred to as \emph{clique nodes} and the edges as \emph{separators}. 
%
Each \emph{clique node} is assigned a set of Bayesian network factors, and their product is used to initialize the \emph{clique potential}, which is a table mapping each configuration of the associated variables to a non-negative real value. 
%
%
In what follows, we will refer to the variables associated with a clique node $u$ or a separator $(u,v)$ 
as the \emph{scope} of the clique node or the separator, and 
we will denote them with $\variables_u$ and $\variables_{u,v}$, respectively.
In the last step of the junction tree construction, 
the junction tree \tree is \emph{calibrated} via the \emph{Hugin} algorithm \citep{cowell2006probabilistic}. 
In short, this means that the clique potentials of the junction tree are normalized to coincide with the joint distributions of the related variables. 
%
By construction, 
a junction tree satisfies the \emph{running-intersection property}, 
which states that
if $a \in \variables_u$ and $a \in \variables_v$ for two nodes $u$ and $v$ in \tree, 
then $a \in \variables_w$ for all nodes $w$ that lie on the path from $u$ to $v$,
which we denote by $\treepath(u,v)$. 
This property is crucial for inference (described later in this section).
Moreover, one node $r\in \nodes$ of the junction tree \tree 
is selected and marked as \emph{pivot} (root), and is the node towards which all messages are sent. 
The choice of the pivot node affects the order and the size of the set of messages sent to process an inference query.  
Selecting a pivot node that is optimal for all possible queries is not possible. 
For our purposes, we consider an arbitrary node to be the pivot. For future work, however, it would be interesting to study the problem of finding the materialization that is optimal across all pivot selections. 
%
The junction-tree algorithm is used to answer inference queries on the Bayesian network 
using the junction-tree structure defined above.
An inference query $\query\subseteq\variables$ is defined by a subset of variables in \variables and asks for the joint or conditional probability distribution of the variables appearing in \query. 
In what follows, we focus only on joint-probability queries, as conditional probabilities can be obtained from the answer to joint-probability queries. 
Inference queries can be separated into two types, \emph{in-clique} and \emph{out-of-clique} queries.
In-clique queries correspond to cases where all the variables of the query \query 
are associated with the same clique node of the junction tree.
The junction-tree algorithm answers in-clique queries by directly marginalizing 
the joint probability of the variables corresponding to the clique node.
For example, in Figure~\ref{fig:diagram}, the query $\query = \{{\sf g,h}\}$ is answered by marginalizing the calibrated potential of the clique node ${\sf egh}$.
Out-of-clique queries correspond to cases where not all variables are associated with the same clique node.
For out-of-clique queries, a \emph{message-passing} procedure is invoked
over a Steiner tree $\steinertree$, the smallest tree that connects all the clique nodes containing the query variables.
%
When performing message passing, we always consider the direction of the messages induced by the pivot \jtroot of~\tree. As a consequence, the pivot (root) node $\jtroot_\query$ of $\steinertree$ (i.e., the clique node towards which all messages are sent) is defined to be  the clique node closest to the pivot \jtroot of the junction tree and therefore, $\jtroot_\query = \jtroot$ if \jtroot is contained in the Steiner tree $\steinertree$.  
%
%
Messages are sent towards the pivot $\jtroot_\query$ in discrete steps from one node to the next, starting from the leaves of $\steinertree$. 
%
The message sent from a given clique node $u$ to the next is the product of its own potential with the messages it receives from other nodes.
Once the pivot $\jtroot_\query$ has received messages from all its neighbours in $\steinertree$,
it computes the answer to the query~\query by summing out all the non-query variables.%
\footnote{%
In a practical implementation, an equivalent but more efficient approach is to sum-out variables as soon as possible --- i.e., marginalization is performed for each message before it is sent to the next clique node, to compute the marginal distribution only of the variables belonging to the query and the separator over which the message is sent, since all the other variables are redundant and would later be summed out.}
An example of the described message-passing procedure is given in Figure~\ref{fig:diagram2}(a). The query variables $\query = \{{\sf b,i,f}\}$ are not all included in the same clique. Thus, the Steiner tree $\steinertree$ that contains the cliques  having all the query variables within their scopes (highlighted in red), is extracted. 
The pivot of both the junction tree and the Steiner tree is ${\sf bc}$\revisioncol{,}
and hence the leaves ${\sf gil}$ and ${\sf ef}$ begin the message passing. 
After message $m_4({\sf c,f,i})$ is sent to the pivot, 
we have a potential corresponding to the joint probability distribution of variables $\{{\sf b,c,f,i}\}$ 
from which we sum out variable ${\sf c}$ to obtain the query answer, 
namely the joint probability distribution of variables ${\sf b,i}$ and ${\sf f}$.  





\subsection{Problem statement} 
\label{sec:problem}

At a high level, our objective is to minimize the expected running time for queries 
drawn from the same distribution as of queries in a given workload.
The approach we take is to materialize \emph{shortcut potentials}, i.e., 
additional data structures introduced by \citet{kanagal2009indexing}. 
Depending on the query, shortcut potentials may be utilized by the junction-tree algorithm to skip part 
of the computations.
Naturally, this approach introduces a trade-off between the benefit in reduced running time, on one hand, 
and the cost of storing the materialized data structures, on the other.
In what follows, we define shortcut potentials, we quantify the cost and benefit of their materialization, and 
finally, we formally define the optimization problems we consider.

\spara{Shortcut potentials.} 
To speed-up out-of-clique query processing in junction trees, \citet{kanagal2009indexing} propose to store certain joint probability distributions,  
the so-called \emph{shortcut potentials}, 
which allow to substantially reduce the amount of computations. 
%
A shortcut potential \spot is identified by a subtree $\treespot \subseteq \tree$ of the junction tree with vertices \spotvertices{S} and is defined as the joint distribution of all variables in the scope of those separators that ``cut'' the subtree~\treespot from the junction tree \tree. 
This set of separators is henceforth denoted as \cutS. 
Moreover, as for any other subtree, the pivot (root)~$\jtroot_S$ of \treespot is defined to be the clique node closest to the pivot~\jtroot of the junction tree \tree --- 
and therefore, $\jtroot_S = \jtroot$ if \jtroot is contained in~\spotvertices{S}.
For example, in Figure~\ref{fig:diagram2}(b), the highlighted shortcut potential is identified by the subtree consisting of clique nodes \texttt{egh} and \texttt{ce}; and defined as the joint distribution of the separator variables \texttt{c}, \texttt{e}, \texttt{g}.
Shortcut potentials allow to reduce the size of the Steiner tree $\steinertree$ extracted to compute the answer of \query, thus, decreasing the associated message-passing cost. 
For example, consider the processing of query $q = \{\texttt{b},\texttt{i},\texttt{f}\}$ on the junction tree in Figure~\ref{fig:diagram2}(a). By utilizing the shortcut potential \spot that corresponds to the highlighted subtree \treespot in Figure~\ref{fig:diagram2}(b), we perform message passing on a smaller but valid Steiner tree and avoid the computation of $m_2(\texttt{e},\texttt{i})$, as shown in Figure~\ref{fig:diagram2}(c).

\spara{Cost of a shortcut potential.}
The cost or weight of a shortcut potential captures the amount of storage space it takes when materialized.  
Formally, we define the cost as the size $\mu(S)$  of the probability-distribution table \spot. This quantity is given by the product of the cardinalities of all the variables in the shortcut-potential scope $X_S$.
The cost is upper bounded by the product of the cardinalities of the variables in the separators $(u,v) \in cut(S)$. 

\spara{Benefit of a shortcut potential.}
Intuitively, the notion of benefit is meant to capture the expected savings in running time achieved by the materialization of a shortcut potential.
For a shortcut potential to have a positive benefit, it has to be \emph{useful}  
(i.e., to be possible to utilize it by the junction-tree algorithm) 
for some queries in the query workload.
Therefore, before we quantify the notion of \emph{benefit}, let us define the notion of \emph{usefulness}. 
In plain words, a shortcut potential \spot is useful for a query when replacing the nodes of \spotvertices{\spot} with the shortcut potential \spot allows the junction-tree algorithm to compute the same result for query \query but more efficiently.
The precise conditions under which this is possible depend on whether or not the shortcut potential subtree includes the pivot of the Steiner tree, i.e.,  $\jtroot_\query \in \spotvertices{S}$. 
On the one hand, if $\jtroot_\query \in \spotvertices{S}$,we need at least a path from the leaves of \steinertree to \jtroot that passes through a separator in \cutS.
On the other hand, if $\jtroot_\query \notin \spotvertices{S}$, we need at least two separators of \cutS in the same path between any leaf of \steinertree and \jtroot. In both cases, moreover, no query variables should be left out of \steinertree when $V(S)$ is replaced with \spot. 
Formally, we have the following definition.
\begin{definition}[Usefulness] 
\label{definition:usefulness}
Consider a query \query with associated Steiner tree $\steinertree$ and a shortcut potential \spot. 
We say that \spot is \emph{useful} for \query if \emph{one} of the following conditions holds: 
\begin{enumerate}[label=(\roman*)]
\item  
$\jtroot_\query \notin \spotvertices{S}$ and there are at least two separators 
$\{(u,v), (w,z)\} \in \cutS$ in the path between any leaf of $\steinertree$ and $\jtroot_\query$; or
\item  
$\jtroot_\query \in \spotvertices{S}$ and there is at least one separator $(u,v) \in \cutS$ in the path between any leaf of $\steinertree$ and $\jtroot_\query$;
\end{enumerate}
and in addition replacing $V(S)$ in $\steinertree$ with \spot results in a Steiner tree that still contains all the query variables while leading to a lower message-passing cost.
We define: 
\[ \isuseful{\spot}{\query} 
&= 
\left  \{
\begin{array}{ll}
1 & \text{if \spot is useful for \query}, \\
0 & \text{otherwise}.
\end{array}
\right.
\]
\end{definition}

To quantify the benefit of materializing a shortcut potential\revisioncol{,}
we propose a definition of \emph{shortcut-potential benefit}, 
which relies on the definition of usefulness. 
Intuitively, the benefit of a shortcut potential for a query 
should reflect the amount of message-passing operations in $T_S$ 
that are now avoided thanks to the shortcut potential $S$. 
Such amount can be computed based on the size of the clique potentials and the cardinality of the query variables. This motivates the following definition. 
\begin{definition}
The benefit of a shortcut potential \spot with respect to a query \query is defined as
\begin{align*}
B(S, \query) =  \isuseful{\spot}{\query}\
					\!\sum_{v \in V(S)} \mu(v) 
					\!\!\prod_{w \in \variables_{T_v} \cap \query}\!\! \alpha(w),
\end{align*}
\end{definition}

In the above definition, $\mu(v)$ denotes the size of the clique potential of a clique node $v$,
$\alpha(w)$ denotes the cardinality of a variable~$w$, and 
$X_{T_v}$ is union of the scopes of all the cliques in the subtree rooted at clique $v$.
The benefit of a shortcut potential \spot with respect to a query log~\querylog should take into account not only the benefit with respect to individual queries\revisioncol{,} but also query probabilities to 
guarantee that queries that are most likely to occur will be assigned a higher weight 
in the computation of benefit. 
\begin{definition}[Benefit] 
The benefit of a shortcut potential \spot with respect to a query log \querylog is defined as 
\begin{equation*}
	B(S, \querylog) = \sum_{ \query \in \querylog }  \netprob{_\querylog}{\query} B(S, \query).
\end{equation*}
\label{def:benefit}
\end{definition}
Here, $\netprob{_\querylog}{\query}$ is the probability of query \query being drawn from the query log \querylog, which in practice can be estimated from the frequency $\freq(\query)$ of the available queries.

\spara{Materializing a single optimal shortcut potential.} 
We build on the work of \citet{kanagal2009indexing} 
to further improve the efficiency of query processing using junction trees. 
We use a simple idea: take the anticipated query workload into account. 
One may have a precise idea of the anticipated query workload --- 
for example when a large historical query log \querylog is available, 
from which relative frequencies of different queries are known, 
possibly indicating that some queries are far more likely than others. 
Alternatively, there may be uncertainty about the anticipated workload. 
Even in the extreme case in which no historical query log is available, 
one may optimize materialization for an ``uninformative'' (e.g., uniform) distribution of queries. 
In all cases, the available information about possible queries 
can guide the selection of the shortcut potentials to materialize. 
Indeed, it is expected that some regions of the junction tree will be relevant for far more queries than others --- or associated with a higher volume of computation 
(i.e., larger messages in the junction-tree algorithm). 
%
Previous work, however, is agnostic to queries. 
Therefore\revisioncol{,} we introduce here the problem of choosing\revisioncol{,} for a given $\jtroot_S$\revisioncol{,} a shortcut potential of optimal benefit with respect to a given query workload, under a user-specified budget constraint. Formally: 

\begin{problem}[Single Optimal Shortcut Potential (\sosp)]
\label{problem:single_sp}
Consider a junction tree $\tree=( \vertices,\edges )$, with pivot $r$,
a query log \querylog,
and space budget~\budget.
We are asked to find a single shortcut potential \spot, with pivot $r_S \in \vertices$,
such that the benefit  
$\benefit{S,\querylog}$
is maximized, subject to the constraint
$\mu(\spot)\le \budget$.
\end{problem}

In simple words, Problem~\ref{problem:single_sp} seeks the shortcut potential 
that would avoid the maximum volume of message-passing operations for query log \querylog, 
under the constraint that it can be materialized within a given space budget.%
\footnote{We considered also an alternative objective function that subtracts $\mu(S)$.
from the benefit $B(S, \query)$. 
This small correction in the objective accounts 
for the number of message-passing operations performed by the junction-tree algorithm 
with materialization.
However: 
($i$)~this would make the optimization problems less tractable; and 
($ii$)~the cost $\mu(S)$ is already taken into account in the budget constraint. 
For these reasons, we opted to optimize the benefit as defined in the text. 
Nevertheless, in our experiments, we do compare the exact computational cost of algorithms. 
}
The solution to Problem~\ref{problem:single_sp} is a shortcut potential rooted at $r_S$.
\ReviewOnly{\revisioncol{Clearly, our optimization problem relies on the simple assumption that the workload is stationary.}}

\spara{Materializing multiple optimal shortcut potentials.} 
In practice, it would be desirable to materialize not just one, 
but any number of shortcut potentials within a given space budget.
We consider this problem while restricting our attention to \emph{non-overlapping} shortcut potentials, 
i.e., node-disjoint subtrees \treespot of \tree.
For this problem, as shown in the next section, 
we can show that a dynamic-programming algorithm provides the exact solution. 
More formally, we consider the following problem.

\begin{problem}[Multiple Optimal Shortcut Potentials (\mosp)] 
\label{problem:k_sp}
Consider a junction tree $\tree=(\vertices,\edges)$ with pivot $r$, 
a query log \querylog, 
and space budget~\budget.
We are asked to 
find a set of node-disjoint shortcut potentials $\spotset = \{\spot_1, \spot_2, \ldots, \spot_{\cardbudget}\}$, 
for some $\cardbudget \geq 1$, 
such that the total~benefit
\begin{equation*}
\sum_{i=1}^\cardbudget  \benefit{S_i,\querylog} 
\label{eq:sosp}
\end{equation*}
is maximized, subject to the constraints~
$\sum_{i=1}^\cardbudget \mu(\spot_i) \le \budget$  ~and \\
$V(S_i) \cap V(S_j) = \emptyset$ for all  $i \neq j$ with $i,j = 1,\ldots,\cardbudget$.
\end{problem}

Note that the previous work of \citet{kanagal2009indexing} 
considers materializing a hierarchical set of shortcut potentials such that some shortcut potential subtrees are nested into others. 
\FullOnly{While our selection of shortcut potentials is more restricted than theirs due to the exclusion of overlaps, 
the workload-aware nature of our approach is sufficient to lead to superior empirical performance.}
\ReviewOnly{\revisioncol{		
While the shortcut potentials we select are optimal only under the assumption of exclusion of overlaps and stationarity of the query workload distribution, the workload-aware nature of our approach is sufficient to lead to superior empirical performance.
}}

\section{Complexity and algorithms}
\label{sec:algorithms}

In this section\ReviewOnly{\revisioncol{,}} we first prove that the problems \sosp and \mosp, 
defined in the previous section, are both \NP-hard. 
We then present \ouralgorithm, a method to optimize inference based on junction trees. 
\ouralgorithm is composed of an \emph{offline} and an \emph{online} component. 

The {offline} component tackles the \sosp and \mosp problems; the corresponding algorithms are described in Sections~\ref{sec:single_sp_alg} and~\ref{sec:k_sp_alg}, respectively. 
Both algorithms are based on dynamic programming, 
they run in \emph{pseudo-polynomial time}, and provide an optimal solution. 
Hereafter, it is convenient to consider \tree as rooted at the pivot $r \in V$. 
The algorithm for the \mosp problem, named \budp, follows a standard bottom-up traversing scheme, while  
the algorithm for the \sosp problem, named \lrdp, visits the nodes of the tree in a ``left-to-right'' order.
Both algorithms require two dynamic-programming passes, 
one to compute the benefit of the optimal solution, 
and one to obtain the set of separators that are part of the optimal solution.  
Additionally, in Section~\ref{sec:approximation_alg}\ReviewOnly{\revisioncol{,}}
we present a \emph{strongly-polynomial algorithm}
that provides a trade-off between execution time and solution quality.
The {online} component of \ouralgorithm follows the
standard junction-tree-based inference approach
and it is described in Section~\ref{section:peanut}.
Finally, in Section~\ref{section:overlap} we describe \ouralgorithmplus, 
a practical extension of \ouralgorithm. 

\subsection{Problem complexity}
\label{sec:complexity}

We now establish the complexity of the problems defined in Section~\ref{sec:problem}.
We start with the problem of selecting a single optimal shortcut potential (\sosp).

\begin{theorem}
\label{theorem:sosp-hardness}
The problem of selecting a single optimal shortcut potential
{\sf (\sosp)} is \NP-hard. 
\end{theorem}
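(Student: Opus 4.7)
The plan is to prove NP-hardness of {\sf SOSP} by a polynomial reduction from the $0/1$ Knapsack problem (KP), a classical NP-hard problem. Given a KP instance with items $(w_i, v_i)_{i=1}^n$, capacity $W$, and target value $V$, I would construct in polynomial time a junction tree $T$ with pivot $r$, a query workload $\mathcal{Q}$, and a budget $K$ for the {\sf SOSP} instance, such that a shortcut potential of benefit $\geq V$ exists iff the KP instance admits a solution of value $\geq V$.

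For the construction, I would equip the tree with $n$ item gadgets, each realized as a short path $r - p_i - g_i$ attached to the pivot. The separator on edge $(r, p_i)$ is given trivial scope (contributing nothing to the cost), while the separator on edge $(p_i, g_i)$ is populated with variables whose cardinality product equals $2^{w_i}$. In the canonical subtree $V(S) = \{r\} \cup \{p_i : i \in I\}$, which excludes all $g_i$'s, the cut is exactly $\{(r, p_j) : j \notin I\} \cup \{(p_i, g_i) : i \in I\}$, giving $\mu(S) = \prod_{i \in I} 2^{w_i} = 2^{\sum_{i \in I} w_i}$. Setting $K = 2^W$ then aligns the multiplicative {\sf SOSP} cost constraint with the additive KP weight constraint via the logarithm. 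For the benefit, I would craft a query $q_i$ per item whose variables live in the cliques of $r$ and $g_i$, thus pinning its Steiner tree to the path $r - p_i - g_i$. Using the intricate conditions in Definition~\ref{definition:usefulness}, the usefulness indicator $\delta_S(q_i)$ can be made to activate exactly in the ``selected'' configuration $\{p_i \in V(S),\, g_i \notin V(S)\}$; tuning the clique scopes to make the per-query contribution equal $v_i$, summing across $\mathcal{Q}$ with suitable probabilities $\pi_\mathcal{Q}(q_i)$ yields $B(S, \mathcal{Q}) = \sum_{i \in I} v_i$, matching the KP objective exactly.

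The main obstacle is two-fold. (i) The cost $\mu(S)$ is multiplicative in separator cardinalities whereas KP's weight is additive; bridging this requires exponential cardinalities $2^{w_i}$, which I would encode compactly by placing $w_i$ binary-valued variables inside each $(p_i, g_i)$ separator, so that each separator's encoding is comparable to the binary encoding of $w_i$ in KP. (ii) The optimal {\sf SOSP} subtree must not degenerate into an unintended form, e.g., by including the $g_i$'s (which would shrink $cut(S)$ and nullify the cost gadget) or by selecting a subtree that omits $r$; this is controlled by ensuring, through the choice of query variable placement, that $\delta_S(q_i) = 0$ whenever $g_i \in V(S)$ or $p_i \notin V(S)$, so such configurations contribute zero benefit for query $q_i$ and are thus never strictly preferable. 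Once these gadgets and the polynomial mapping are verified, the bijection between feasible KP solutions and feasible {\sf SOSP} shortcut potentials of equal objective value establishes NP-hardness of {\sf SOSP}.
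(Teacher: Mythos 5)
Your reduction differs from the paper's---the paper proves Theorem~\ref{theorem:sosp-hardness} by reducing from the \emph{tree-knapsack} problem, whereas you reduce from plain $0$--$1$ knapsack via a star of item gadgets---and the cost side of your construction is sound: the multiplicative-to-additive bridge via separator cardinalities $2^{w_i}$ and budget $K=2^{W}$ is exactly the device the paper uses (with base $e$), and your observation that absorbing $g_i$ into $V(S)$ only forfeits the benefit of $q_i$, so canonical subtrees suffice, is correct. The gap is in the benefit gadget. By Definition~\ref{def:benefit} together with the per-query benefit formula, $B(S,q_j)=\delta_S(q_j)\sum_{v\in V(S)}\mu(v)\prod_{w\in X_{T_v}\cap q_j}\alpha(w)$ sums over \emph{all} cliques of $V(S)$, and for a clique $p_i$ whose subtree contains no variable of $q_j$ the product is an \emph{empty} product, i.e.\ it defaults to $1$, not $0$. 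Moreover $\mu(p_i)\ge 2^{w_i}$ is forced, since the variables of the separator $(p_i,g_i)$ must lie in the scope of clique $p_i$. Hence for a selection $I$ the query $q_j$ contributes not a quantity depending on item $j$ alone but $\pi_j\bigl(\mu(r)\rho_j+\mu(p_j)\sigma_j+\sum_{i\in I,\,i\ne j}\mu(p_i)\bigr)$, and summing over $j\in I$ leaves a quadratic cross-term of order $(|I|-1)\sum_{i\in I}\mu(p_i)$ that no choice of query probabilities or clique scopes can cancel. So the claimed identity $B(S,\mathcal{Q})=\sum_{i\in I}v_i$ fails, and with it the equivalence of objective values on which the reduction rests.

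The one-query-per-item design does work in the paper's proof of Theorem~\ref{theorem:mosp-hardness}, precisely because there each item corresponds to a \emph{separate} shortcut potential and each query is useful for exactly one of them, so no cross-terms arise; in \sosp a single connected subtree must collect all of the benefit, and the summation over $V(S)$ couples the items. The natural repair is the one the paper takes for \sosp: use a \emph{single} query of probability one, with cardinality-one query variables placed in added leaf cliques, so that $B(S,\mathcal{Q})=\sum_{v\in V(S)}\mu(v)$ becomes additive over the selected cliques, and encode the profit of item $i$ into $\mu(p_i)$ through the residual cardinalities (the $\alphares$ trick). With that change your star construction yields a correct reduction from \kp; the paper instead reduces from tree-knapsack, whose rooted-subtree feasible set matches the space of \sosp solutions directly and so avoids the need for per-item usefulness gadgets altogether.
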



\begin{proof} 	
	We prove the hardness of the \sosp problem 
	by a reduction from the \emph{tree-knapsack problem} (\tkp), 
	which is known to be \NP-hard~\citep{johnson1983knapsacks}. 
	In the \tkp problem we are given a tree $R =(U,A)$ with root $\troot\in U$, 
	where each node $u \in U$ is associated with a weight $\tkpweight(u)$ and a profit $\tkpprofit(u)$,
	a knapsack capacity $\knapsack$, and a target profit $\targetprofit$. 
	The problem asks whether there exists a subtree $R' = (U', A')$ with root \troot, 
	such that $\sum_{u \in U'} \tkpweight(u) \leq \knapsack$ and 
	$\sum_{u \in U'} \tkpprofit(u) \geq \targetprofit$. 
	Similarly, in the decision version of the \sosp problem 
	we are given a tree $\jtree= (V,E)$ with root $\jtroot\in V$, 
	cost and benefit functions $\mu(S)$ and $B(S, \querylog)$ 
	associated  with each shortcut-potential subtree $T(S) = (V(S), E(S))$, 
	space budget \budget, and target benefit \targetbenefit.
	The cost function $\mu(S)$ is determined solely from the function $\mu(i,j)$ 
	associated with the separators (edges) in $\cutS$. 
	The benefit function instead depends both on the size $\mu(v)$ of the clique potentials 
	for all cliques $v \in V(S)$ as well as on \querylog. 
	Thus, both functions $\mu$ and~$B$ can be specified by encodings 
	that are polynomial with the size of the tree \jtree.
	The \sosp problem asks to determine whether there exists a sub\-tree $T(S)= (V(S), E(S))$ 
	with root \jtroot, 
	with cost $\mu(S) \leq \budget$ and benefit $B(S, \querylog) \ge\targetbenefit$.
	
	We show next how an instance of \tkp can be transformed into an instance of \sosp.
	First, given $R =(U,A)$ in \tkp, 
	we take the vertices of \jtree for \sosp to be $V =  U \cup W$, 
	where $W$ is a set of clique nodes consisting of a child for each leaf in $R$ 
	and of the parent of the root $\troot\in U$.
	The root \jtroot of \jtree is set to $\parent_\troot$, the parent of the root in~$R$.
	Notice that there is a properly defined shortcut potential 
	with sub\-tree $T(S) \subseteq \jtree$ for each $R' \subseteq R$,
	i.e., the two problems have the same space of candidate solutions. 
	Let us now discuss how to set separators $(i,\parent_i)$,
	with $\parent_i$ being the parent of $i$ in the \sosp instance. 
	Once the scope of separator $(i, \parent_i)$ is specified for all $i \in V \setminus \jtroot$, 
	all $\mu(S)$ values are fixed because $\mu(S)$ is, by definition, 
	equal to the product of the cardinalities of all variables that 
	are in the union of the scopes of the separators in~\cutS. 
	We assign to the scope $\variables_{(i,\parent_i)}$ one variable $x_v$ for each clique $v$ in the path $\jtpath(\parent_i, \jtroot)$ with cardinality $\alpha(x_v) = e^{\tkpweight(v)}$.  
	The union of all cliques belonging to paths $\jtpath(\parent_i, \jtroot)$ for each $(i, \parent_i) \in \cutS$  is precisely $\spotvertices{S}$. 
	Let the budget for \sosp problem instance be $\budget = e^\knapsack$. It follows that
	\begin{align*}
	\mu(\spot) = \prod_{x \in \variables_\spot} \alpha(x) = \prod_{v \in \spotvertices{\spot}} e^{\tkpweight(v)}= e^{\sum_{v \in \spotvertices{\spot} } \tkpweight(v)  }
	\end{align*}
	implying that
	\begin{align*}
	\mu(\spot) = e^{\sum_{v \in \spotvertices{\spot} } \tkpweight(v)  } \leq e^\knapsack  = \budget
	~~\text{  if and only if  }~
	\sum_{v \in V(S)} \tkpweight(v) \leq \knapsack,
	\end{align*}
	that is, the budget constraint in the \sosp instance is satisfied 
	if and only if the knapsack constraint in the \tkp instance is satisfied for the same subtree.

	Next, 
	we discuss how to map profits in \tkp to benefits in \sosp.
	As already discussed, the benefit function is determined 
	by \querylog and $\mu(v)$, for each node $v \in V$. 
	We consider a query log \querylog composed of a single query \query, 
	such that there is a query variable in each clique $d \in W$ and no query variables in the other cliques. 
	Query \query has probability $\netprob{_\querylog}{\query} = 1$ 
	and it is useful for subtrees of \jtree that include only nodes in $U$. 
	Furthermore, we assume that all query variables $w$ in the cliques $d \in W$ have cardinality $\alpha(w)= 1$.
	It can be easily verified that the benefit of any subtree $T(S)$ is given by $\sum_{v \in V(S)} \mu(v)$. 
	Notice that $\mu(v)$ can be expressed as a product of two terms. 
	One term is the product $\musep(v)$ of the cardinality of all variables 
	that are in separators incident to clique $v$. 
	This value is already fixed by the computation of the subtree costs $\mu(S)$ previously described. 
	The other term, still unspecified, is the product $\mures(v)$ of the cardinalities of the variables 
	that belong to clique $v$ but not to the separators incident to $v$. 
	Therefore, we can set 
	$\alphares(v) = \tkpprofit(v)/\alphasep(v)$. 
	This makes the benefit of a given subtree $T(S)$ equivalent to the profit of the same subtree in \tkp, 
	that is, $B(T(S), \querylog) = \sum_{ v \in V(S) } \tkpprofit(v)$.  
	
	The \sosp problem instance can be clearly constructed in polynomial time. 
	A YES instance for \tkp implies 
	a YES instance for \sosp, 
	and the opposite is also true. 
	This concludes the proof. 
\end{proof}

A similar complexity result holds for the more general problem 
of finding multiple node-disjoint optimal shortcut potentials (\mosp).

\begin{theorem}
\label{theorem:mosp-hardness}
The problem of finding multiple node-disjoint optimal shortcut potentials
{\sf (\mosp)} is \NP-hard. 
\end{theorem}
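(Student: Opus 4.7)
The plan is to prove Theorem~\ref{theorem:mosp-hardness} by reduction from \tkp, reusing the construction from the proof of Theorem~\ref{theorem:sosp-hardness}. Given a \tkp instance $(R, \troot, \tkpweight, \tkpprofit, \knapsack, \targetprofit)$, I would build the junction tree $\jtree$, the query log $\querylog$, the budget $\budget$, and the target benefit $\targetbenefit$ verbatim as in that proof. In particular, $\querylog$ contains a single query $\query$ whose variables are placed at the junction tree root $\jtroot = \parent_\troot$ and at each leaf clique in $W$ (the clique children of $R$'s leaves), and the cardinalities are chosen so that the cost and benefit of any subtree of $R$, viewed as a shortcut potential in $\jtree$, correspond respectively to its weight and profit in the \tkp instance.

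The main new step beyond Theorem~\ref{theorem:sosp-hardness} is to show that in any optimal \mosp solution over this construction, at most one shortcut potential can be useful for $\query$. I would argue this in three steps. First, by the placement of the query variables, the Steiner tree $\steinertree$ equals the whole junction tree $\jtree$. Second, $\spotvertices{S}$ cannot contain any clique of $W$, since replacing $\spotvertices{S}$ with $\spot$ would remove query variables from the Steiner tree and invalidate usefulness; hence $\spotvertices{S} \subseteq U$. Third, because $\jtroot \notin \spotvertices{S}$, condition~(i) of Definition~\ref{definition:usefulness} demands that every path from a leaf of $\steinertree$ to $\jtroot$ crosses at least two separators of $\cutS$. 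Since $R$ is rooted at $\troot$ and every leaf of $R$ is a descendant of $\troot$, a connected subtree $\spotvertices{S} \subseteq U$ can contain an ancestor of every leaf of $R$ only if it contains $\troot$ itself.

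With this structural fact, at most one shortcut potential in a node-disjoint collection can contain $\troot$, and hence at most one can contribute positive benefit; any additional shortcut potentials in the collection merely consume budget without increasing the objective. Therefore, the \mosp optimum over the constructed instance coincides with the \sosp optimum on the same instance, which in turn equals the \tkp optimum via the correspondence of costs to weights and benefits to profits already established in Theorem~\ref{theorem:sosp-hardness}. The main obstacle lies in the third step: I must carefully rule out creative placements of disjoint shortcut potentials, such as subtrees not containing $\troot$ that collectively cover the leaves, or subtrees straddling the boundary between $U$ and $W$, and confirm that any such attempt either violates usefulness or the disjointness constraint, so that the reduction is tight in both directions.
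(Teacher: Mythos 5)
Your proposal takes a genuinely different route from the paper. The paper does not reuse the \sosp gadget at all: it reduces \mosp from the plain $0$--$1$ knapsack problem (\kp), constructing a tree in which each item occupies its own branch below the root, each branch contains exactly one shortcut potential of cost below \budget (a single clique with two incident separators), and a separate query per item makes that potential useful only for its own branch. Disjointness is then automatic, the weight/profit-to-cost/benefit correspondence is fixed per branch, and no structural analysis of ``which subtrees can be useful'' is needed. Your plan --- reuse the \tkp construction and show that \mosp collapses to \sosp on it --- is a legitimate alternative, but it stakes everything on the step you yourself flag as the main obstacle, and that step is false as stated.

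Specifically, condition~(i) of Definition~\ref{definition:usefulness} (with the Steiner tree equal to all of \jtree) forces every leaf clique in $W$ to lie below $\spotvertices{S}$, and hence forces $\spotvertices{S}$ to contain the \emph{lowest common ancestor of the leaves of $R$}; this equals \troot only when at least two distinct children of \troot have leaf descendants. If \troot has a single child, a connected subtree not containing \troot is useful, and for such a subtree the separator-scope assignment inflates the cost to include the weights of all ancestors of its top node, so the clean correspondence $\mu(\spot)=e^{\sum_{v\in\spotvertices{S}}\tkpweight(v)}$ breaks precisely for the solutions your argument fails to exclude. The conclusion you need (at most one useful potential in any node-disjoint family) does survive, because all useful potentials share the LCA; but to finish you must either handle these inflated-cost subtrees explicitly or normalize the \tkp instance first, e.g., by attaching a zero-weight, zero-profit second leaf branch to \troot so that the LCA is \troot. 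Your worry about subtrees that ``collectively'' cover the leaves is, by contrast, a non-issue: the \mosp objective sums per-potential benefits and usefulness is evaluated per potential, so no collective effect can arise. With that normalization, plus the observation that non-useful potentials contribute zero benefit while only consuming budget, your reduction can be completed; as written, it is not.
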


\begin{proof}
	We can prove the hardness of the \mosp problem 
	by a reduction from the $0$--$1$ knapsack problem (\kp), 
	a well-known \NP-hard problem~\citep{toth1990knapsack}.
	In \kp we are given a set of $n$ items, 
	each with a weight $w_i$ and profit $v_i$, 
	a knapsack capacity \knapsack, 
	and a target profit \targetprofit, and 
	we ask whether there exists a subset of distinct items 
	such that the total weight does not exceed the knapsack capacity~\knapsack,
	while the total profit of the selected items is at least \targetprofit.
	
	Similarly, in the decision version of the \mosp problem 
	we are given a tree  $\jtree=(V,E)$ with root $\jtroot\in V$, 
	a cost and benefit functions $\mu(S)$ and $B(S, \querylog)$ 
	associated with each shortcut-potential subtree $\jtree(S) = (V(S), E(S)) \subseteq \jtree$, 
	space budget \budget, and target benefit \targetbenefit.
	We remark again that both functions $\mu$ and $B$ can be specified by encodings 
	that are polynomial with the size of the tree.
	The \mosp problem asks whether there exists a set of node-disjoint shortcut-potential subtrees 
	such that the associated total weight does not exceed \budget while the total benefit is at least \targetbenefit.
	
	An instance of the \kp problem can be mapped to an instance of the \mosp problem 
	in which only node-disjoint shortcut potentials have cost smaller than \budget and 
	strictly-positive benefit. 
	In more detail, 
	an instance of \mosp can be constructed from an instance of \kp as follows:
	First, we set $\budget = \knapsack$ and $\targetbenefit = \targetprofit$.
	Then we construct a tree \jtree to have root \jtroot and $n$ leaves, 
	one for each item in \kp. For all leaves, there are at least three nodes in the path between the leaf and \jtroot.  
	Within the $i$-th subtree rooted at each of the children of \jtroot, there is only one shortcut potential $\spot_i$
	with cost smaller than \budget. 
	Furthermore, the subtrees associated with all such shortcut potentials 
	consist of a single node having two incident separators. 
	
	Next we define the query log \querylog required for the \mosp instance.
	We take \querylog to consist of $n$ queries with uniform probabilities $\frac{1}{n}$ 
	so that all query variables $x \in \query$ have cardinality $\alpha(x)=1$.
	The $i$-th shortcut-potential subtree is the only useful shortcut potential for query $i$ 
	because all query variables are in the cliques adjacent to the shortcut-potential subtree. 
	For the shortcut-potential subtree 
	contained in the subtree rooted at the $i$-th child of \jtroot,
	separators are chosen so that $\mu(S_i) = w_i$.
	The benefit of the $i$-th shortcut potential is then given by 
	$\frac{1}{n} \mu(S_i) \alphares(S_i)$,
	where we can set the cardinality of the variables belonging to the clique (but not to the separators)
	so that $\alphares(S_i) = {n p_i}/{\mu(S_i)}$. 
	This implies $B(S_i, \querylog) = p_i$, for all $i\in[n]$. 
	Notice that a subtree spanning more than one of the subtrees rooted at the children of \jtroot 
	may have cost lower than \budget but it is not useful for any query and hence it has null benefit. 
	Therefore, no other shortcut-potential subtree can be part of the solution.
	The transformation we described reduces an instance of \kp  
	to an instance of \mosp, 
	having the same solutions.
	We conclude that the \mosp problem is a generalization of the \kp problem, 
	and thus, it is \NP-hard.
\end{proof}

\subsection{Materializing a single shortcut potential} 
\label{sec:single_sp_alg}

\begin{algorithm}[t]
	\caption{Left-to-right dynamic programming (\lrdp) for the single optimal shortcut potential (\sosp) problem}
	\label{alg:single_sp}
	\hspace*{-0.6cm}\textbf{Input}: junction tree \jtree, budget \budget, query log \querylog , pivot of \spot $r_S$\\
	\begin{algorithmic}[1]
		
		\For{$c \gets 0$ to \budget} 
		\State $I[r_S,c] \gets 0$, 
		\ReviewOnly{
		\revisioncol{$P[r_s,c] \gets -\infty$}
	    }
		\EndFor
		
		\For{$v \gets 1$ to $n$} 
		\State $P \gets {\forwardmove}(v, P, \budget)$ 
		
		\If{$v \in \jtleafs(\jtree)$} 
		\WhileNoDo{$(v \neq r_S)$ and}
		\StatexIndent[4]{$(\text{there is no } i \in \childs(v) \text{ s.t. } i > v)$} \algorithmicdo
		\State $I,P \gets {\backwardmove}(v, I, P, \budget)$ 
		\State $v \gets \parent_{v}$
		\EndWhile
		\EndIf
		\EndFor
		\noindent \Return $I,P$
		
		\item[]
		
		\Procedure{{\forwardmove}}{$v$, $P$, \budget}
		\For{$c \gets 0$ to $\budget$}
		\If{$c >  \nodecost{v}$}
		\State $P[v, c] \gets \nodebenefit{v}$
		\EndIf
		\EndFor
		\Return $P$
		\EndProcedure
		
		\item[]
		
		\Procedure{{\backwardmove}}{$v$, $I$, $P$, \budget}
		\For{$c \gets 0$ to $\budget$}
		\State $\currentchilds(\parent_{v}) \gets \{u \in \childs(\parent_{v}) : u \leq v\}$
		
		\State $\Gamma_{c}' = \{\Gamma:\Gamma \subseteq \currentchilds(\parent_{v}) \land  \sum_{\child \in \Gamma} \nodecost{\child} \leq c\}$ 
		\State $\bopt{\parent_v}[c] \gets \max_{\Gamma_c \in \Gamma_c'}  \sum_{\child \in \Gamma_c} P[\nodecost{\child}, \child]$
		\State $\Gamma \gets \arg \max_{\Gamma_c \in \Gamma_c'}  \sum_{\child \in \Gamma_c} P[\nodecost{\child}, \child]$
		\State  $\eta \gets \max \left\{P[v, c] , P[\parent_{v}, c] ,  \bopt{\parent_v}[c] \right\}$
		\If{$\eta = P[v, c]$}
		\For{$\child \in \currentchilds(\parent_{v})$}
		\State $I[\child, c] \gets 0$
		\EndFor
		\State $I[v, c] \gets 1$
		\EndIf
		\If{$\eta = \bopt{\parent_v}[c]$}
		\For{$\child \in \currentchilds(\parent_{v})$}
		\State $I[\child, c] \gets 0$
		\If{$\child \in \Gamma$}
		\State $I[\child, c] \gets 1$
		\EndIf
		\EndFor
		\EndIf
		\If{$\eta =  P[\parent_{v}, c]$}
		\For{$\child \in \currentchilds(\parent_{v})$}
		\State $I[\child, c] \gets 0$
		\EndFor
		\EndIf
		\State $P[\parent_{v}, c] \gets \eta$
		\EndFor
		\Return $I,P$
		\EndProcedure		
	\end{algorithmic}
\end{algorithm}

\begin{algorithm}[t]
	\caption{Reconstruct solution found by \lrdp algorithm}
	\label{alg:single_sp_reconstruct}
	\textbf{Input}: junction tree \jtree,  budget \budget, matrix $I$ and $r_S$ from Algorithm~\ref{alg:single_sp}
	\begin{algorithmic}[1]
		\For{$c \gets 0$ to $\budget$}
		\State $\tovisit \gets  \{ \childs(\jtroot_S) \}$
		\State $\currentpath \gets  \emptyset$
		\State $S[r_S,c] \gets \emptyset$
		\While{$\tovisitsize > 0$}
		\State $v \gets \tovisitpop()$ \label{op0}
		\If{$(v \in \jtleafs(\jtree)) \text{ and } (I[v,c] = 1)$}
		\State $S[r_S,c] \gets S[r_S,c] \cup (v, \parent_{v})$
		\State ${\currentpath} \gets \emptyset$
		\EndIf
		\IfNoDo{$(\currentpathsize > 0) \text{ and } (I[v,c] = 0) \text{ and}$}
		\StatexIndent[3]{$(\text{there is no }  \child \in \childs(\parent_{v}) \text{ s.t. } \child > v)$} \algorithmicdo
		\State $S[r_S,c] \gets {S}[r_S,c] \cup \{ (\parent_{v}, p_{\parent_{v}}) \}$
		\State ${\currentpath} \gets \emptyset$
		\EndIf
		\If{$I[v,c] = 1$}
		\State ${\currentpath} \gets {\currentpath} \cup \{ v \}$
		\State $\tovisit \gets \tovisit \cup \childs(v)$
		\EndIf
		\EndWhile
		\EndFor
		\noindent \Return ${S}$
	\end{algorithmic}
\end{algorithm}

Despite the hardness of the two problems, we now show that they are not strongly \NP-hard, 
and so it is possible to obtain an optimal solution in \emph{pseudo-polynomial time}. 
We start with the \sosp problem.
The algorithm we design follows the \emph{left-to-right} (or depth-first) dynamic programming approach
\citep{johnson1983knapsacks,samphaiboon2000heuristic,cho1997depth}, 
which we appropriately name \lrdp.
The procedure is illustrated in Algorithm~\ref{alg:single_sp}.
We assume that the nodes of the tree are labeled from $0$ (the pivot $r_S$) 
to $n$ in a depth-first manner and ``from left to right.''
%
%
%
Consider any node~$v$. Let $S_v$ be the shortcut potential such that $V(S_v)$ includes 
all the nodes in the path $\jtpath(\parent_v,\jtroot)$, where $\parent_v$ is the parent of $v$. 
In what follows, for simplicity of notation, 
we refer to the benefit and cost of the shortcut potential $S_v$ as $\nodebenefit{v}$ and $\nodecost{v}$, respectively.  
The \lrdp algorithm has two building blocks: 
a \emph{forward} and a \emph{backward} step. 
The forward step computes the benefit $\nodebenefit{v}$ and the cost~$\nodecost{v}$, for each node $v$,
visiting nodes in a depth-first fashion. \revisioncol{This is the only step in which the query-log is used.}
The backward step is performed every time a leaf node $u$ of the junction tree \jtree
is reached, while there are no nodes with larger label to visit in the forward step. 
Let us denote the leaf nodes of \jtree by $\jtleafs(\jtree)$. 
To perform the backward step at node $v$, we consider all cost values $c \in [\budget]$.  
For each value $c$, we find the optimal combination of the children of $\parent_{v}$ with label $v$ or lower 
(i.e., the children that have currently been visited by the algorithm) 
that maximize the benefit under the constraint that the total cost does not exceed $c$.  
Let $\Gamma$ denote the optimal combination and $\bopt{\parent_v}$ the corresponding benefit. 
To complete the backward step, we compare $\bopt{\parent_v}$ 
with $\nodebenefit{v}$ and $\nodebenefit{\parent_{v}}$.  
Based on this comparison, 
we keep track of the nodes that contribute to the optimal solution and of the optimal benefit value. 
We tabulate the results of the computation using two matrices, $P$ and $I$. 
The former stores benefit values, whereas the latter acts as an indicator and is used to reconstruct the solution.
In particular, we set $I[v,c]=1$ for each node $v$ such that the separators $(v, \parent_{v})$ 
are part of the \cutS associated with the 
optimal shortcut potential of cost~$c$.
We also set $I[u,c] = 1$ for each node $u$ in the path between $v$ and~\jtroot. 
Similarly, $P[v,c]$ stores the current optimal benefit for a shortcut potential 
when the algorithm visits node $v$. 
%
%
After the last backward step is carried out, 
the \lrdp algorithm has computed the optimal shortcut potential for all cost values $c\in[\budget]$. 
Once the optimal single shortcut-potential benefit has been computed, 
it is rather simple to reconstruct the solution. 
The reconstruction procedure is described in Algorithm~\ref{alg:single_sp_reconstruct}. 
Two auxiliary sets are used, $\currentpath$ and $\tovisit$. 
The main idea is that all nodes that identify the optimal shortcut-potential subtree 
are marked by the matrix $I$, as discussed above. 
More specifically, all nodes in a path between the root $r_S$ of $S$ and a node $i$ 
for which the separator $(i,\parent_i)$ is part of the set \cutS identifying the optimal shortcut potential 
are assigned value $1$ in the indicator matrix.
Thus, it is necessary to retrieve the bottom-most node marked in $I$ with $1$ for each such path. 
The reconstruction procedure traverses the tree top-down\ReviewOnly{\revisioncol{,}} visiting all the nodes indicated 
by the indicator matrix $I$ in a depth-first ("left-to-right") manner. 
This order is respected because the $\pop$ operation in Line~\ref{op0} 
returns the node with the smallest {DFS}~label.
The execution of Algorithm~\ref{alg:single_sp_reconstruct} 
gives the set of separators whose joint probability distribution is the optimal shortcut potential. 
Algorithms \ref{alg:single_sp} and \ref{alg:single_sp_reconstruct} 
can be executed with each internal node of the tree \tree being considered as root $r_S$, 
so as to obtain the single optimal shortcut potential $S[r_S,c]$ rooted at $r_S$ and of cost $c$ as well as its associated optimal benefit $P({S[r_S,c]})$,
for each $\jtroot_S\in V$ and each cost value $c \in [\budget]$. 
This will be used in developing the solution for the \mosp problem, 
as we will see shortly. 
Without delving into details, it is worth noting that
when executing the algorithm for all $\jtroot_S \in V$, 
it is possible to share computations 
among different roots. 

\spara{Time complexity.}
The running time of \lrdp is pseudo-polynomial; 
more precisely $\mathcal{O}(n \budget^2)$. 
The quadratic term is due to the computation of $\bopt{\parent_v}$ in the backward step, 
which is performed once for each clique node $v \neq \jtroot$. 
\ReviewOnly{\revisioncol{ Note that a more fine-grained expression of the time complexity for \lrdp would include the complexity for computing the benefit values $\nodebenefit{v}$, but in practice this additional complexity is dominated by the backward step.  }}




\subsection{Materializing multiple shortcut potentials} 
\label{sec:k_sp_alg} 
\begin{algorithm}[t]
	\caption{Bottom-up dynamic programming (\budp) for the multiple optimal shortcut potentials (\mosp) problem}
	\label{alg:k_sp}
	\textbf{Input}: junction tree $T$, budget \budget, $P$ from Algorithm~\ref{alg:single_sp}, 
	\linebreak \indent \hspace{4.3cm} $S$ from  Algorithm~\ref{alg:single_sp_reconstruct}  \\
	\begin{algorithmic}[1]
		\ForAll {$v \in V$}
		\For{$c \gets 0$ to $\budget$} 
		\State $I[v,c]  \gets 0$
		\EndFor
		\EndFor	
		\ForAll {$v \in V$}
		\If{$v \not \in \jtleafs(\jtree)$} 
		\For{$c \gets 0$ to $\budget$}  \Comment{ case (i) }
		\State $\Gamma_{c} \gets \{ \nodecost{\child} \text{ for all } \child \in \childs(v) \text{ s.t. } $ 
		\State \indent \quad \quad  $\sum_{\child \in \childs(v) }  \nodecost{\child} \leq c \} $
		\State $\phi_1 \gets \sum_{\child \in \childs(v),  \nodecost{\child} \in  \Gamma_{c}} P(S[\child, \nodecost{\child}])$
		\State $H_1[v,c] \gets \max_{\,\Gamma_{c}} \left\{  \phi_1 \right\}$
		
		\ForAll{$\child \in \childs(v)$}
		\State $W_1[\child,c] \gets \arg\max_{\nodecost{\child}} \left\{ \phi_1 \right\}$
		\EndFor
		\EndFor
		\For{$c \gets 0$ to $\budget$} \Comment{ case (ii) }
		\State $\Gamma_{c',c} \gets \{ \nodecost{d} \text{ for all } d \in D(S[v, c']) \text{ s.t.}$
		\State  \indent \quad \quad  $\sum_{d \in  D(S[v, c'])} \nodecost{d} \leq c-c'  \}  $
		\State $\phi_2 \gets P(S[v, c']) + $
		\State \indent \quad \quad $\sum_{d \in D(S[v, c']), \nodecost{d} \in  \Gamma_{c',c}} P(S[d, \nodecost{d}])$
		
		\State $H_2[v,c] \gets \max_{c', \Gamma_{c',c}} \left\{ \phi_2 \right\} $
		
		\State $W_2[v,v, c] \gets \arg \max_{c'} \left\{ \phi_2 \right\} $
		\ForAll{$d \in D(S[v, c'])$}
		\State $W_2[v, d ,c] \gets \arg \max_{\nodecost{d} \in \Gamma_{c',c}} \left\{ \phi_2 \right\}$
		\EndFor
		\EndFor
		\For{$c \gets 0$ to $\budget$} 
		\If{$H_2[v,c] > H_1[v,c]$}
		\State $I[v,c]  \gets 1$
		\State $H[v,c] \gets H_2[v,c]$
		\Else 
		$\ H[v,c] \gets H_1[v,c]$
		\EndIf
		\EndFor
		\EndIf
		\EndFor
		\noindent \Return $I, H$
	\end{algorithmic}
\end{algorithm}


\begin{algorithm}[t]
	\caption{Reconstruct solution found by \budp algorithm}
	\label{alg:k_sp_reconstruct}
	\begin{algorithmic}[1] 	
		\Procedure{Reconstruct}{$v, c$}
		\If{$I[v,c] = 1$}
		\State $c' \gets W_2[v, v,c]$
		\State $\spotset \gets \spotset \cup S[v,c']$
		\ForAll{$d \in D(S[v,c'])$}
		\State $\nodecost{d} \gets W_2[v, d, c]$
		\State\Return{\Call{Reconstruct}{$u,\nodecost{d}$}}
		\EndFor
		\Else
		\ForAll{$\child \in \childs(v)$}
		\State $\nodecost{\child} \gets W_1[\child,c]$
		\State\Return{\Call{Reconstruct}{$\child,\nodecost{\child}$}}
		\EndFor
		\EndIf
		\EndProcedure
	\end{algorithmic}
\end{algorithm}


We now show how to obtain an optimal packing for \mosp,  
i.e., \cardbudget node-disjoint shortcut potentials. 
Recall that \cardbudget is not specified in the input, 
but rather it is optimized by the algorithm. 
In other words, 
we find the optimal set of non-overlapping shortcut potentials 
leading to the largest total benefit while satisfying the budget constraint. 
\revisioncol{We stress that the shortcut potentials we retrieve are not optimal in the general sense, 
but are optimal under the constraints introduced in our problem formulation.}
As a pre-processing step, we first execute 
Algorithms~\ref{alg:single_sp} and~\ref{alg:single_sp_reconstruct}
with each node of the tree being considered a root.
As a result we compute the single optimal shortcut potentials
for each node $\jtroot_S \in V$
and each cost value $c \in [\budget]$. 
We proceed with a bottom-up dynamic-programming algorithm, 
named \budp and shown as Algorithm~\ref{alg:k_sp}.
In the  \budp algorithm, 
each node $v$ is visited when all its children $\childs(v)$ have already been visited. 
When visiting $v$, 
if it is not a leaf, 
we compute the value of the benefit of the optimal packing in the subtree rooted at~$v$, 
while considering two cases: 
\begin{enumerate*}[label=(\roman*)]
\item the solution includes a shortcut potential subtree rooted at~$v$, and
\item the solution does not include a shortcut potential subtree rooted at~$v$. 
\end{enumerate*}
In case (i)\ReviewOnly{\revisioncol{,}}
the total benefit $H_1[v,c]$ at $v$ for a given cost $c$ is simply obtained 
by combining the optimal solutions at each node $\child \in \childs(v)$. 
In case (ii)\ReviewOnly{\revisioncol{,}} the total benefit $H_2[v,c]$ at $v$ for a given cost $c$  
is the benefit of the best combination of the optimal shortcut potential $S$ rooted at $v$ 
and the optimal solutions at the nodes $s$ belonging to the set $D(S)$ of descendants of \spot 
defined as 
$D(S) = \{s : d \notin \spotvertices{S[v,c]} \text{ and } p_u \in \spotvertices{S[c,v]} \}$.
For any internal node $v$, we also use an indicator matrix $I$ to store 
which of the two cases provides the largest benefit, 
i.e., $I[v,c] = 1$ if $H_2[v,c] > H_1[v,c]$, and $I[v,c] = 0$ otherwise.
Additionally, in order to reconstruct the optimal solution, 
for the optimal packing of cost $c$ in the subtree rooted at a node $v$, 
it is necessary to store the cost allocated to each shortcut potential 
that is part of the optimal packing in that subtree.
For this purpose, we use two matrices $W_1$ and $W_2$.  
The matrix $W_1$ is $2$-dimensional, while
$W_2$ is $3$-dimensional 
because a node can be a child of only one parent 
but a descendant of multiple shortcut potentials. 
%
Having computed the benefit associated with the optimal packing of \cardbudget shortcut potentials, 
it is necessary to reconstruct the solution.
It is convenient to address this task recursively, 
as shown in Algorithm~\ref{alg:k_sp_reconstruct}, 
where the nodes of tree \jtree are traversed top-down. 
The matrices $I$, $W_1$, and $W_2$ returned by \budp 
are used in Algorithm~\ref{alg:k_sp_reconstruct}.
Every time a node $v$ is visited, 
we use the indicator matrix~$I$ 
to distinguish the two cases discussed above. 
If case (i) is optimal for the subtree rooted at $v$,  
we continue the recursion at nodes $u \in \childs(v)$ with weights indicated by matrix $W_1$. 
Otherwise, if case (ii) is optimal, 
we continue the recursion at nodes $d \in D(S)$ with the weights given by matrix $W_2$. 
For the initial invocation of \textsc{Reconstruct} 
we pass as arguments the root of \tree and the budget \budget. 
The method returns the optimal packing \spotset of node-disjoint shortcut potentials, 
i.e., the solution to the \mosp problem.

\spara{Time complexity.}
The running time of the \budp algorithm is $\mathcal{O}(n \budget^3)$. 
Each of the internal nodes of the junction tree is visited once\ReviewOnly{\revisioncol{,}} and 
the total benefits corresponding to cases (i) and (ii) are evaluated. 
The cubic term 
stems from the evaluation of the total benefit under case (ii).
Accounting for the preprocessing, 
where the \lrdp algorithm is called $\mathcal{O}(n)$ times, 
the total complexity of solving the \mosp problem is 
$\mathcal{O}(n^2\budget^2 + n \budget^3)$.

\subsection{A strongly-polynomial algorithm} 

\label{sec:approximation_alg}
In addition to the pseudo-polynomial algorithms introduced in the previous sections, 
we now present a strongly-polynomial variant for both problems \sosp and \mosp. 
This strongly-polynomial algorithm is not optimal, clearly,
but it performs very well in practice as we will see in our experimental evaluation.
The main idea is to reduce the amount of admissible values for the budget. 
Instead of considering all budget values $c\in\{0,\ldots,\budget\}$, 
we can work with an exponentially smaller set.
In particular, for some real value $\epsilon \geq 1$, 
the set $\{0,\ldots,\budget\}$ is partitioned into a smaller number of bins 
of increasing size $\lfloor{\epsilon^i}\rfloor$, where $i$ is the bin number.  
We thus form a reduced set 
$\{0, \lfloor{\epsilon}\rfloor,\lfloor{\epsilon^2}\rfloor,\ldots,\budget \}$,  
which is used by agorithms \lrdp and \budp in lieu of $\{0,\ldots,\budget\}$.
The parameter $\epsilon$ embodies the trade-off 
between solution quality and running time. 
Increasing the value of $\epsilon$ makes the dynamic-programming framework more efficient, but
it decreases the quality of the solution. 
As a rule of thumb, a value of $\epsilon$ close to $1.2$ offers a solution 
that is often found to be close to the optimal
while ensuring a substantial reduction of running time. 
A \emph{fully-polynomial time approximation scheme} ({\sf FPTAS}) 
can also be designed by rounding and scaling the benefit values. 
However, such an approach does not scale well in practice, and therefore in our experimental evaluation 
we focus solely on the strongly-polynomial algorithm discussed above, which, 
while being a heuristic, is empirically found to be remarkably effective, 
particularly when the budget available for materialization is large.

\subsection{Putting everything together}
\label{section:peanut}
The techniques discussed in this section are incorporated into a comprehensive method, called \ouralgorithm. 
Its structure is simple: \ouralgorithm is composed of an offline and an online component. 

\spara{In the offline component}, algorithm \lrdp is executed once for each internal \revisioncol{node} as a root, 
so as to obtain the optimal shortcut potentials corresponding to each combination of root and cost.
The resulting computation is then used by algorithm \budp to compute an optimal packing of 
multiple node-disjoint shortcut potentials, which is
used by the online component of the system to answer inference queries. 
\ouralgorithm can use the strongly-polynomial algorithm 
to reduce the time required to identify the materialization. 

\spara{In the online component}, inference queries are processed. 
First, given a query \query, the associated Steiner tree \steinertree is extracted. 
\ouralgorithm checks whether there are materialized shortcut potentials 
that are useful for the query \query and, 
if there are, uses them to reduce the size of \steinertree and 
consequently the message-passing cost to answer~\query.

\subsection{Relaxing the node-disjointness constraint}
\label{section:overlap}

The algorithms described in the previous sections find the optimal set of shortcut potentials under the constraints that the shortcut potential subtrees are node-disjoint. While such a constraint is crucial to make the problem tractable, it limits the  budget that can be used for materialization.  
In particular, in our experiments we observe that \ouralgorithm tends not to utilize all the available budget, 
resulting in a smaller benefit than what could be possible if the whole budget is used.
To overcome this limitation, 
we propose \ouralgorithmplus, a method that combines the ideas discussed above with a simple \emph{greedy packing heuristic}. 
In \ouralgorithmplus, we only run Algorithms~\ref{alg:single_sp} and~\ref{alg:single_sp_reconstruct} at each possible root and then, among all the retrieved shortcut potentials, we greedily select the ones with the largest ratio $r$ of benefit $\benefit{\cdot}$ to cost $\mu(\cdot)$ until the budget is filled. 

The online component of \ouralgorithmplus is more complex than the one of \ouralgorithm because overlapping shortcut potentials cannot be simultaneously used. Hence, for each query \query, we assemble a conflict graph $G_c$ in which the shortcut potentials that are useful for \query are weighted by the corresponding value of $r$ and are joined by an edge if they are overlapping. From $G_c$, we extract a maximum weighted independent set by using the greedy {\sf{\small GWMIN}} algorithm~\cite{sakai2003note}.
In practice, this extra step incurs an overhead, 
which can be expected to be negligible with respect to the execution time of the queries. 


\section{Experimental evaluation}
\label{sec:experiments}

\begin{figure*}[t]
	\begin{tabular}{ccc}
		\hspace{-2mm} \includegraphics[width=0.52\columnwidth, height = 0.33\textheight, keepaspectratio]{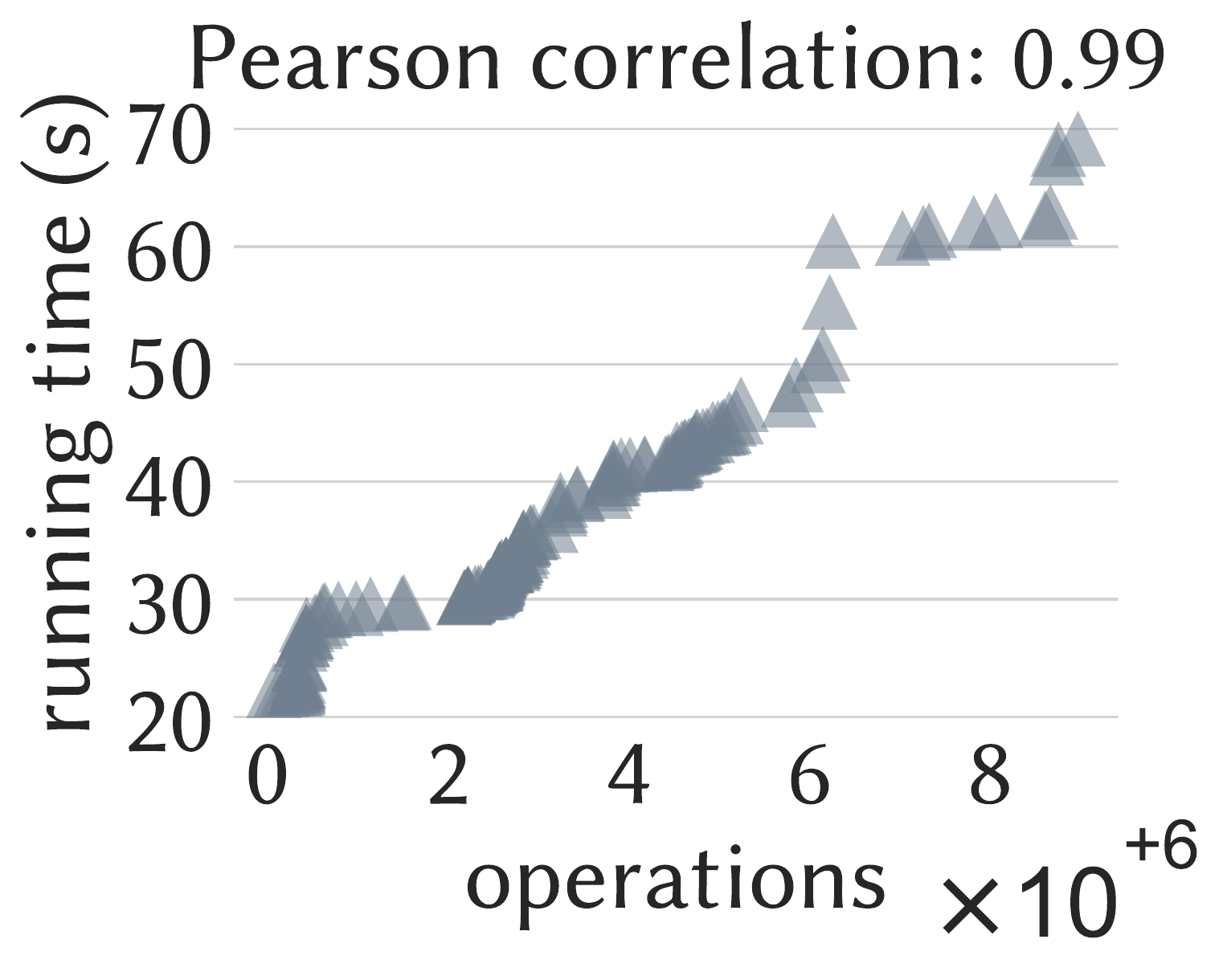}&
		\hspace{-2mm} \includegraphics[width=0.52\columnwidth, height = 0.33\textheight, keepaspectratio]{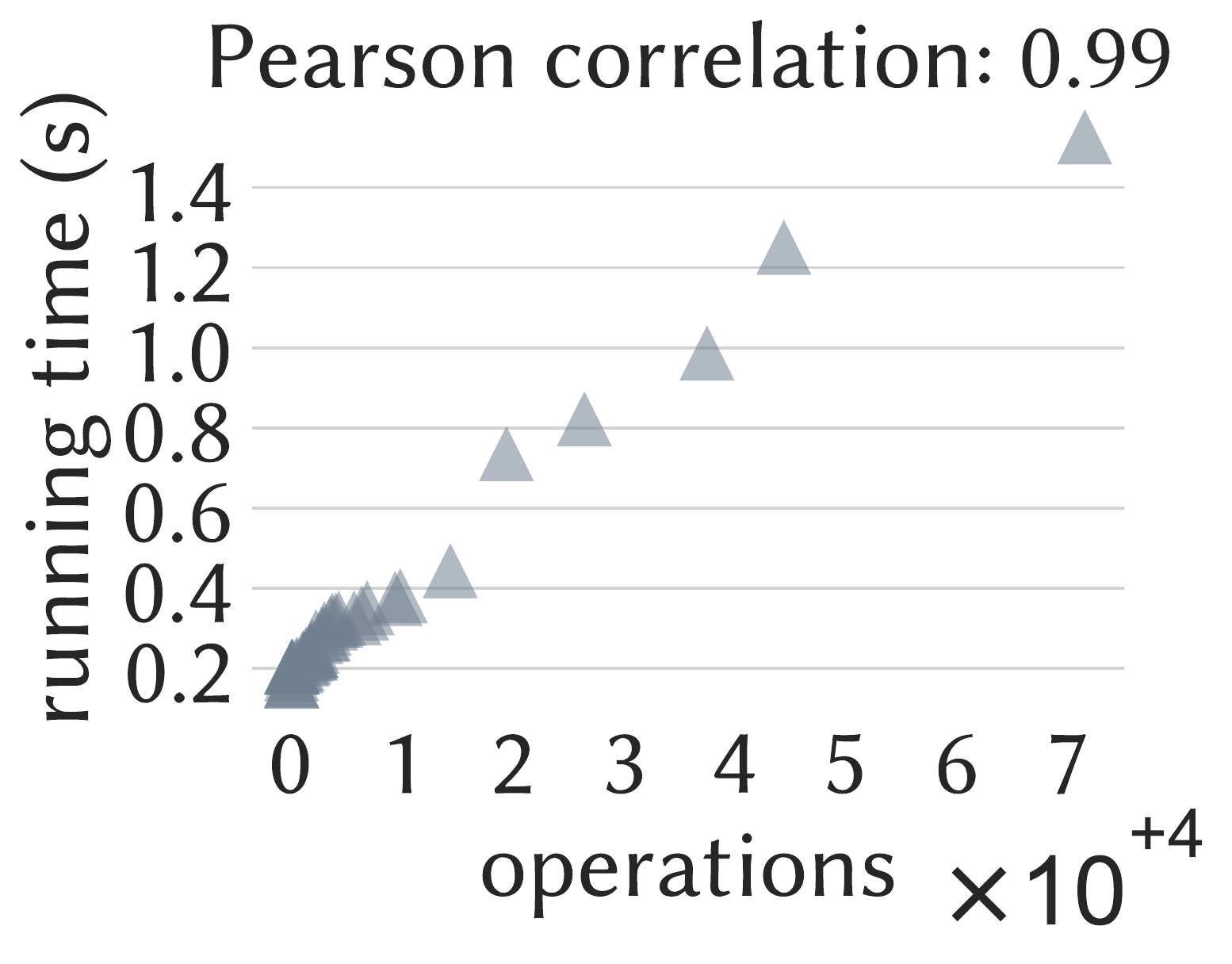}&
		\hspace{-2mm} \includegraphics[width=0.52\columnwidth, height = 0.33\textheight, keepaspectratio]{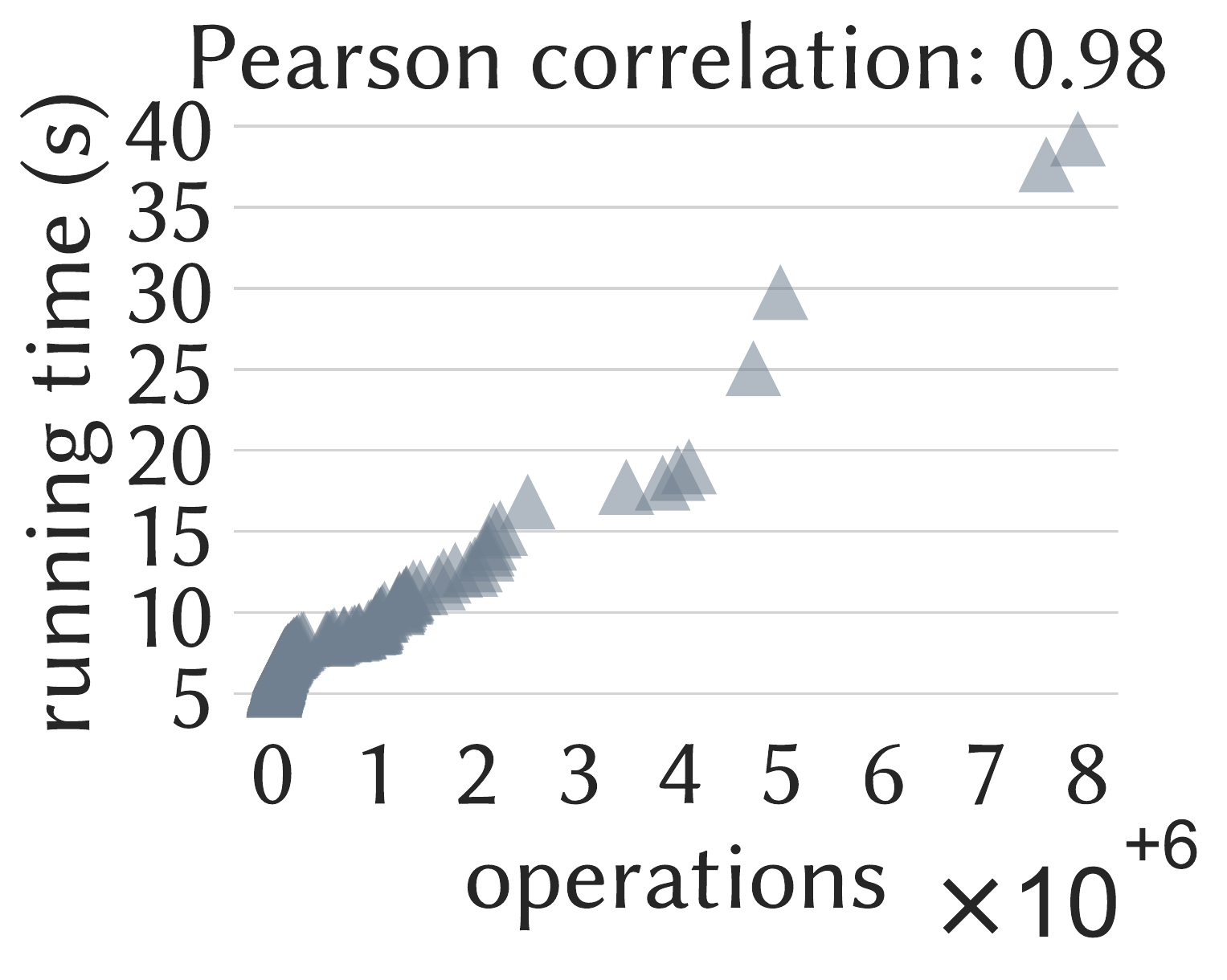}\\
		(a) \textsc{Andes} & (b) \textsc{HailFinder}   & (c) \textsc{PathFinder}  
	\end{tabular}
	\caption{\label{fig:correlation_plot} \revisioncol{ Running time against cost values for a set of queries processed with the standard junction-tree algorithm.}} 
	\centering
	\includegraphics[width=1\columnwidth, height = 0.6\textheight, keepaspectratio]{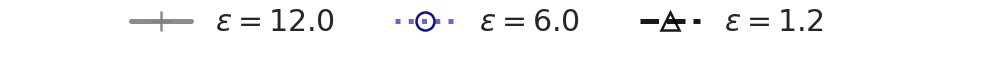}\\
	\begin{tabular}{ccc}
		\hspace{-4mm} \includegraphics[width=0.52\columnwidth, height = 0.33\textheight, keepaspectratio]{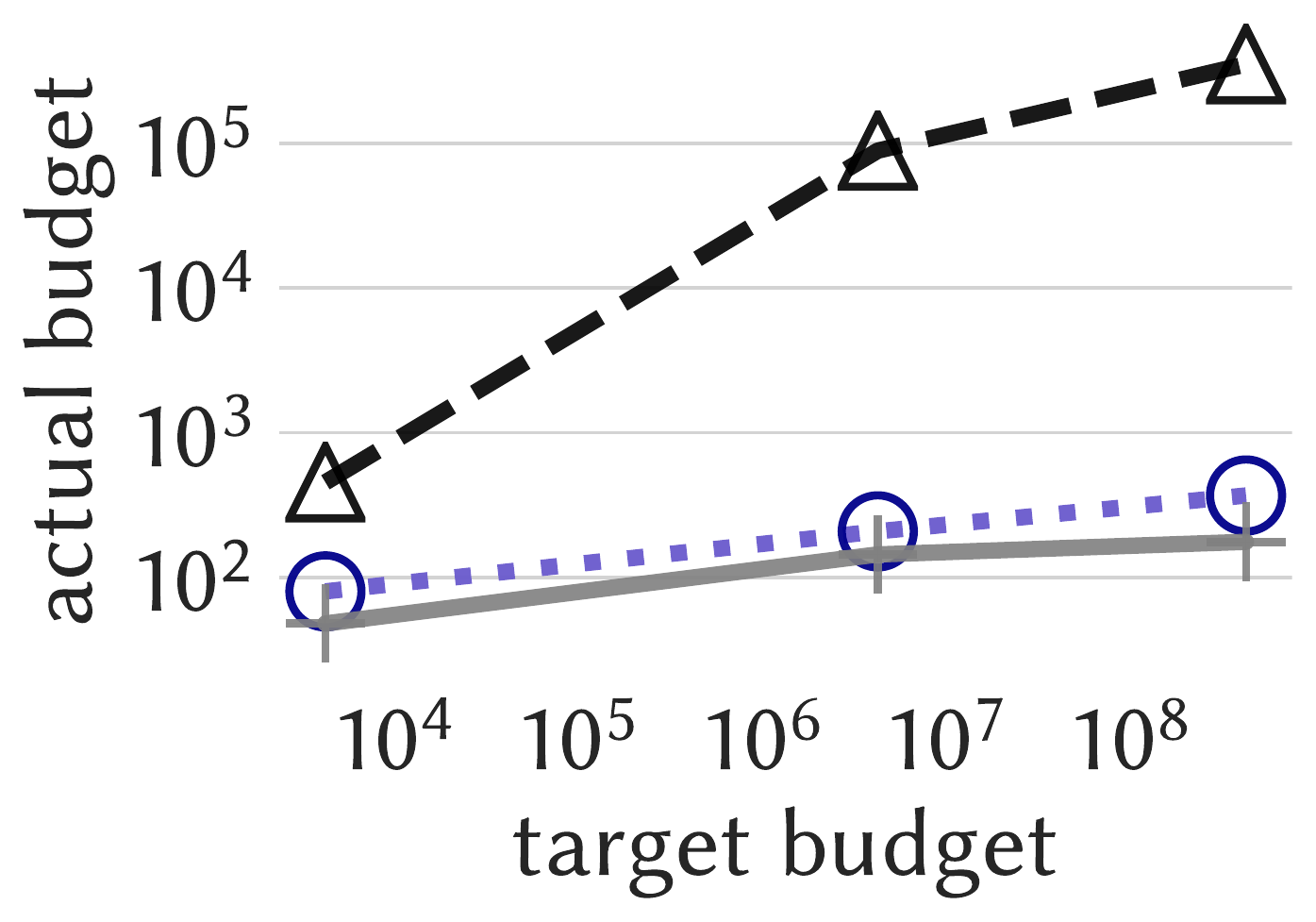}&
		\hspace{-4mm} \includegraphics[width=0.52\columnwidth, height = 0.33\textheight, keepaspectratio]{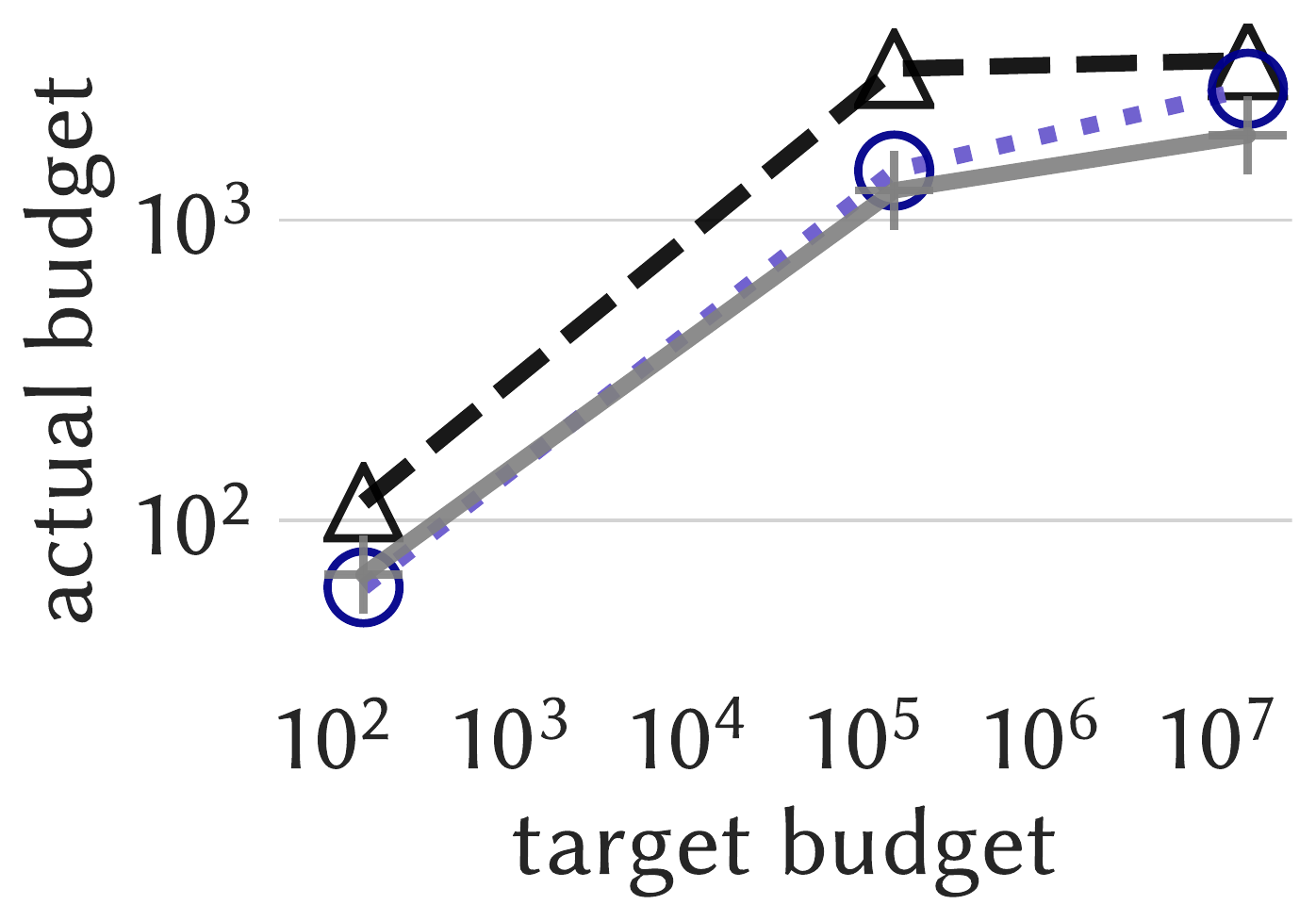}&
		\hspace{-4mm} \includegraphics[width=0.52\columnwidth, height = 0.33\textheight, keepaspectratio]{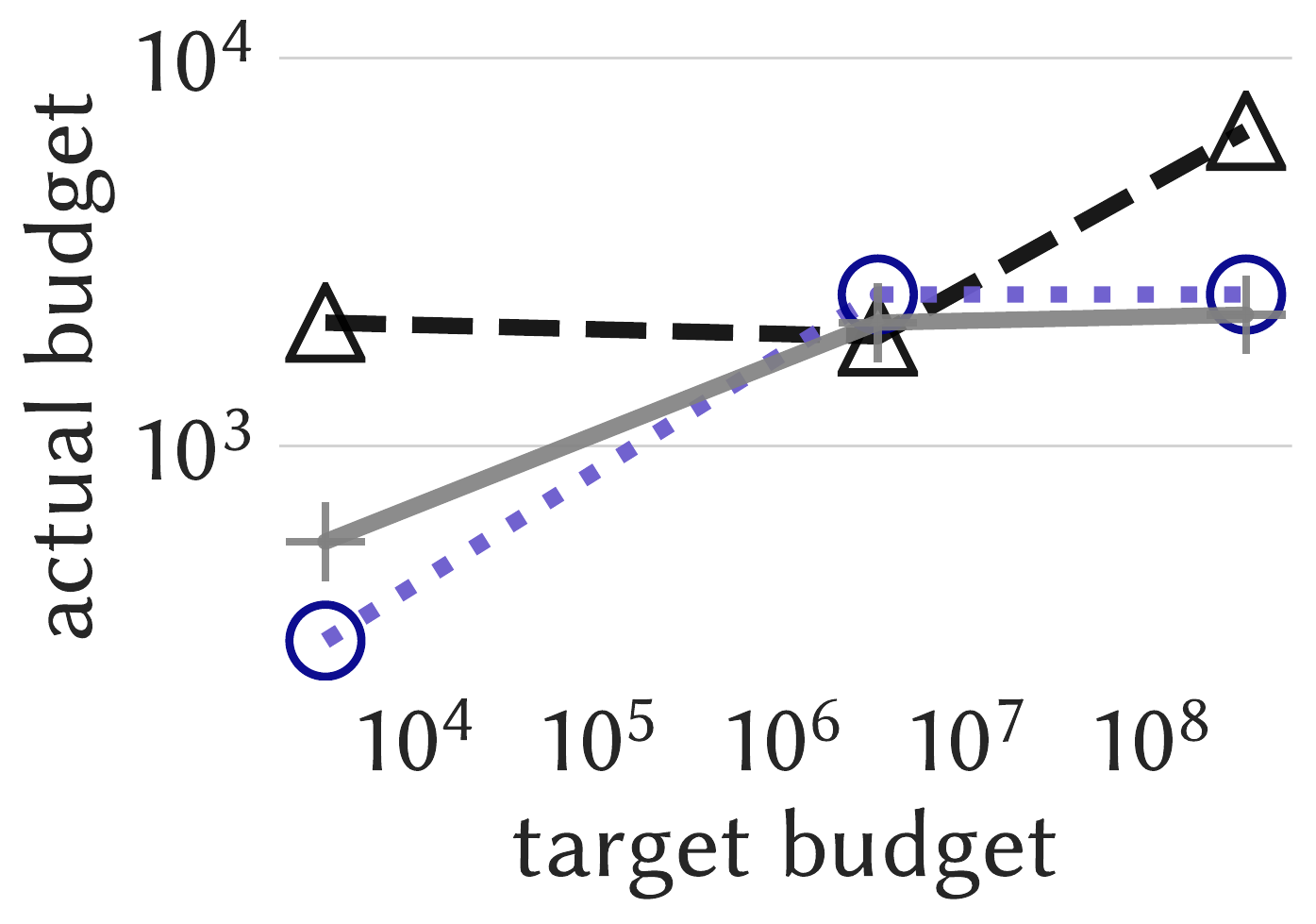} \\
		(a) \textsc{Andes} & (b) \textsc{hailFinder}   & (c) \textsc{pathFinder}  
	\end{tabular}
	\caption{\label{fig:materialized_vs_target_budget} Materialized budget against target budget in \ouralgorithm with different levels of approximation (log-log scale). }
\end{figure*}

In this section\ReviewOnly{\revisioncol{,}} we present an extensive experimental evaluation of our methods, 
using benchmark Bayesian network datasets, 
and comparing \ouralgorithm and \ouralgorithmplus
with the strong (but not workload-aware) baseline of \citet{kanagal2009indexing}, and with the recent techniques for optimal workload-aware materialization for inference based on the variable-elimination algorithm~\cite{aslay2021workload}. \revisioncol{We also study the performance of our methods in the presence of temporally-shifted workloads.}

\begin{table}[t]
\begin{small}
	\centering
	\caption{Summary statistics of Bayesian networks.}
	\label{tab:bn_statics}
	\vspace{-1mm}
	\begin{tabular}{lrrrc}
		\toprule
		\multicolumn{1}{l}{dataset} & \multicolumn{1}{r}{\# nodes} & \multicolumn{1}{r}{\# edges} & \multicolumn{1}{r}{\# parameters} & \multicolumn{1}{c}{max in-degree} \\
		\midrule
		\textsc{Child} & $20$ & $25$ & $230$ & $2$  \\
		\textsc{Hepar II} & $70$ & $123$ & $1.4$\,K & $6$  \\ 
		\textsc{Andes} & $223$ & $338$ & $1.1$\,K & $6$  \\
		\textsc{Hailfinder} & $56$ & $66$ & $2.6$\,K & $4$  \\ 
		\textsc{TPC-H} & $38$ & $39$ & $355.5$\,K & $2$  \\   
		\textsc{Munin} & $186$ & $273$ & $15.6$\,K & $3$ \\ 
		\textsc{PathFinder} & $109$ & $195$ & $72.1$\,K & $5$  \\ 
		\textsc{Barley} & $48$ & $84$ & $114$\,K & $4$  \\ 
		\bottomrule
	\end{tabular}
\end{small}
\end{table}

\begin{table}[t]
\begin{small}
	\centering
	\caption{Summary statistics of junction trees.}
	\label{tab:jt_statics}
	\vspace{-1mm}
	\begin{tabular}{lrrr}
		\toprule
		\multicolumn{1}{l}{dataset} & \multicolumn{1}{c}{\# cliques} & \multicolumn{1}{c}{diameter} & \multicolumn{1}{c}{treewidth}  \\
		\midrule
		\textsc{Child} & $17$ & $10$ & $3$  \\
		\textsc{Hepar II} & $58$ & $14$ & $6$  \\
		\textsc{Andes} & $175$ & $25$ & $17$  \\
		\textsc{Hailfinder} & $43$ & $14$ & $4$  \\
		\textsc{TPC-H} & $33$ & $16$ & $2$  \\
		\textsc{Munin} & $158$ & $23$ & $11$  \\
		\textsc{PathFinder} & $91$ & $17$ & $6$  \\
		\textsc{Barley} & $36$ & $14$ & $7$  \\
		\bottomrule
	\end{tabular}
\end{small}
\end{table}

\subsection{Experimental setting}

We describe in more detail the benchmark Bayesian network datasets we use, 
the baseline method we compare, 
and the other settings and design choices for our experimental evaluation. 



\spara{Datasets.} 
We perform experiments on $8$ real-world Bayesian network datasets.
A summary of the dataset statistics is shown in Table~\ref{tab:bn_statics}.
The number of parameters in the table refers to the 
number of probablility values required to define all probabilistic dependencies in the Bayesian network.
All datasets are available online.\footnote{See \url{https://github.com/martinociaperoni/PEANUT} for \textsc{TPC-H} and \url{www.bnlearn.com/bnrepository/} for the rest} 
\textsc{Child}~\cite{spiegelhalter1993bayesian} is a model for congenital heart-disease diagnosis 
in new born ``blue babies.'' 
\textsc{Hepar II}~\cite{onisko2003probabilistic} is a model for liver-disorder diagnosis. 
\textsc{Andes}~\cite{conati1997line} is used in an intelligent tutoring system for teaching Newtonian physics to
students. 
\textsc{Hailfinder}~\cite{abramson1996hailfinder} combines meteorological data and human expertise to forecast severe weather in Northeastern Colorado.
\textsc{TPC-H} is a Bayesian network learned from TPC-H benchmark data, following Tzoumas et al.~\cite{tzoumas2013adapting}.
\textsc{Munin}~\cite{andreassen1989munin} is a subnetwork of a Bayesian network model proposed to be exploited in electromyography. 
\textsc{PathFinder}~\cite{heckerman1990toward} assists with the diagnosis of lymph-node diseases.
\textsc{Barley}~\cite{barley} supports decision making in growing malting barley without use of pesticides. 
Table~\ref{tab:jt_statics} shows salient characteristics of the junction trees associated with the datasets. 

\spara{Baselines.} 
We compare our approach against  \indsep~\citep{kanagal2009indexing}, which relies on a \emph{tree-partitioning technique}~\cite{kundu1977linear} 
to build a hierarchical index over the junction tree. 
The index construction hinges on the disk block size, 
which provides an upper bound on the memory required by each index node. 
An index node $I$ corresponds to a connected subtree of the junction tree 
and is associated with a shortcut potential given by the joint-probability distribution 
of the variables in the separators adjacent to $I$. 
Queries are processed by a recursive algorithm\revisioncol{,} and shortcut potentials 
prune the recursion tree for some queries. 
A multi-level approximation scheme for shortcut potentials 
is also implemented to deal with the scenario 
in which the size of a shortcut potential exceeds the block~size.  
We furthermore compare \ouralgorithm and \ouralgorithmplus with the optimal materialization for variable elimination that we have developed in our previous work~\cite{aslay2021workload}. We refer to this method as \qtm{n}, where $n$ refers to the number of materialized factors. 
\ReviewOnly{\revisioncol{The time and space requirements for the construction and calibration of the junction tree (JT) and for the materialization step with the different methods under comparison}} \FullOnly{the offline materialization step in this comparison} \revisioncol{are shown in Table~\ref{table:offlineStats}.}
Note that for the \textsc{TPC-H}, \textsc{Munin} and \textsc{Barley} datasets, calibration of the tree did not terminate after a one-day-long execution (asterisk and NA entries in Table~\ref{table:offlineStats}). 
The calibration of the junction tree in principle represents a preliminary step for \indsep, \ouralgorithm and \ouralgorithmplus which take as input the calibrated junction tree. 
However, the calibration process is only necessary to obtain correct answers to inference queries, and it is not needed for comparing exact query processing costs using different materialization strategies which is the goal of our experimental evaluation. Therefore, such datasets are also considered in the experiments. In this case,   \indsep, \ouralgorithm and \ouralgorithmplus take as input the uncalibrated junction tree and, as a consequence, query answers contain erroneous probability values. Nonetheless, the computational burden associated with query processing is not affected. 

\spara{Query workloads.} 
For the purpose of our study, 
as we do not have access to historical query workloads for the benchmark datasets we use
so as to estimate query probabilities, 
we propose to sample queries using a random mechanism.

In the \emph{skewed workload}, variables are sampled from a distribution skewed towards the leaves of the tree, i.e., variables have probabilities of being queried for proportional to their distance from the pivot of the junction tree. 
We generate $N_q = 3\,000$  skewed queries, 
$2\,000$ of which are used to estimate benefits, 
and the remaining $1\,000$ are used for assessment of the materialization strategies.

In the \emph{uniform workload}, variables are sampled uniformly at random. This ensures fairness in the comparison between materialization of junction trees and \qtm{n}. 
In this case, we use \ReviewOnly{\revisioncol{the same}} $N_q = 250$ queries,  all of which contain $1$ to $5$ query variables, for both optimization and evaluation. 
\revisioncol{As shown in Section~\ref{sec:query_size}, the impact of parameter $N_q$ on materialization performance is minor.}
%
\FullOnly{We also evaluate the robustness of \ouralgorithm with respect to drifts in the query distribution, 
but due to space limitations these results are given in supplementary.\footnote{\url{\supplink}}}

\begin{table}[t]
\begin{small}
	\centering
	\caption{\revisioncol{ Offline running times (seconds) for \ouralgorithm and  \ouralgorithmplus (in parenthesis) with different approximation levels and \indsep. }}
	\label{tab:offline_running_times}
	\begin{tabular}{lrrrr}
		\toprule
		\multicolumn{1}{l}{dataset} & \multicolumn{1}{r}{$\epsilon = 1.2 $} & 
		\multicolumn{1}{r}{$\epsilon = 6 $} & \multicolumn{1}{r}{$\epsilon = 12 $}  & \multicolumn{1}{r}{\indsep}  \\
		\midrule
		\textsc{Child} & $0.028\,\revisioncol{(0.020)}$ & $0.024\,\revisioncol{(0.022)}$ & $0.016\,\revisioncol{(0.015)}$ & $0.004$ \\
		\textsc{Hepar II} & $2.88\,\revisioncol{(2.81)}$ & $1.79\,\revisioncol{(1.75)}$ & $1.30\,\revisioncol{(1.27)}$ & $0.086$ \\
		\textsc{Andes} & $4.7$\,K \revisioncol{$(4.6$}\revisioncol{\,K}\revisioncol{$)$} & $252\,\revisioncol{(238)}$ & $196\,\revisioncol{(162)}$ & $2.19$ \\ 
		\textsc{Hailfinder} & $6.71\,\revisioncol{(5.89)}$ & $0.46\,\revisioncol{(0.38)}$ & $0.32\,\revisioncol{(0.29)}$ & $0.039$ \\
		\textsc{TPC-H} & $16.93\,\revisioncol{(12.23)}$ & $0.44\,\revisioncol{(0.37)}$ & $0.24\,\revisioncol{(0.21)}$ & $0.025$ \\
		\textsc{Munin} & $8.8$\,K $\revisioncol{(8.4}$\revisioncol{\,K}$\revisioncol{)}$ & $140\,\revisioncol{(138)}$ & $99.41\,\revisioncol{(99.03)}$ & $2.53$ \\ 
		\textsc{PathFinder} & $80.42\,\revisioncol{(73.22)}$ & $2.76\,\revisioncol{(2.71)}$ & $2.18\,\revisioncol{(2.15)}$ & $0.25$  \\
		\textsc{Barley} & $86.30\,\revisioncol{(78.44)}$ & $1.67\,\revisioncol{(1.63)}$ & $1.027\,\revisioncol{(1.018)}$ & $0.030$  \\
		\bottomrule
	\end{tabular}
\end{small}
\end{table}

\spara{Cost values.} 
\ReviewOnly{\revisioncol{The standard junction-tree method is prohibitively time-consuming for several datasets; note ``NA'' entries in Table~\ref{table:offlineStats}. To circumvent this limitation, we measure the performance of a materialization using the total number of operations required to process a query workload. }}
In more detail, given a query \query, a cost value $c(v)$ is assigned 
to all nodes of the Steiner tree \steinertree associated with \query, 
which could be either cliques or shortcut potentials replacing part of the original Steiner tree \steinertree. 
The cost $c(v)$ is the number of operations required to compute the message sent to $\parent_{v}$, 
or to compute the final answer to the query if $v$ coincides with the pivot of \steinertree. 
The cost of \query is then the sum of the cost values over the nodes in \steinertree. 
We confirm empirically that the assigned cost values 
align almost perfectly with the corresponding execution times, 
as shown in Figure~\ref{fig:correlation_plot},  
where we show the running time against the cost for a set of queries, 
for three datasets. 
In Figure~\ref{fig:correlation_plot}\revisioncol{,} we additionally display the Pearson correlation coefficient.  
The extremely strong correlation clearly provides empirical support for using our cost values 
to assess the performance of a materialization. 

\spara{Experiment parameters.} 
Next we discuss the choice of parameters
for space budget \budget, and approximation~$\epsilon$.

\vspace{1mm}
\noindent
\emph{Budget.} 
First we note that the two materialization methods we evaluate, 
\ouralgorithm and \indsep, 
do not always make use of the full available budget, 
as is typically the case with such \emph{discrete-knapsack} constraints\ReviewOnly{\revisioncol{,}}\FullOnly{.}
\ReviewOnly{\revisioncol{both because the node-disjointess constraint may limit the space that can be materialized and because of the strongly-polynomial approximation, if used. }}
We thus make a distinction between the \emph{target budget}~\budget, 
i.e., the value of the budget constraint, 
and the \emph{actual budget}, 
i.e., the materialized space used by a method.
Indicatively, differences between target and actual budget are demonstrated 
in Figure~\ref{fig:materialized_vs_target_budget} for \ouralgorithm, 
for different values of $\epsilon$ and a subset of datasets.
It is evident that large values of $\epsilon$ only take up a small part of the target budget, and, 
in general, 
the smaller the value of $\epsilon$, the closer the materialization space is to the target one.  
Moreover, the target budget \budget is expressed differently for \ouralgorithm and \indsep. 
In the case of \ouralgorithm, 
the target budget is specified directly as part of the input.
The larger the target budget, the larger the materialized budget is expected to be. 
Instead, in \indsep,  
the materialized budget is determined by the block size 
and cannot be directly controlled. 
Unlike \ouralgorithm and \indsep, \ouralgorithmplus can control the actual budget, which is greedily filled. This allows to compare 
\ouralgorithmplus and \indsep at (approximately) parity budget. 
In skewed workload experiments, for \indsep we consider a large set of possible block-sizes 
$\{10, 20, 50, 100, 150,\allowbreak 500,\allowbreak 1000,\allowbreak 5 \times 10^3,\allowbreak 5 \times 10^4,\allowbreak 5 \times 10^5,\allowbreak 5\times 10^6\}$\revisioncol{,} 
and we choose the three values leading to the minimum, maximum, and median materialization space. 
For \ouralgorithm we consider three different target budgets, 
namely, $\{ b_{T} /10,\allowbreak  10\,b_{T},\allowbreak  10000\,b_{T} \}$, 
where $b_{T}$ is the total potential size of the separator in \tree. 
In uniform workload experiments, for \indsep we consider block-size  $10^3$,  and similarly for \ouralgorithm and \ouralgorithmplus we consider target budget $1000\,b_{T}$.





\vspace{1mm}
\noindent
\emph{Approximation.} 
For \ouralgorithm and \ouralgorithmplus, 
the user-specified variable $\epsilon$ that 
controls the trade-off between solution quality and running time
is an important parameter.
Setting $\epsilon = 1$ gives the optimal solution, 
but it will be time-consuming.  
We show results for $\epsilon \in \{1.2, 6, 12\}$.  
Table~\ref{tab:offline_running_times} shows, for the skewed workload, \ouralgorithm \ReviewOnly{\revisioncol{~and \ouralgorithmplus}} wall-clock time
for finding the optimal materialization and \indsep wall-clock time
to build the \indsep data structure. 
Here, the budget is fixed to $\frac{1}{10}b_{T}$ for \ouralgorithm, and similarly, 
to the smallest budget used for \indsep.
Note that \ouralgorithm\  \revisioncol{  and \ouralgorithmplus } are usually much slower than \indsep 
in choosing the shortcut potentials to materialize.   
However, the offline overhead of our approach is 
counterbalanced by its superior performance during the online query-processing phase, 
which is the focus of our work.

\begin{figure*}[t]
	\centering
	 \includegraphics[width=1\columnwidth, height = 0.8\textheight, keepaspectratio]{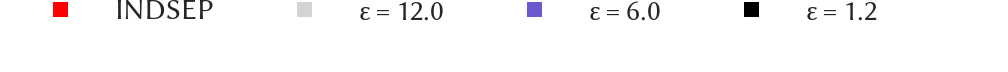}\\
		\begin{tabular}{cccc}
		\hspace{-5mm} \includegraphics[width=0.53\columnwidth, height = 0.63\textheight, keepaspectratio]{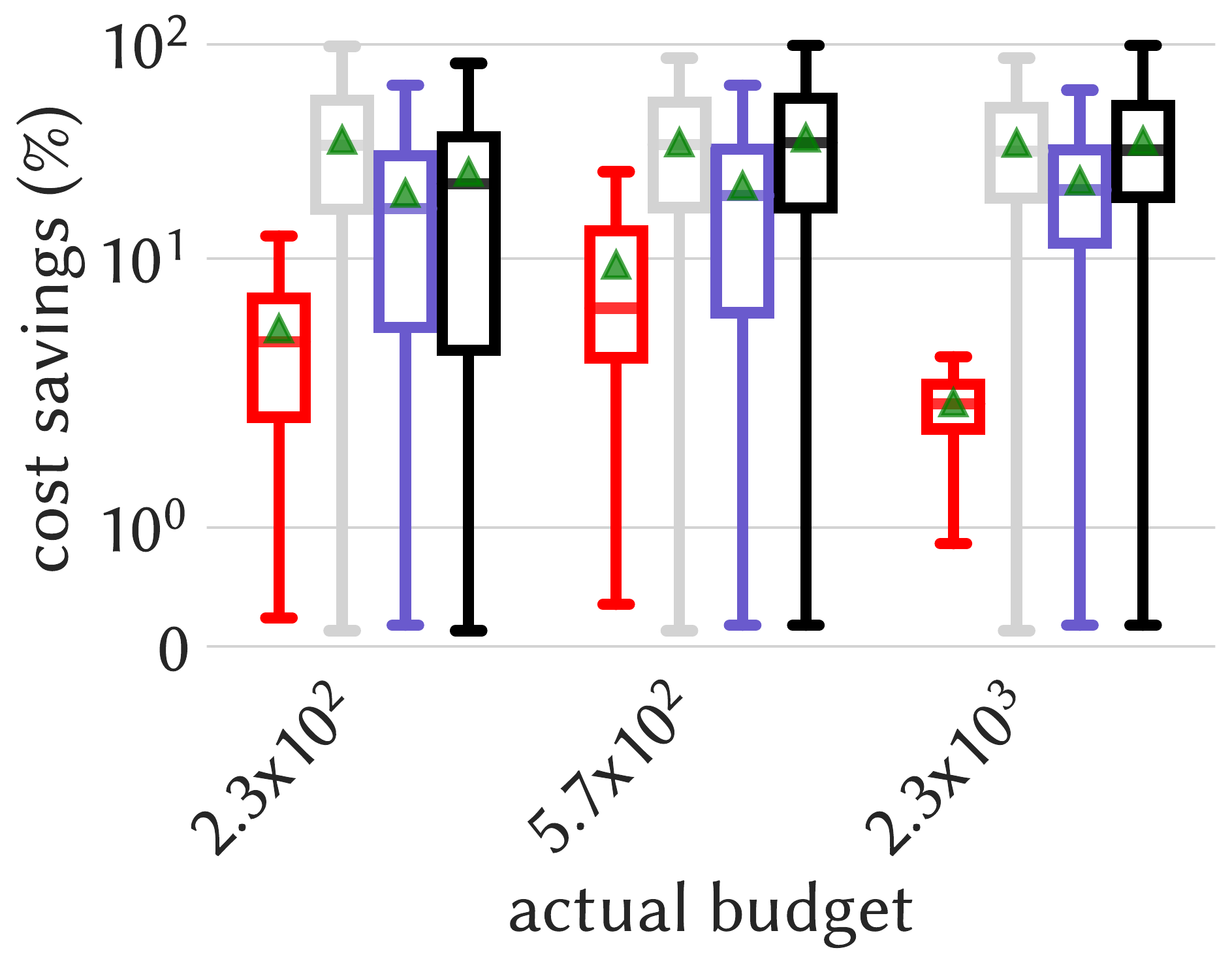}&
		\hspace{-5mm} \includegraphics[width=0.53\columnwidth, height = 0.63\textheight, keepaspectratio]{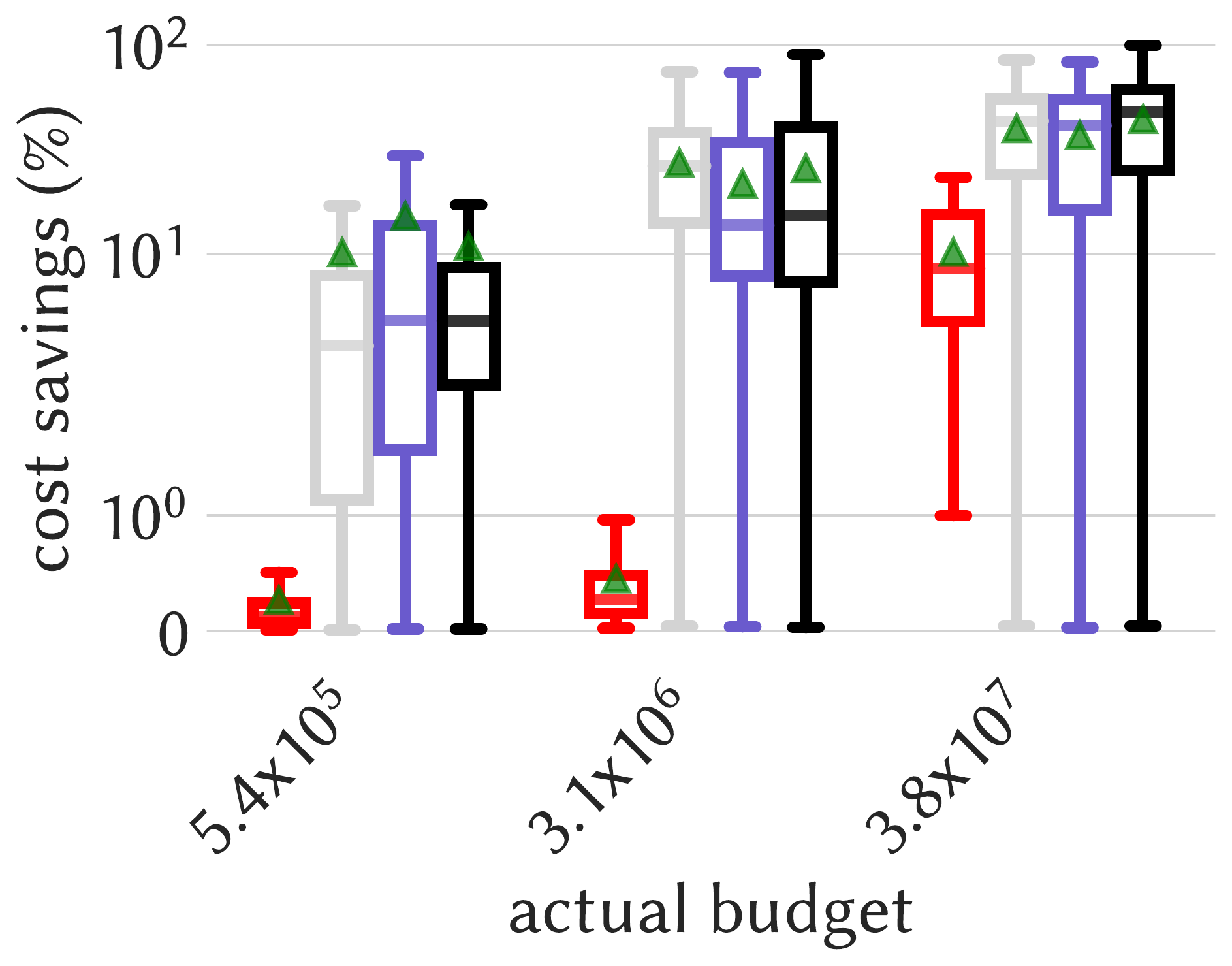}&
		\hspace{-5mm} \includegraphics[width=0.53\columnwidth, height = 0.63\textheight, 		keepaspectratio]{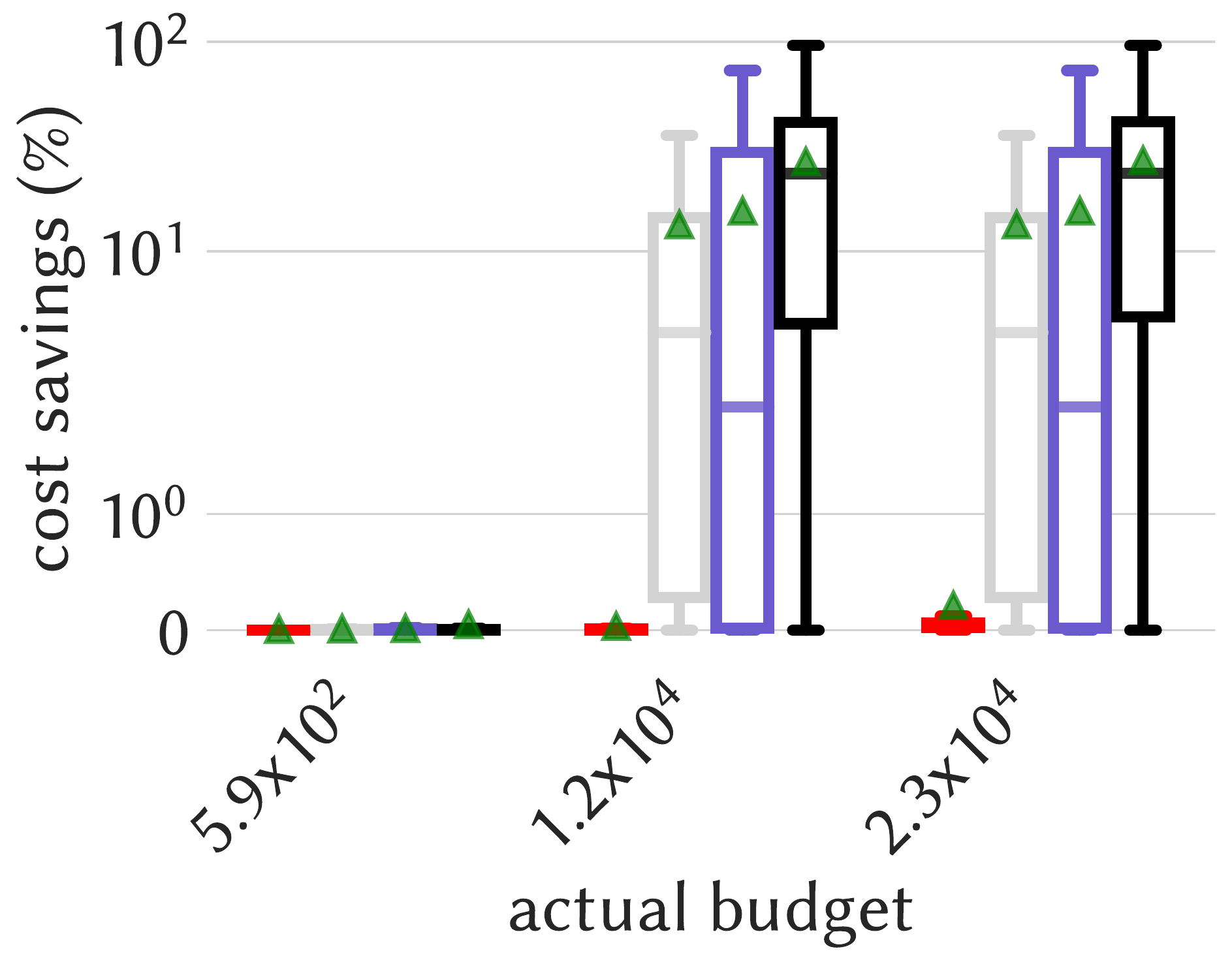}&
		\hspace{-5mm} \includegraphics[width=0.53\columnwidth, height = 0.63\textheight, keepaspectratio]{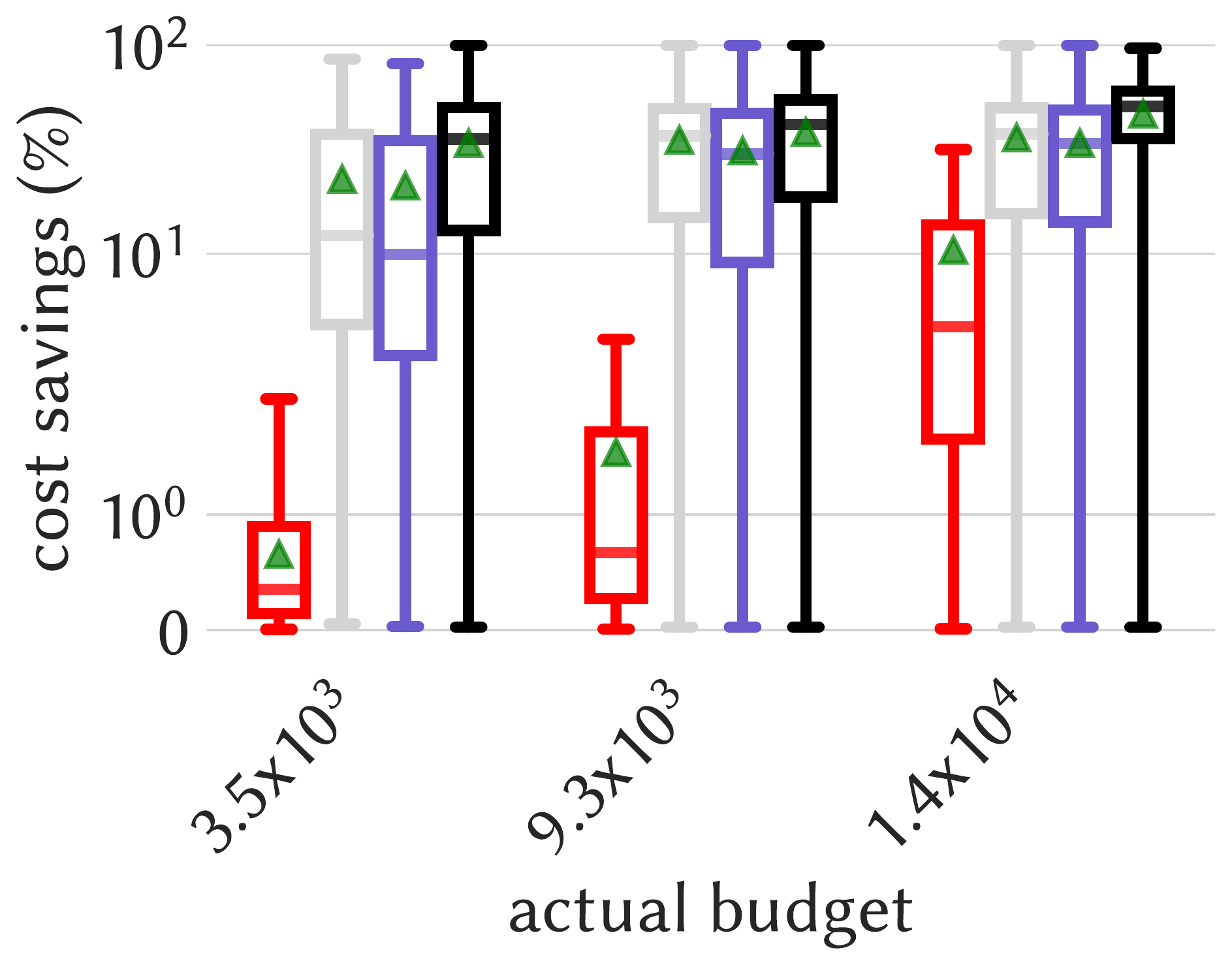}\\
		\textsc{Child} &  \textsc{Hepar II} & \textsc{Andes} &  \textsc{Hailfinder} \vspace{0mm}\\
		\hspace{-5mm} \includegraphics[width=0.53\columnwidth, height = 0.63\textheight, keepaspectratio]{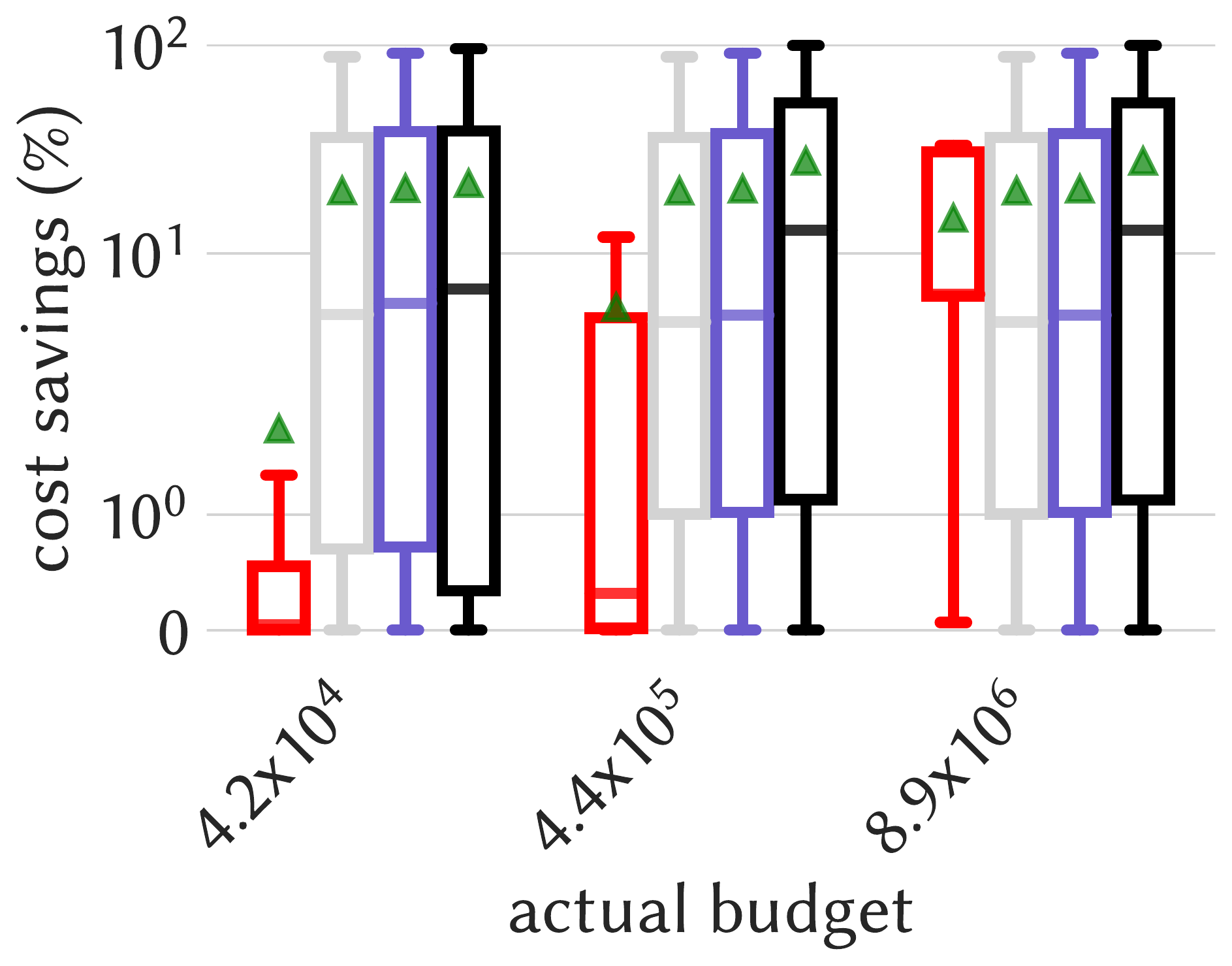}&
		\hspace{-5mm} \includegraphics[width=0.53\columnwidth, height = 0.63\textheight, keepaspectratio]{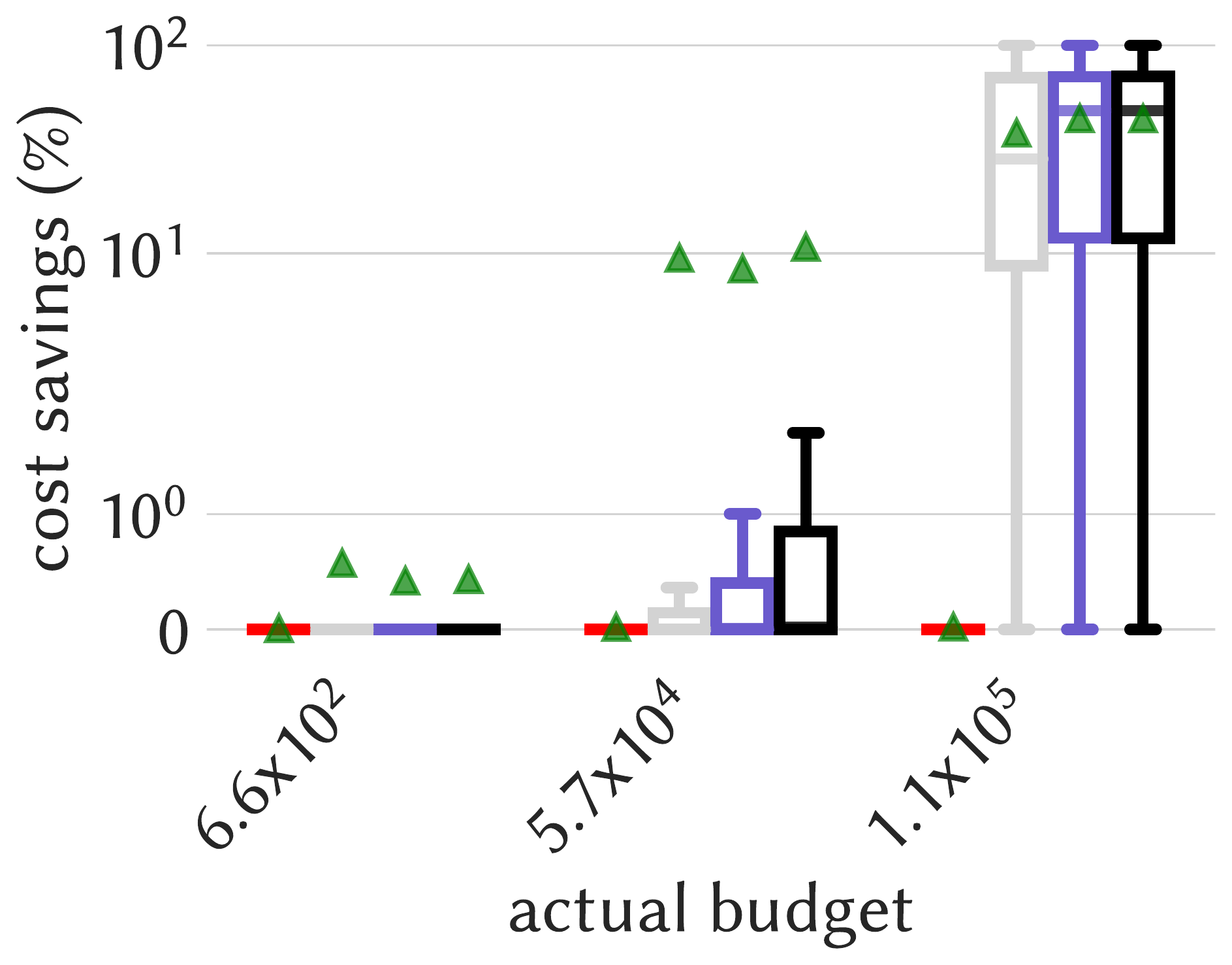}&
		\hspace{-5mm} \includegraphics[width=0.53\columnwidth, height = 0.63\textheight, keepaspectratio]{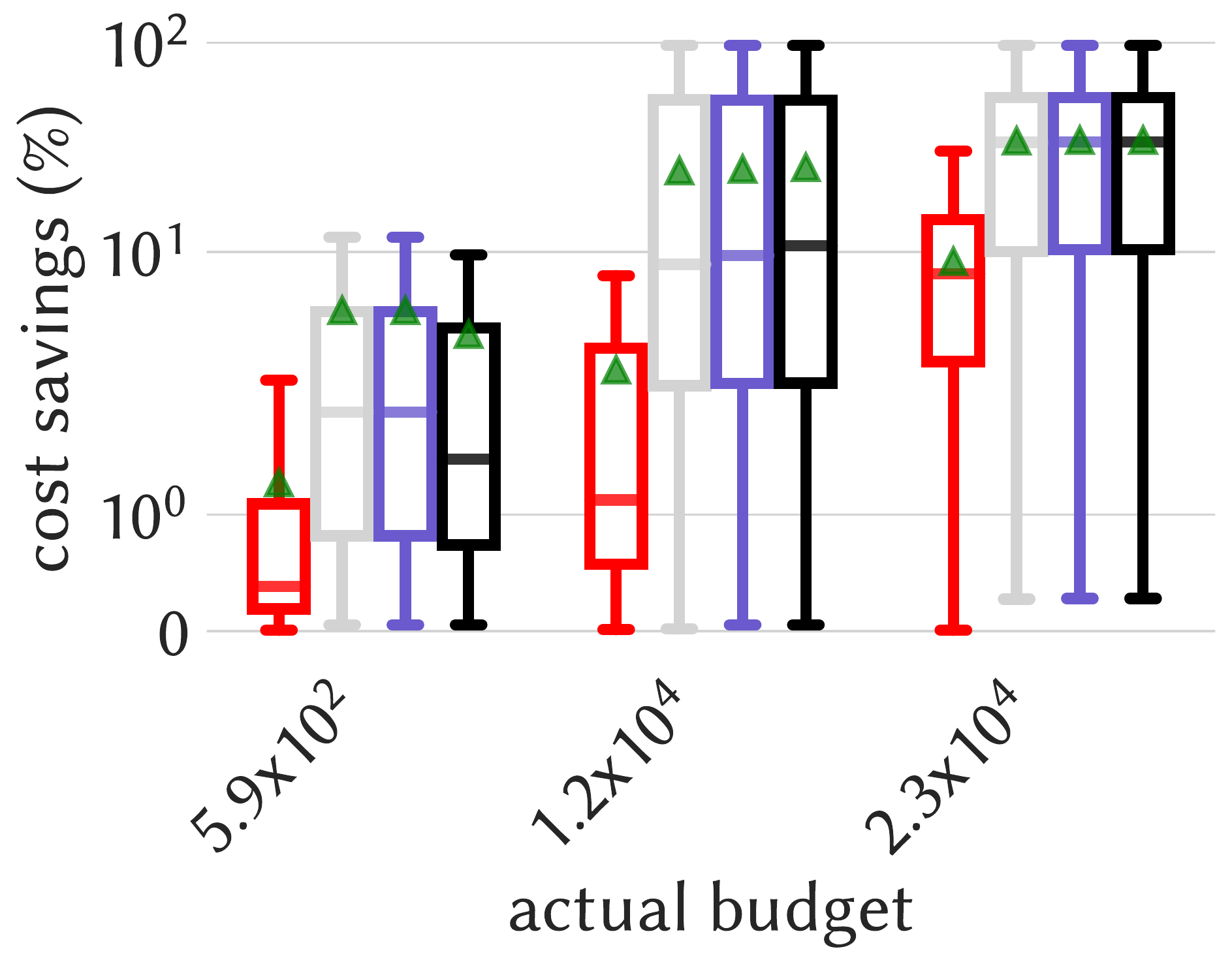}&
		\hspace{-5mm} \includegraphics[width=0.53\columnwidth, height = 0.63\textheight, keepaspectratio]{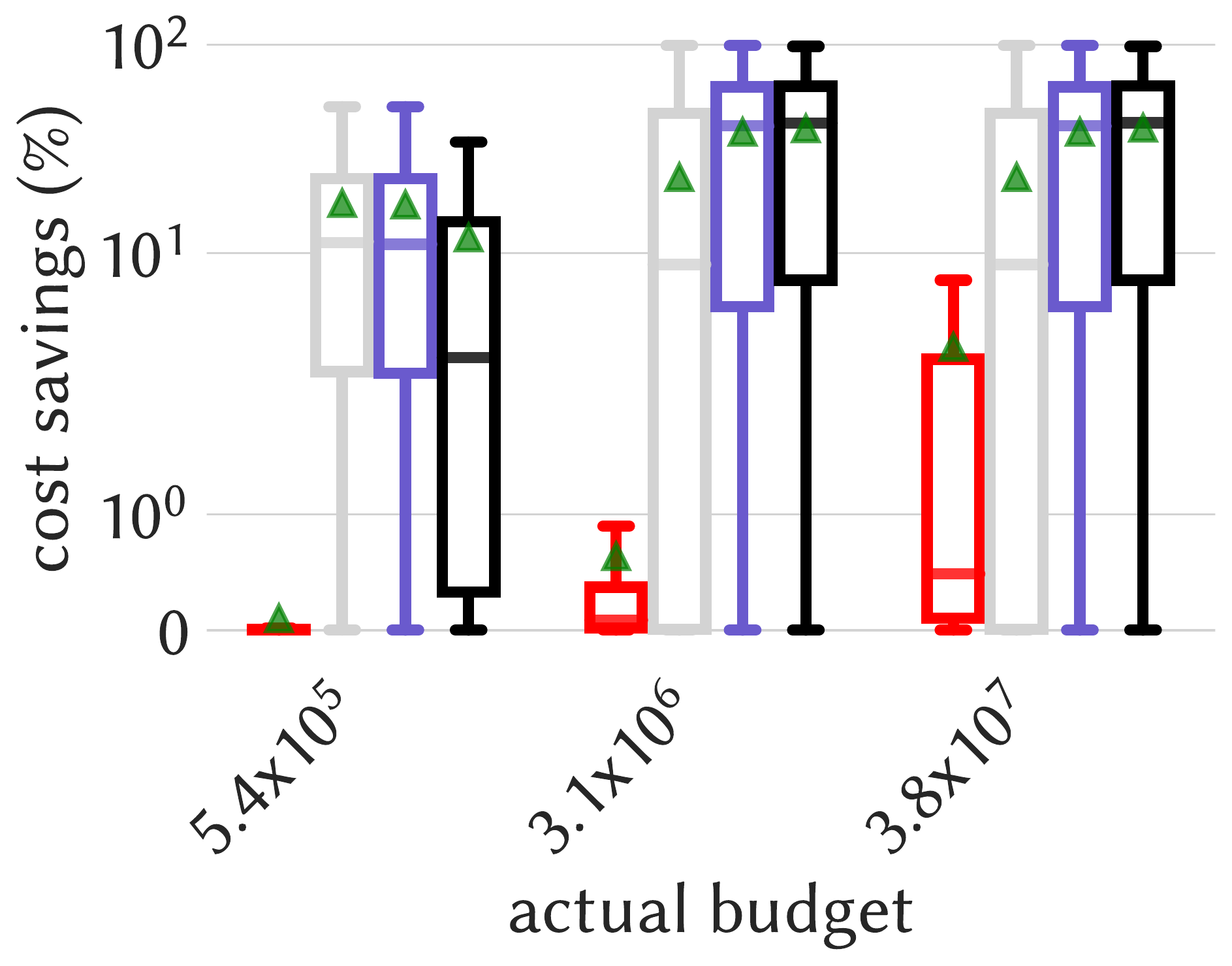}\\
		\textsc{TPC-H} &  \textsc{Munin} & \textsc{Pathfinder} &  \textsc{Barley} \vspace{0mm}\\
	\end{tabular}
	\hspace{-5mm}	\caption{\label{fig:distributions_cost_savings} Distribution of cost savings percentage against materialized budget for \indsep and	\ouralgorithmplus with different approximation levels. The average of the distribution is displayed in green.  The $y$-axis is on a logarithmic scale.    }	
\end{figure*}

\begin{figure*}[t]
		\centering
		\includegraphics[width=1.4\columnwidth, height = 1.2\textheight, keepaspectratio]{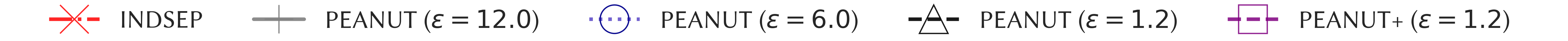}\\
		\begin{tabular}{cccc}
		\hspace{-5mm} \includegraphics[width=0.53\columnwidth, height = 0.63\textheight, keepaspectratio]{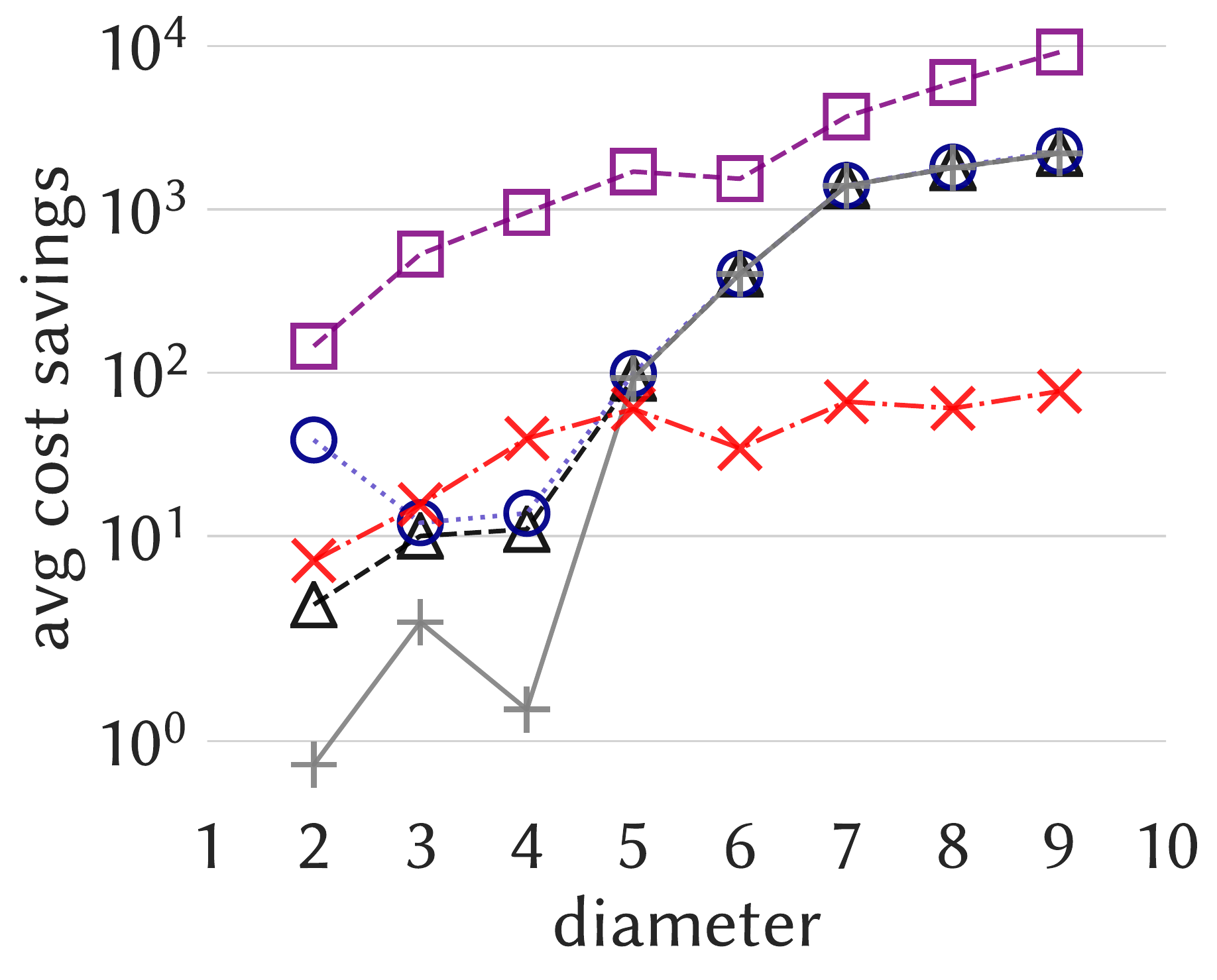}&
	\hspace{-5mm} \includegraphics[width=0.53\columnwidth, height = 0.63\textheight, keepaspectratio]{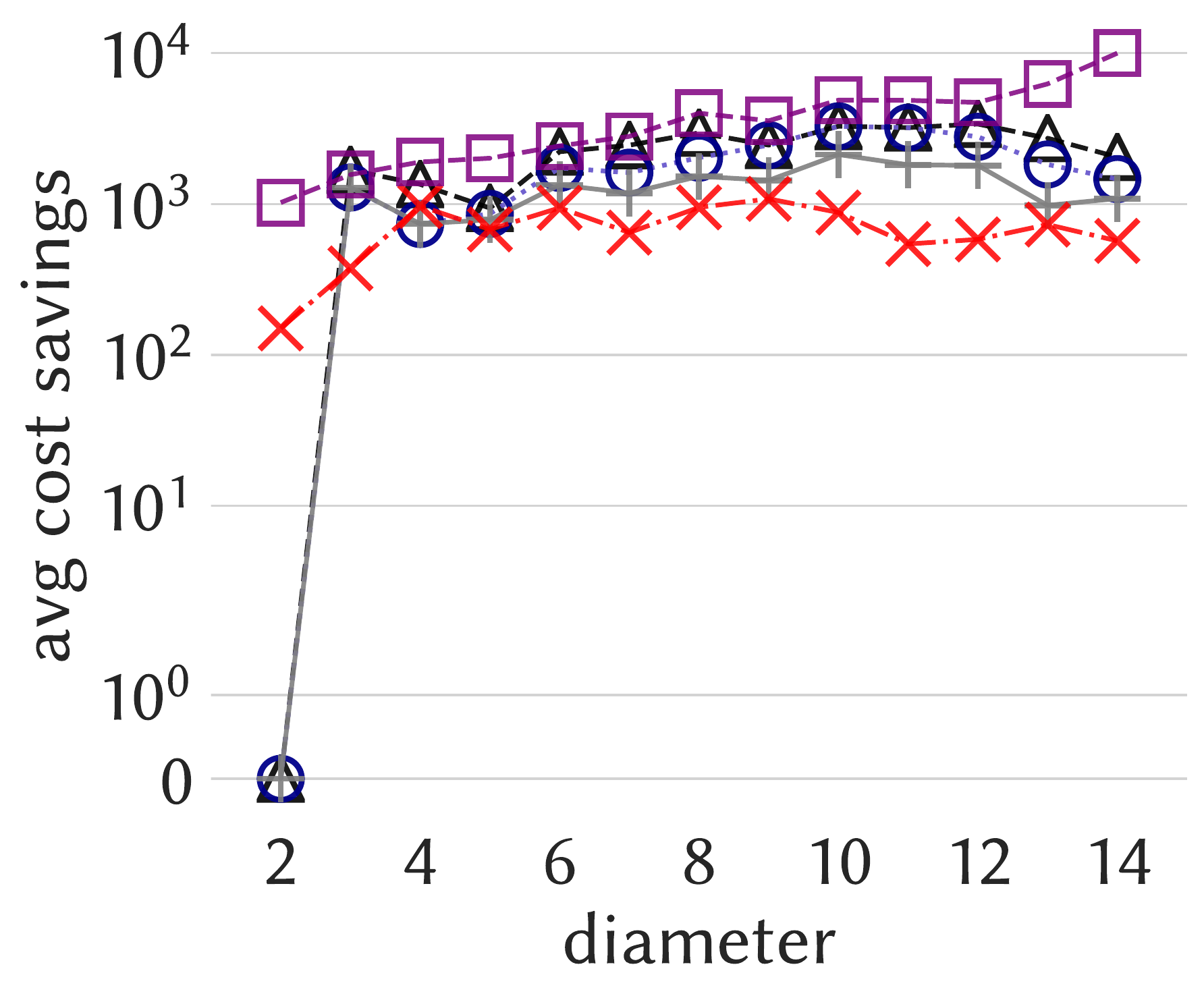} &
		\hspace{-5mm} \includegraphics[width=0.53\columnwidth, height = 0.63\textheight, keepaspectratio]{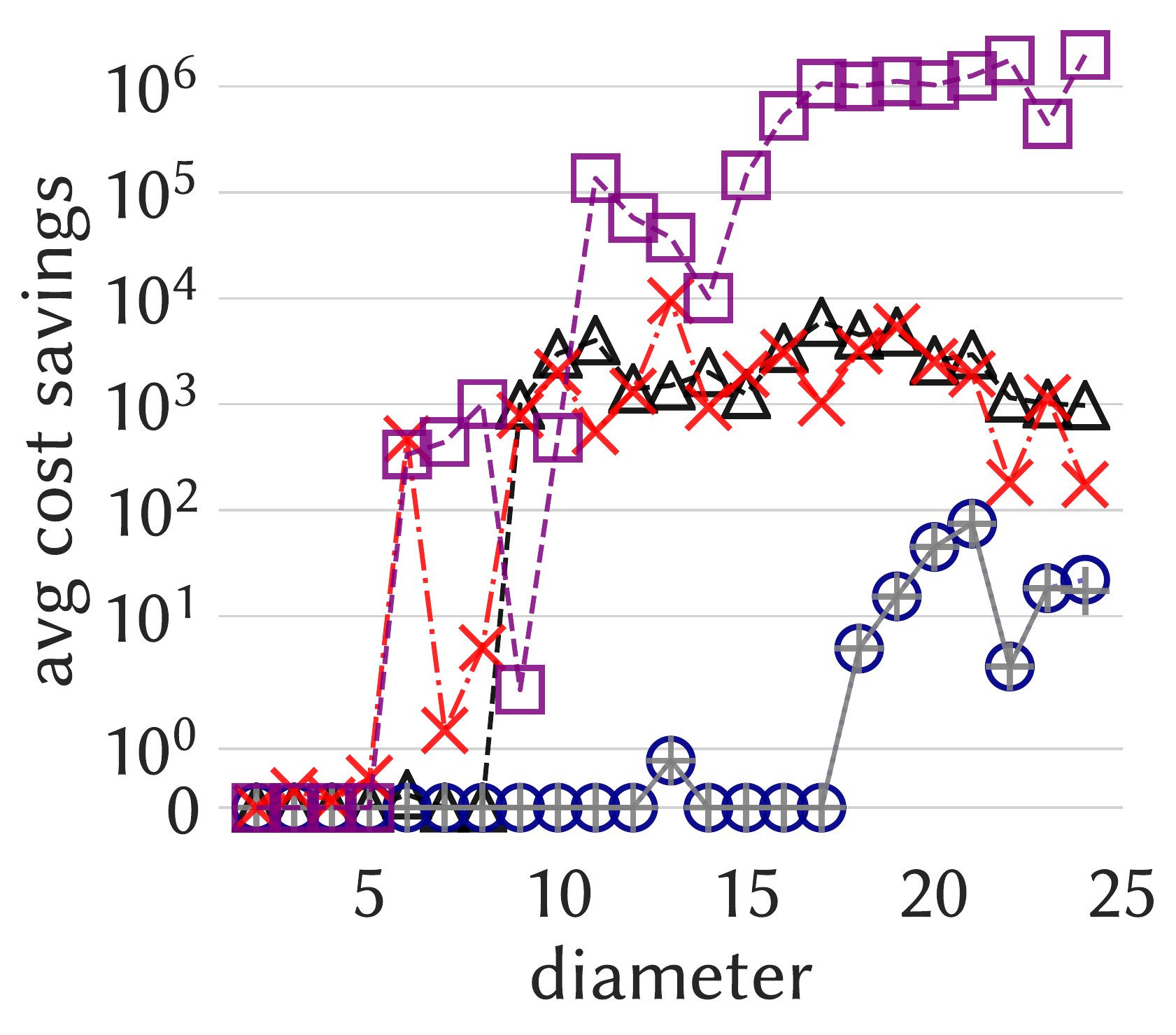}& 
		\includegraphics[width=0.52\columnwidth, height = 0.5\textheight, keepaspectratio]{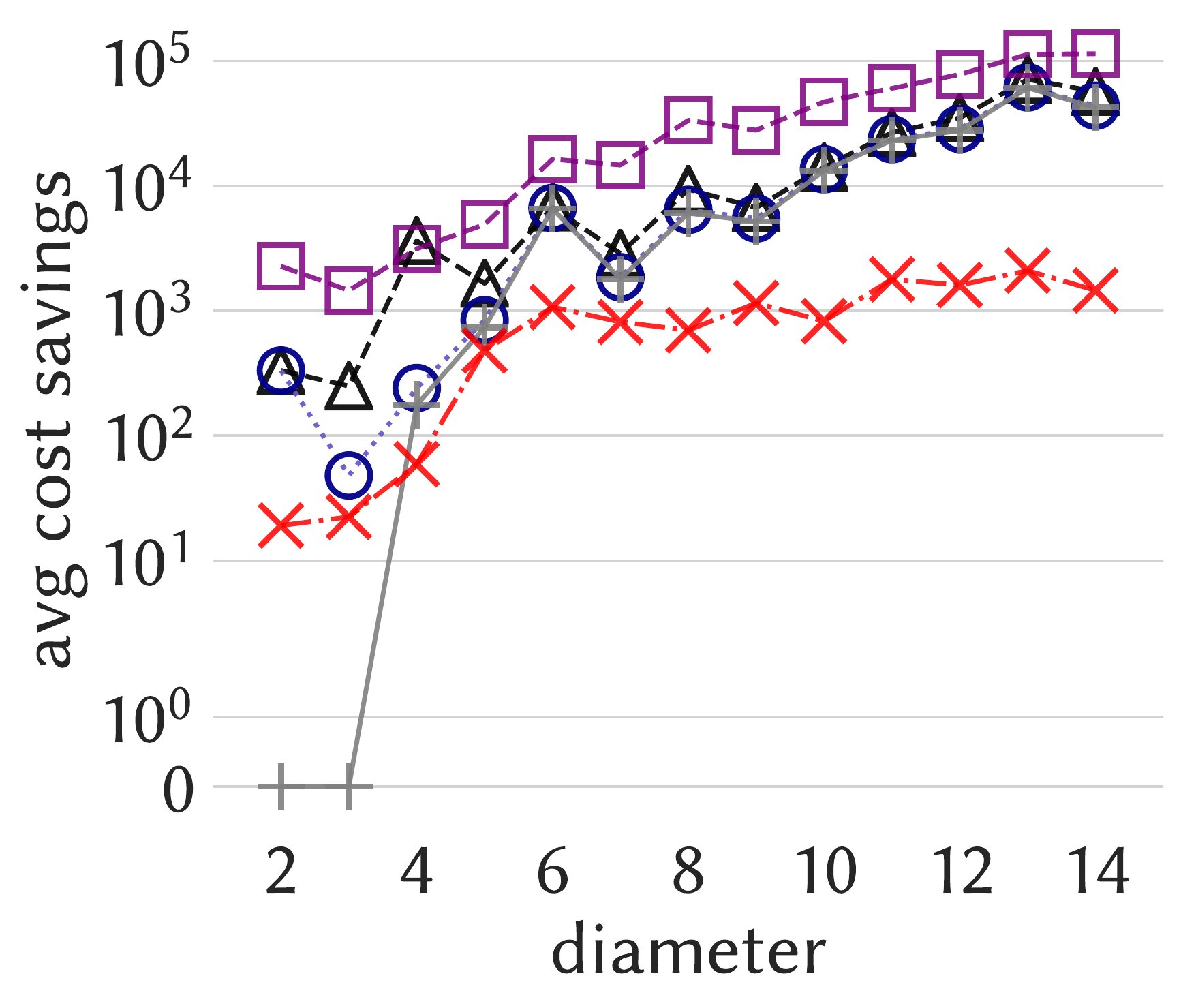} \\
		\textsc{Child} & \textsc{Hepar II} & \textsc{Andes} & \textsc{HailFinder} \vspace{-1mm}\\
		\hspace{-5mm} \includegraphics[width=0.53\columnwidth, height = 0.63\textheight, keepaspectratio]{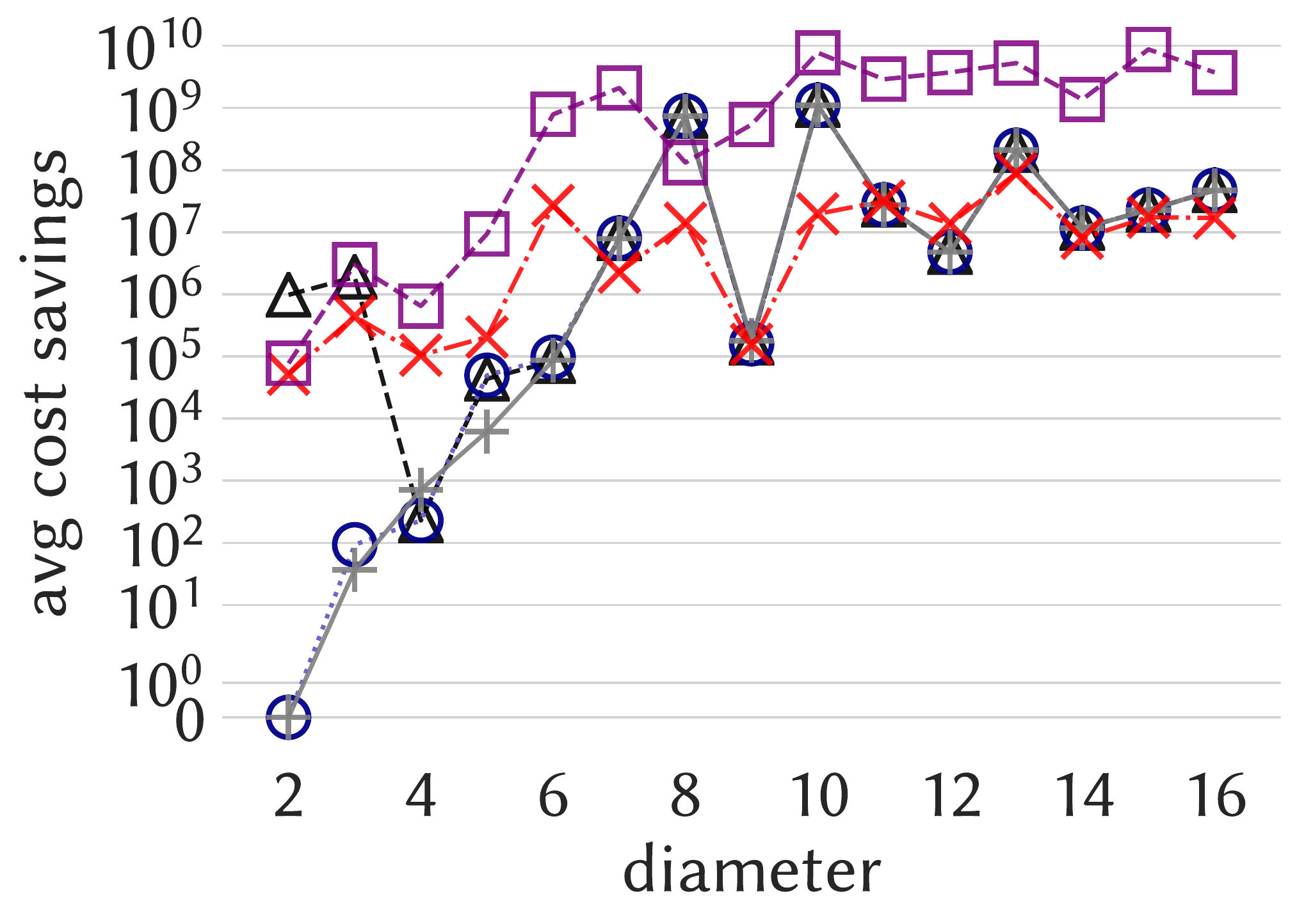}&
		\hspace{-5mm} \includegraphics[width=0.53\columnwidth, height = 0.53\textheight, keepaspectratio]{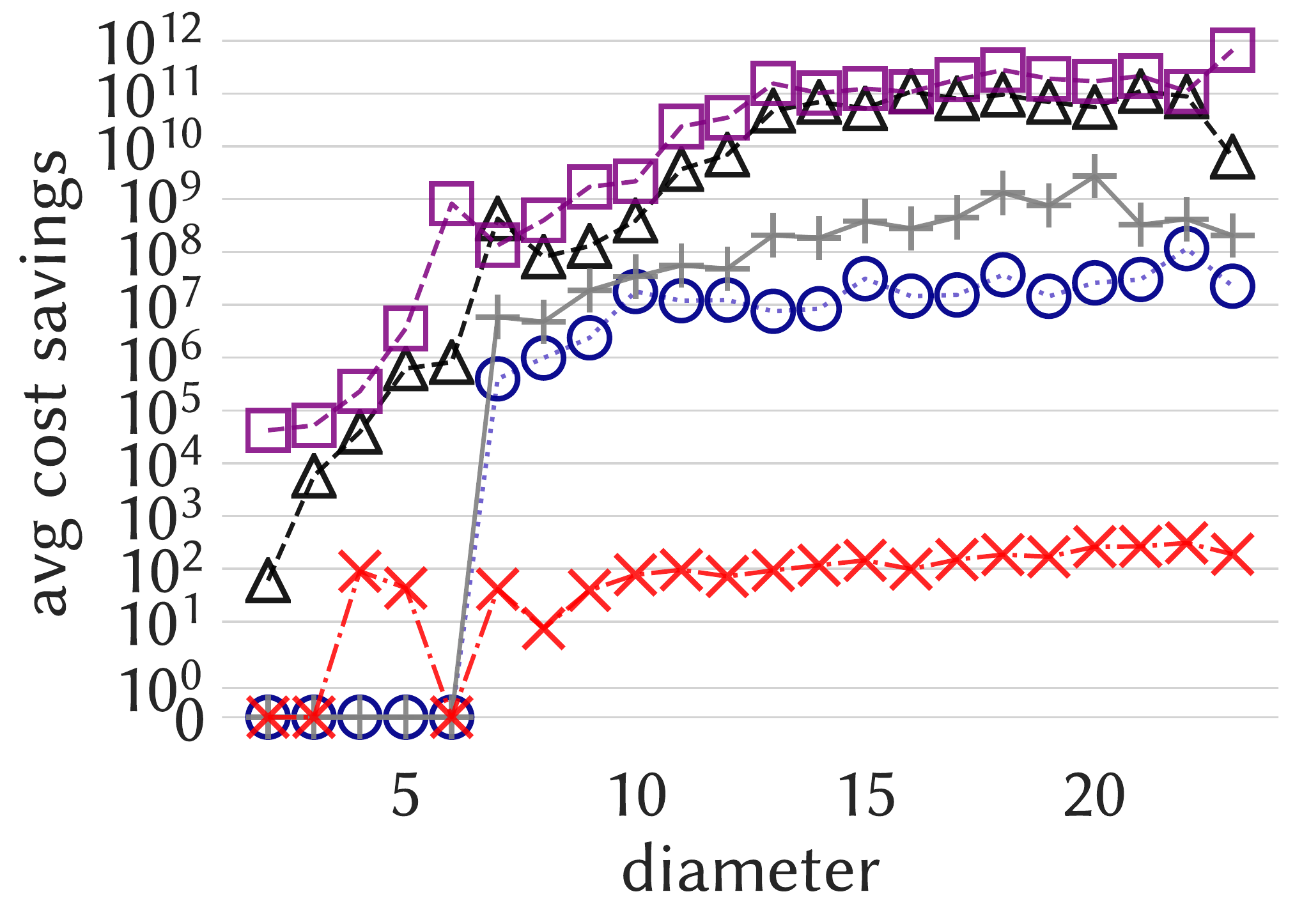} &
		\hspace{-5mm} \includegraphics[width=0.53\columnwidth, height = 0.63\textheight, keepaspectratio]{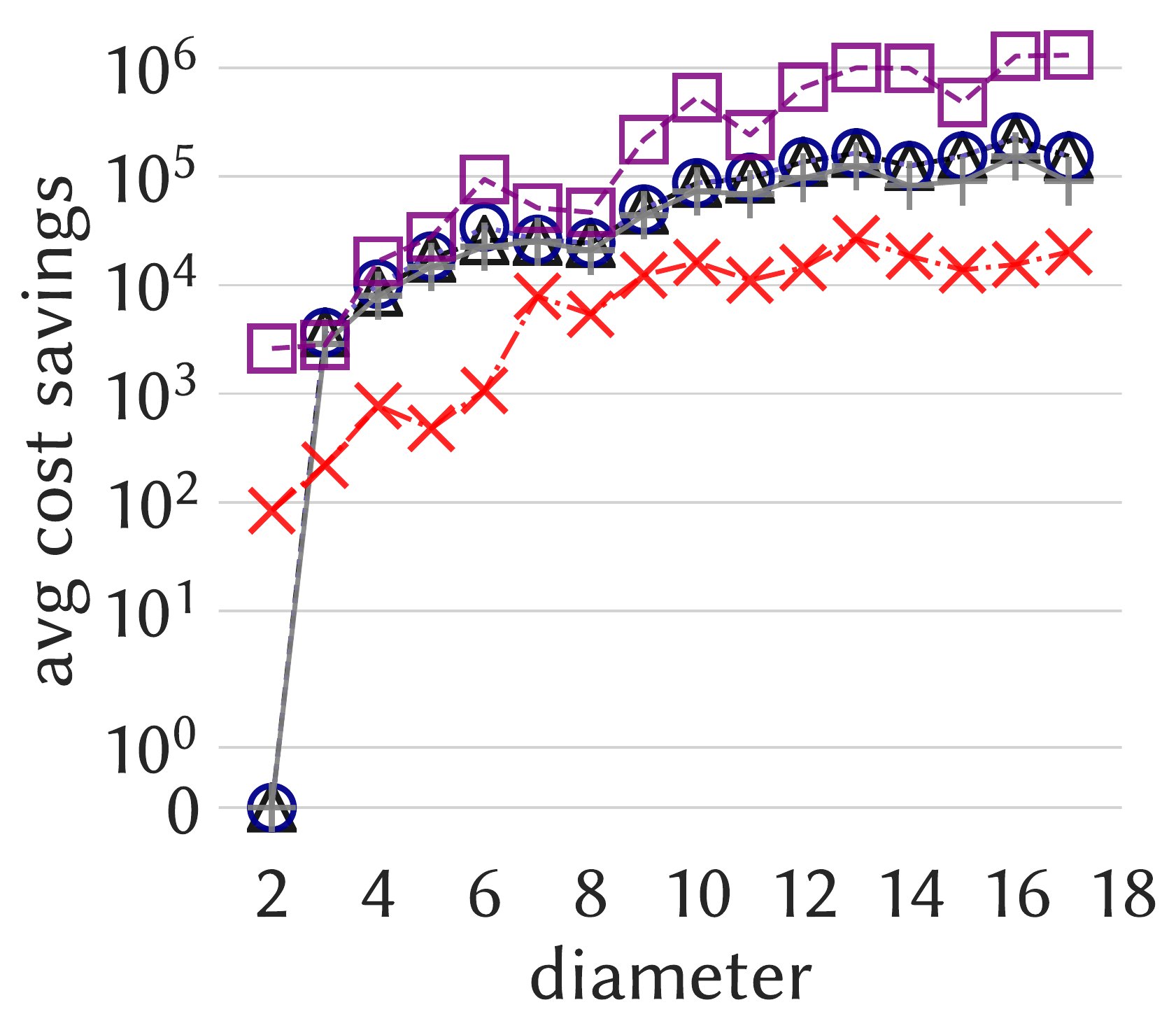} & 
		\hspace{-5mm} \includegraphics[width=0.53\columnwidth, height = 0.53\textheight, keepaspectratio]{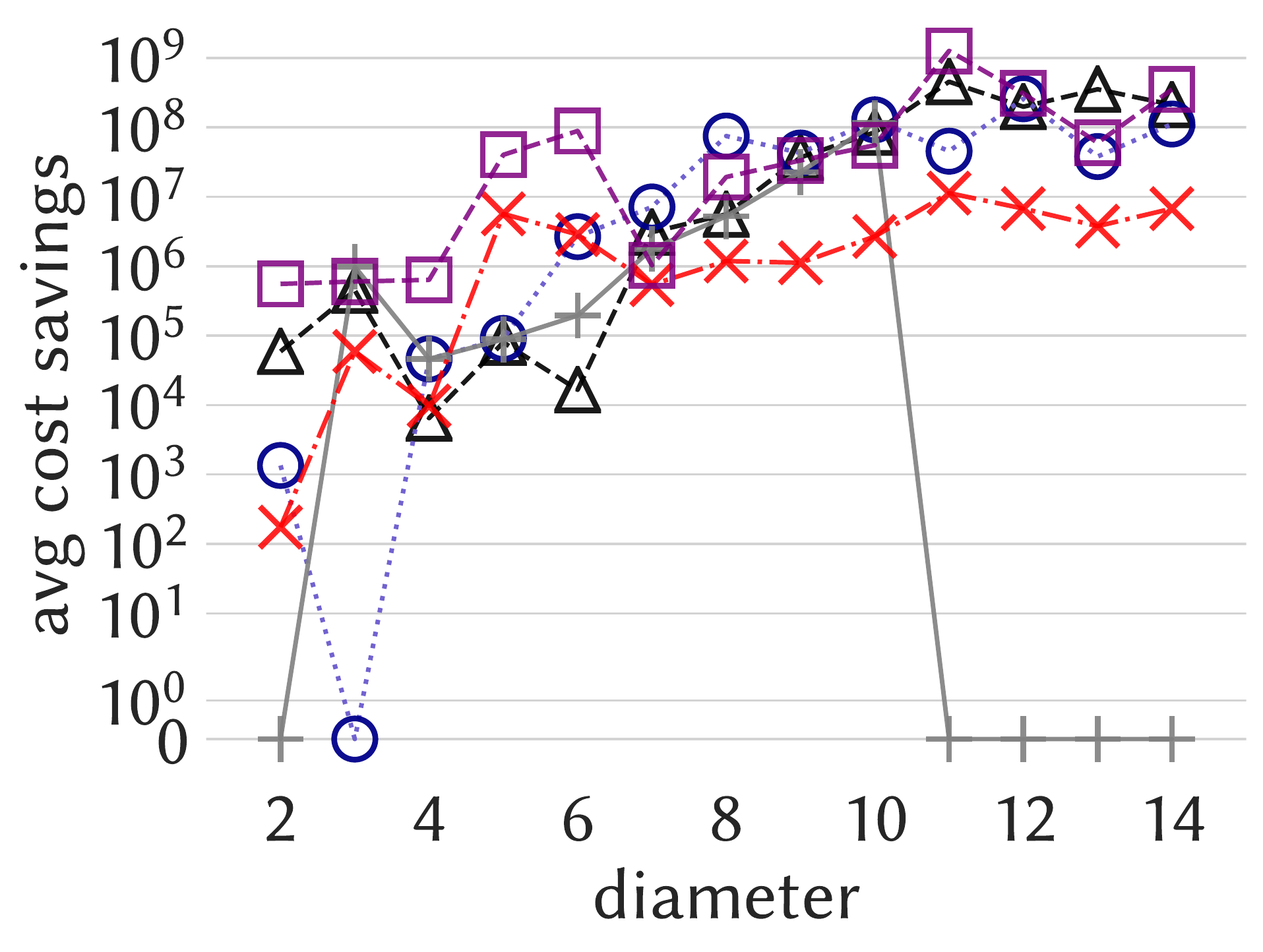} \\
		\textsc{TPC-H} &  \textsc{Munin} & \textsc{Pathfinder} &  \textsc{Barley} \vspace{0mm}\\ 
	\end{tabular}
	\hspace{-5mm} \caption{\label{fig:diameter} 
		\revisioncol{Average cost savings against Steiner-tree diameter for \indsep, \ouralgorithmplus and \ouralgorithm with different approximation levels. 
		The $y$-axis is on a logarithmic scale.}  
	}
\end{figure*}

\begin{figure*}[t]
	\centering
	\includegraphics[width=1\columnwidth, height = 0.8\textheight, keepaspectratio]{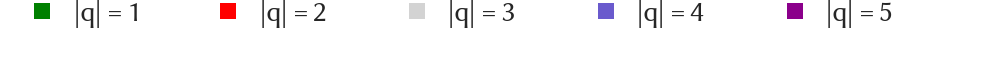}\\
	\begin{tabular}{cccc}
		\hspace{-5mm} \includegraphics[width=0.53\columnwidth, height = 0.63\textheight, keepaspectratio]{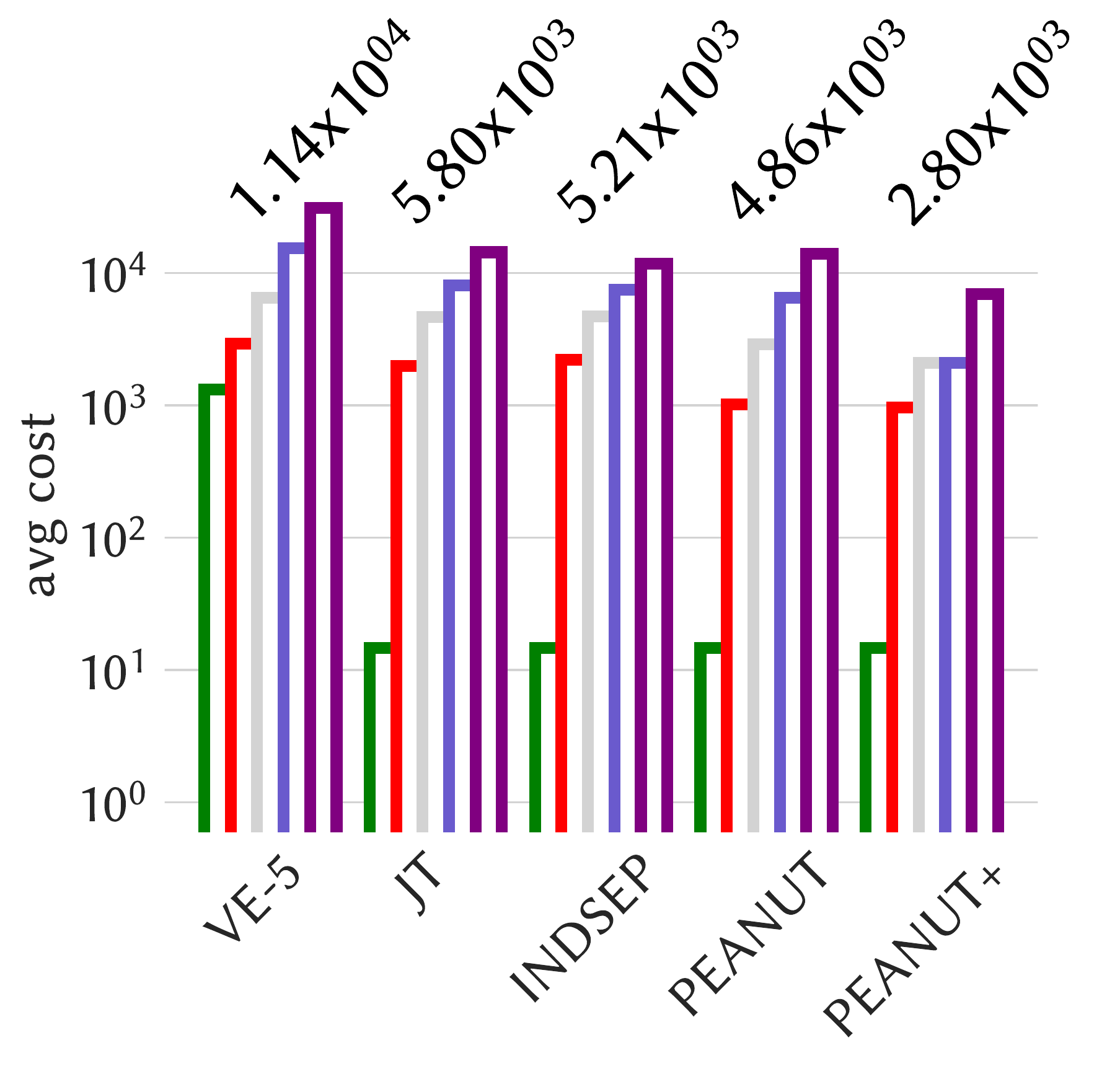}&
	   \hspace{-5mm} \includegraphics[width=0.53\columnwidth, height = 0.63\textheight, keepaspectratio]{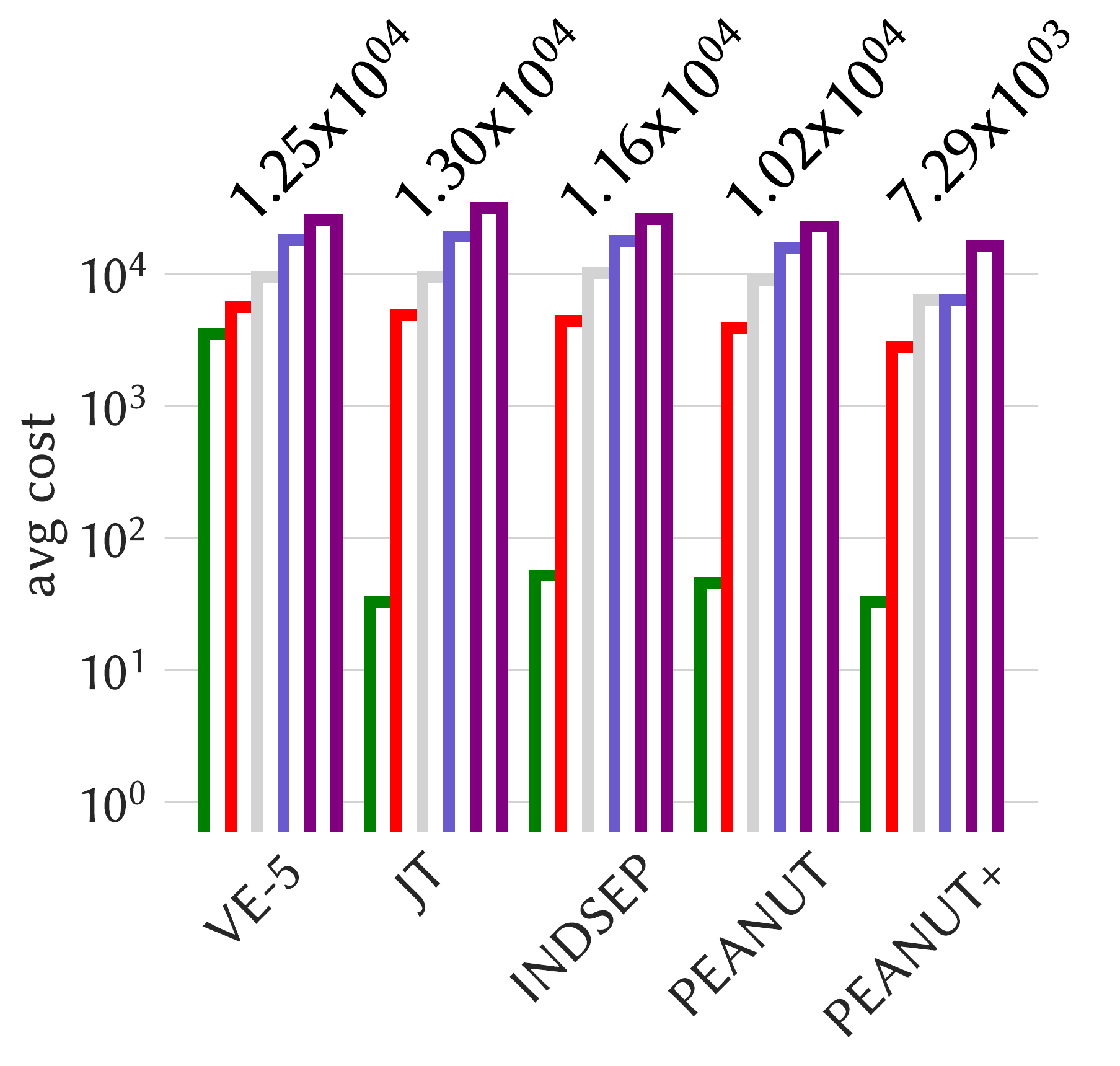}&
		\hspace{-5mm} \includegraphics[width=0.53\columnwidth, height = 0.63\textheight, keepaspectratio]{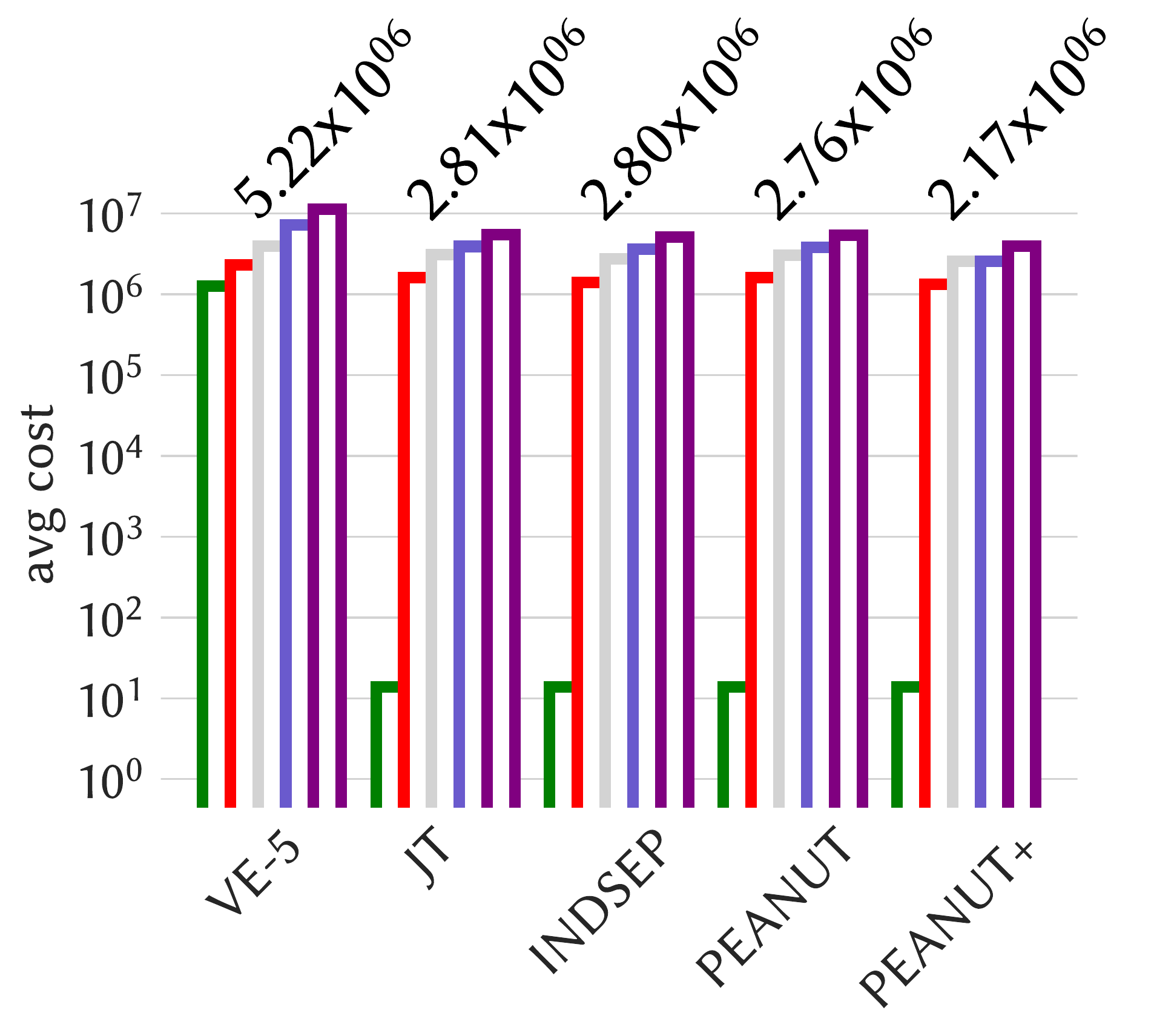}&
		\hspace{-5mm} \includegraphics[width=0.53\columnwidth, height = 0.63\textheight, keepaspectratio]{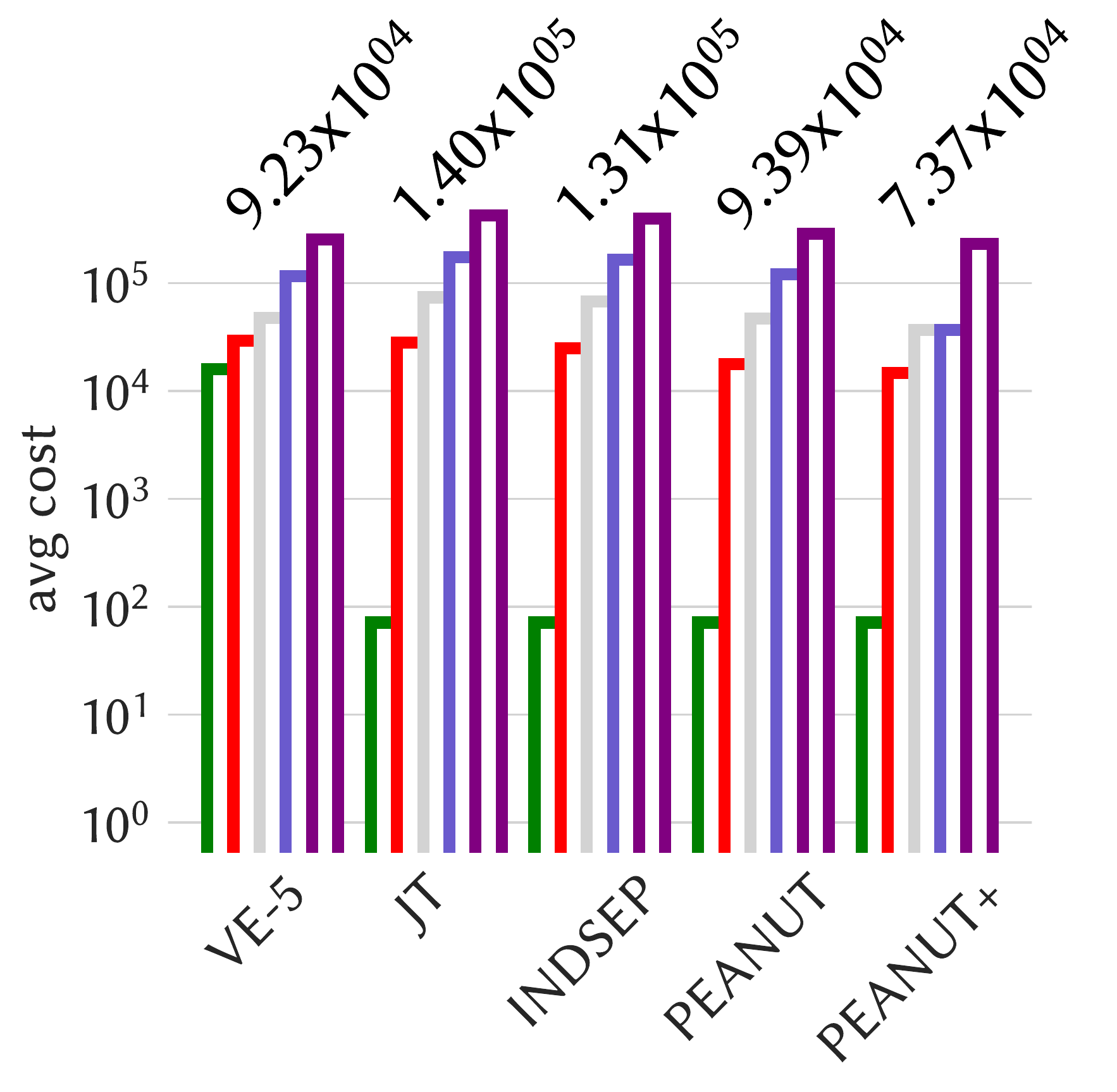}\\
	    \textsc{Child} &  \textsc{Hepar II} & \textsc{Andes} & \textsc{HailFinder} \vspace{-1mm}\\
		\hspace{-5mm} \includegraphics[width=0.53\columnwidth, height = 0.63\textheight, keepaspectratio]{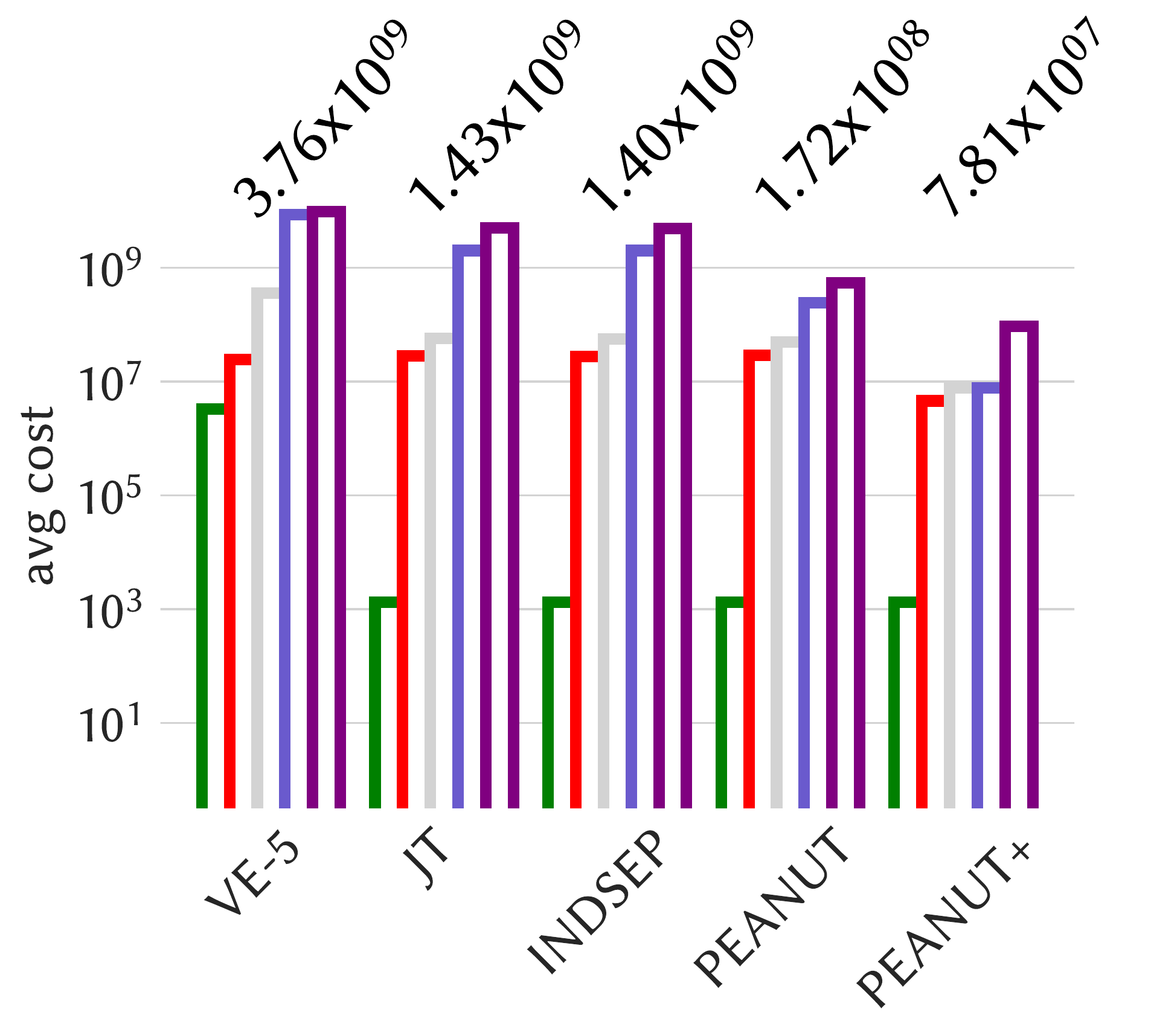}&
 		\hspace{-5mm} \includegraphics[width=0.53\columnwidth, height = 0.63\textheight, keepaspectratio]{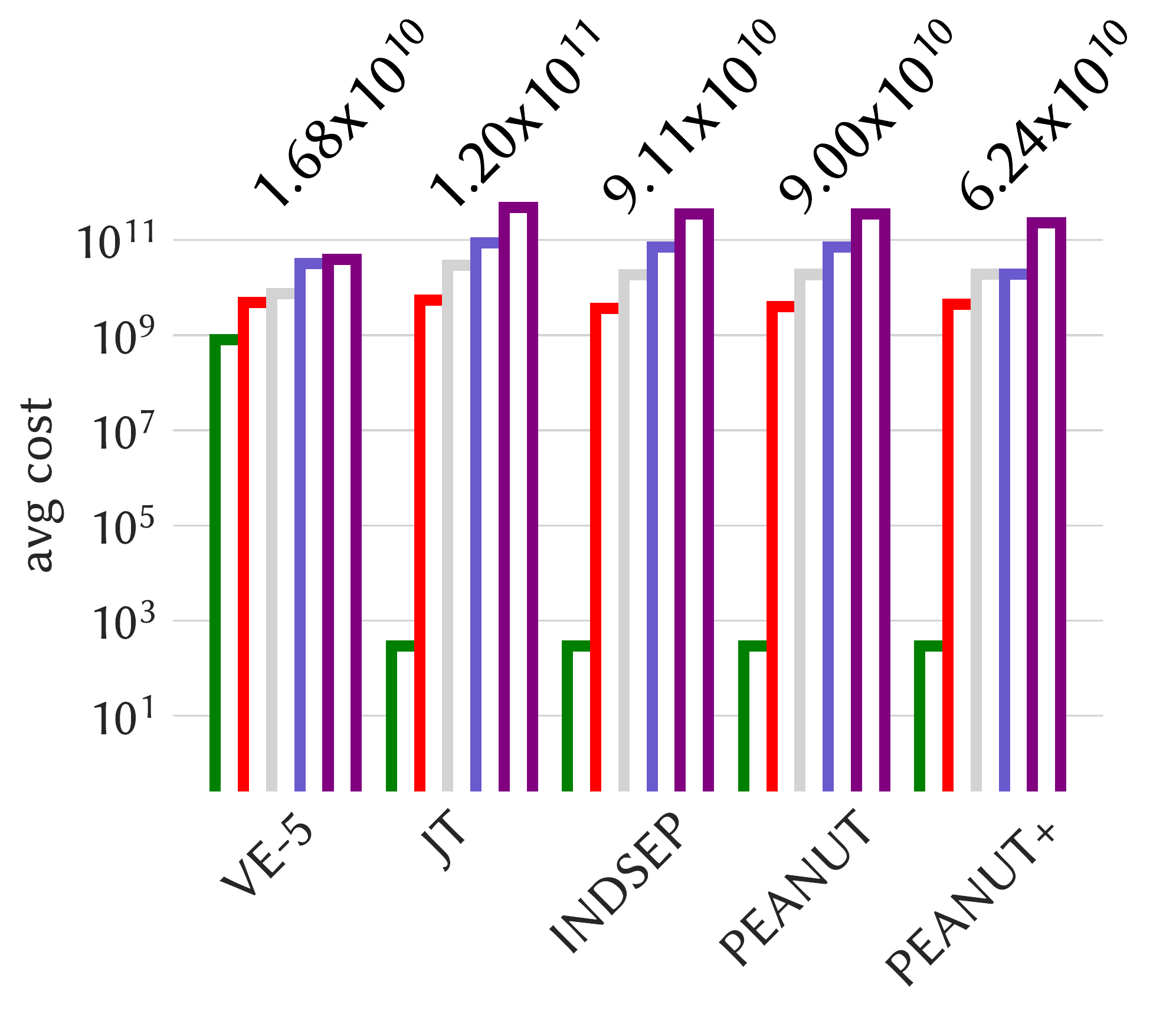}&
		\hspace{-5mm} \includegraphics[width=0.53\columnwidth, height = 0.63\textheight, keepaspectratio]{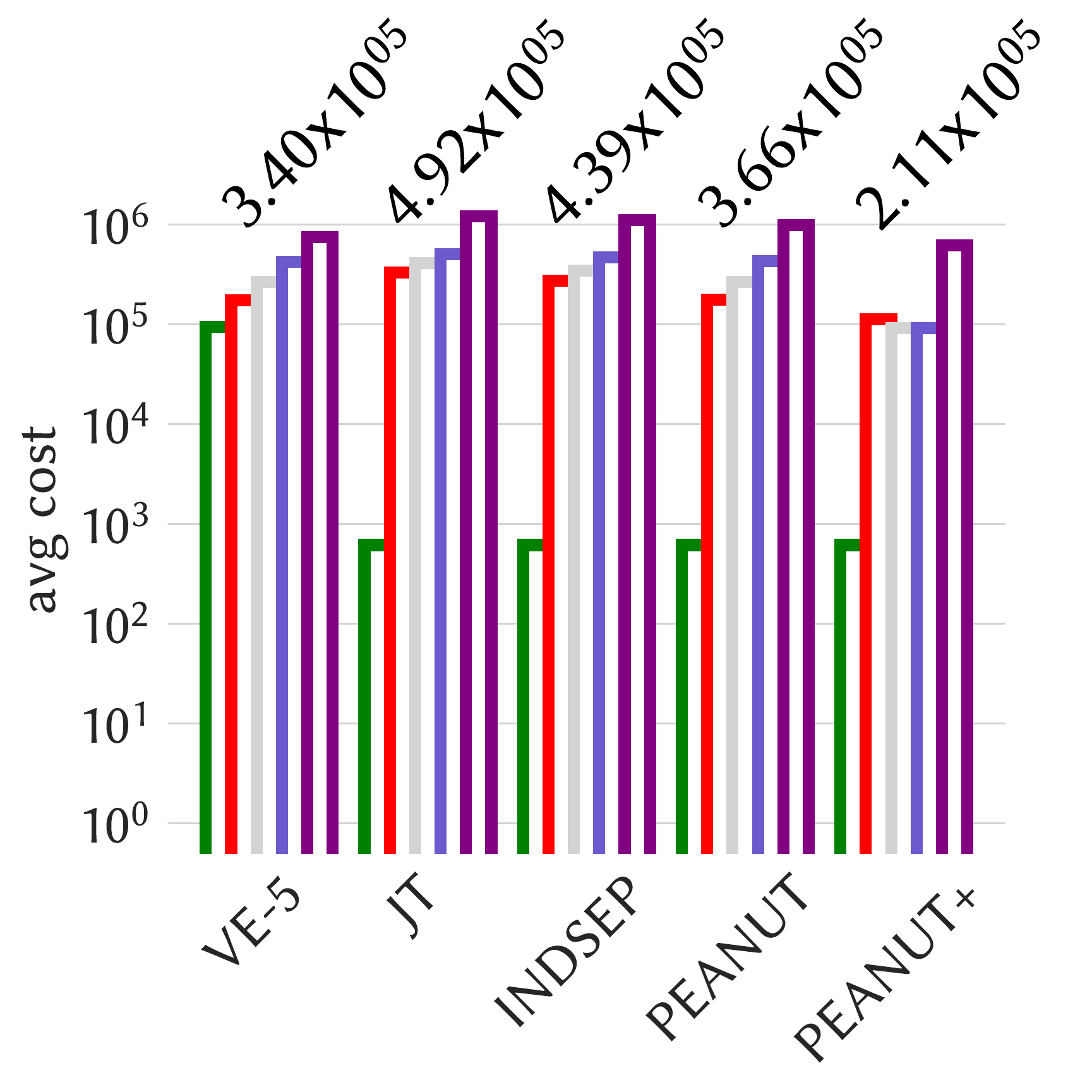} & 
		\hspace{-5mm} \includegraphics[width=0.53\columnwidth, height = 0.63\textheight, keepaspectratio]{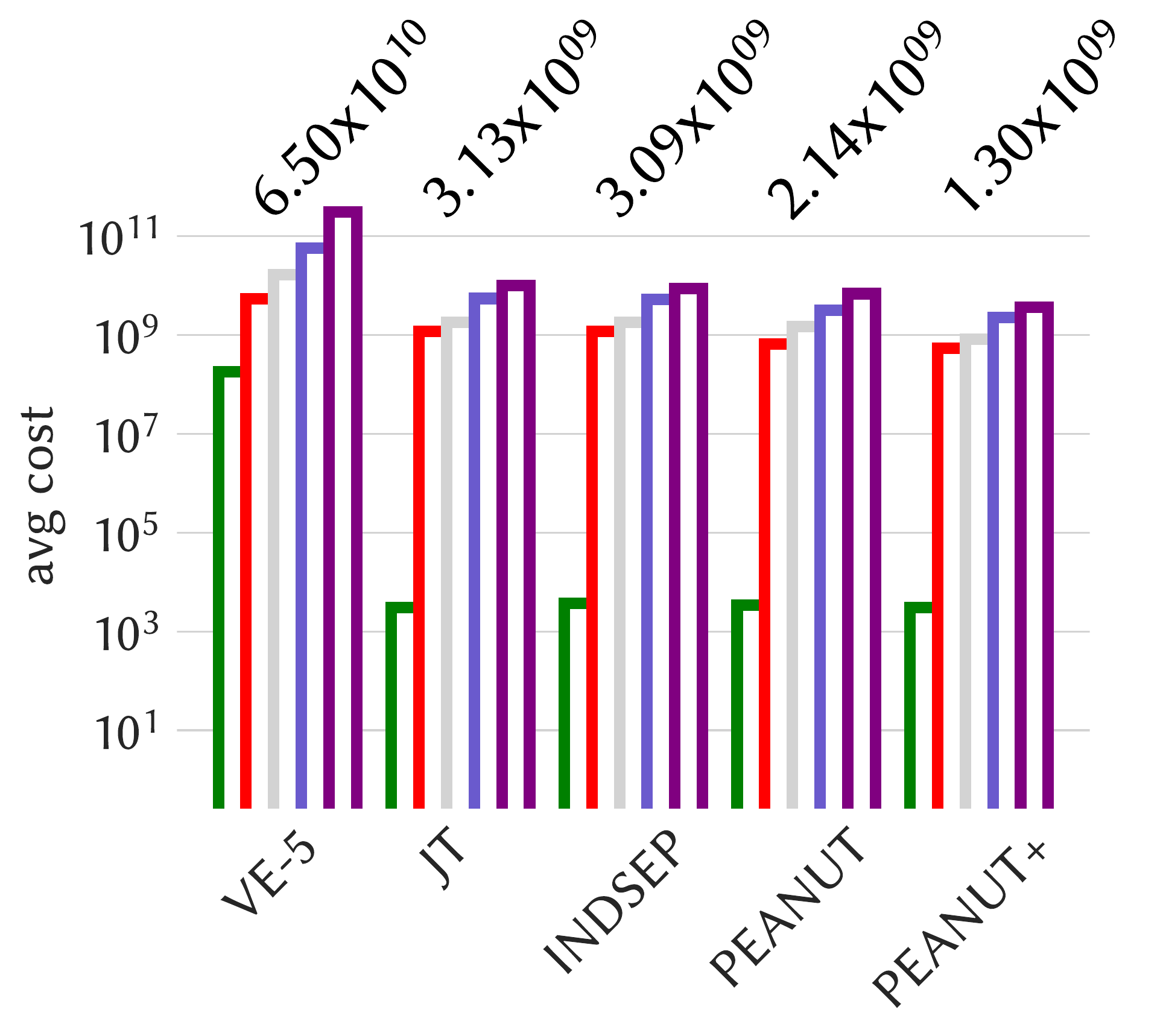}\\
		\textsc{TPC-H} &  \textsc{Munin} & \textsc{Pathfinder} &  \textsc{Barley} \vspace{0mm}\\  
	\end{tabular}
	\vspace*{-0.3cm}
	\hspace{-5mm}\caption{\label{fig:comparison_qtm} 
	\ReviewOnly{\revisioncol{ The $y$-axis shows on logarithmic scale the value of the average query processing cost by query size $|q|$ for different algorithms. In addition, the value of the cost ratio aggregated over all values of $|q|$ is reported as text for each method.}}
	\FullOnly{The $y$-axis shows on logarithmic scale the value of the cost ratio which is computed as the ratio of the total query processing cost for different algorithms to the query processing cost of \ouralgorithmplus. The height of the bars indicate the value of the ratio by query size $|q|$. In addition, the value of the cost ratio aggregated over all values of $|q|$ is reported as text for each method.}}
\end{figure*}

\begin{figure*}[t]
	\vspace*{-0.8cm}
	\centering
	\includegraphics[width=1\columnwidth, height = 0.8\textheight, keepaspectratio]{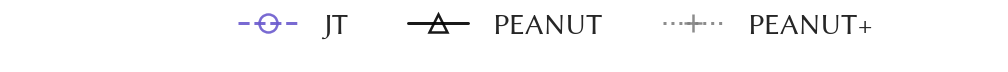}\\
	\begin{tabular}{cccc}
		\hspace{-5mm} \includegraphics[width=0.45\columnwidth, height = 0.55\textheight, keepaspectratio]{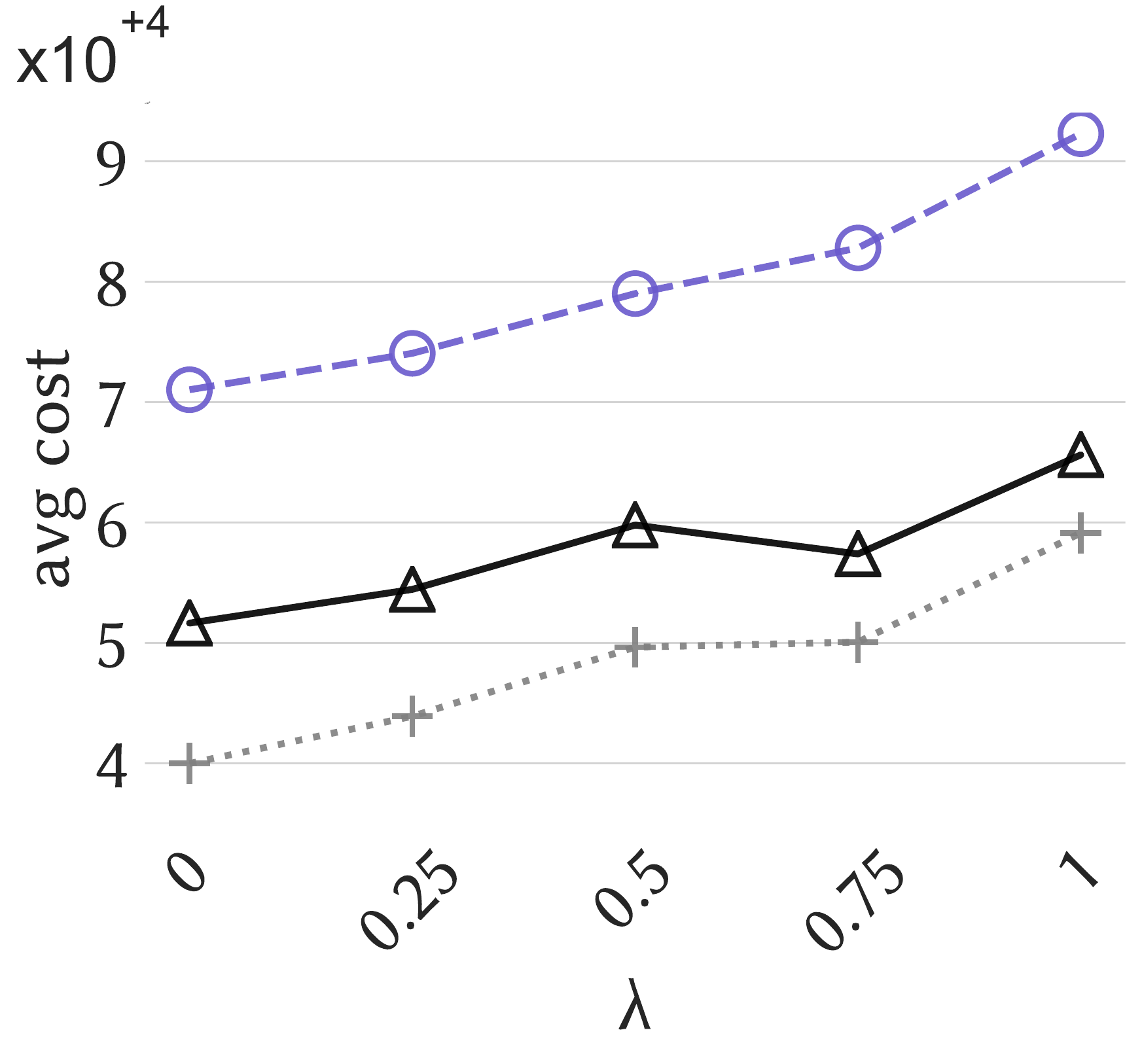}&
		\hspace{-5mm} \includegraphics[width=0.45\columnwidth, height = 0.55\textheight, keepaspectratio]{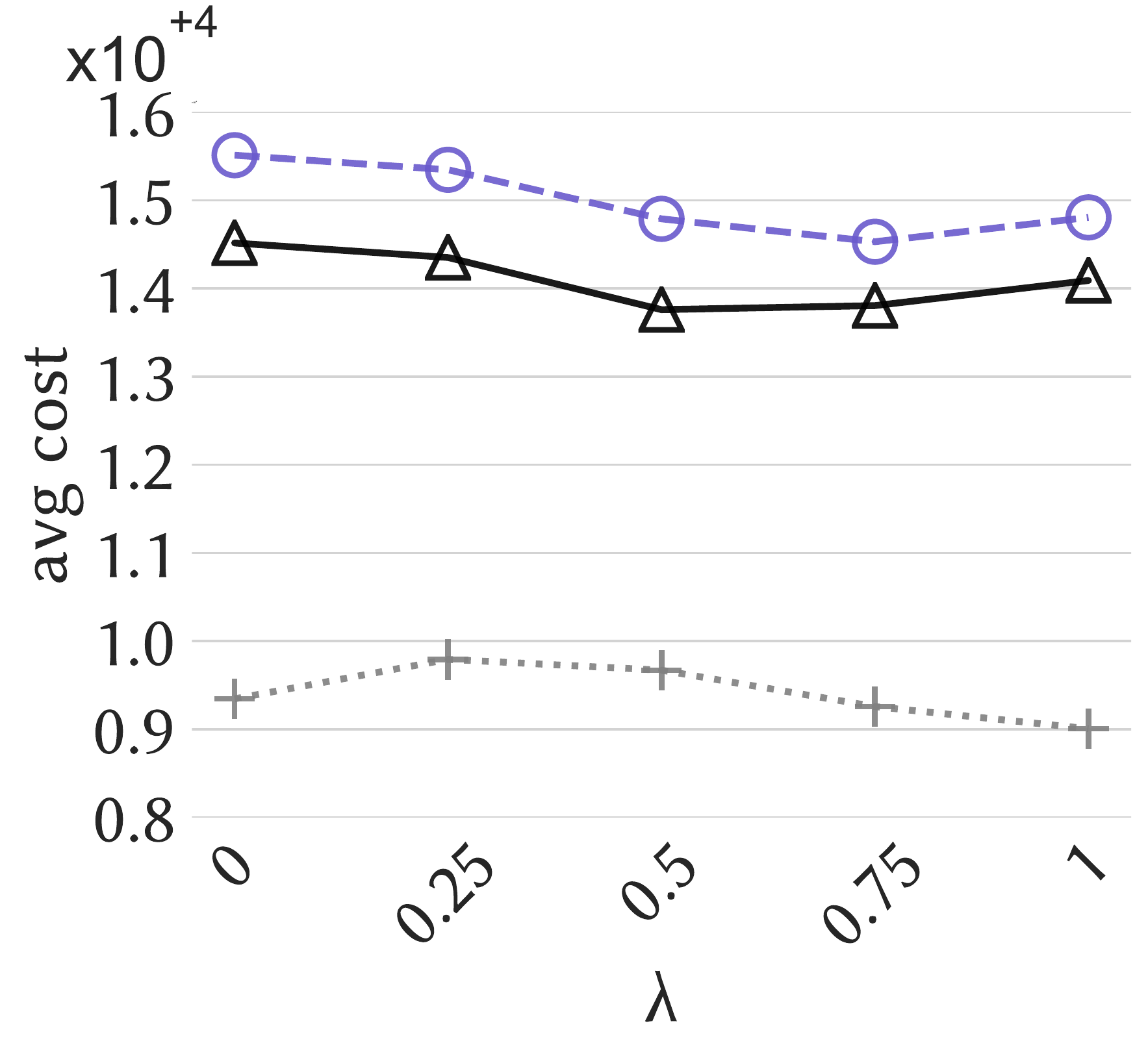}&
		\hspace{-5mm} \includegraphics[width=0.45\columnwidth, height = 0.55\textheight, keepaspectratio]{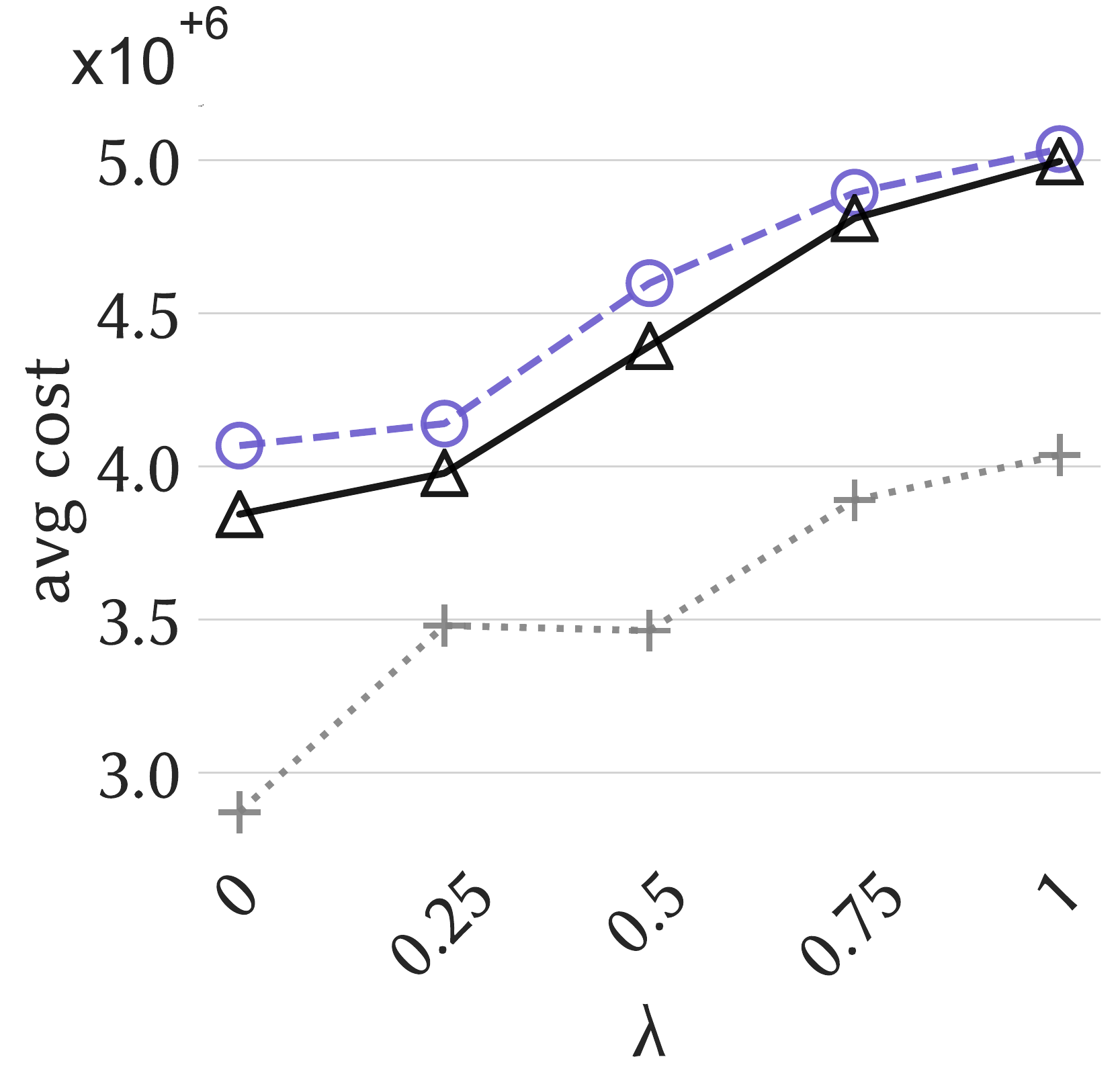}&
		\hspace{-5mm} \includegraphics[width=0.45\columnwidth, height = 0.55\textheight, keepaspectratio]{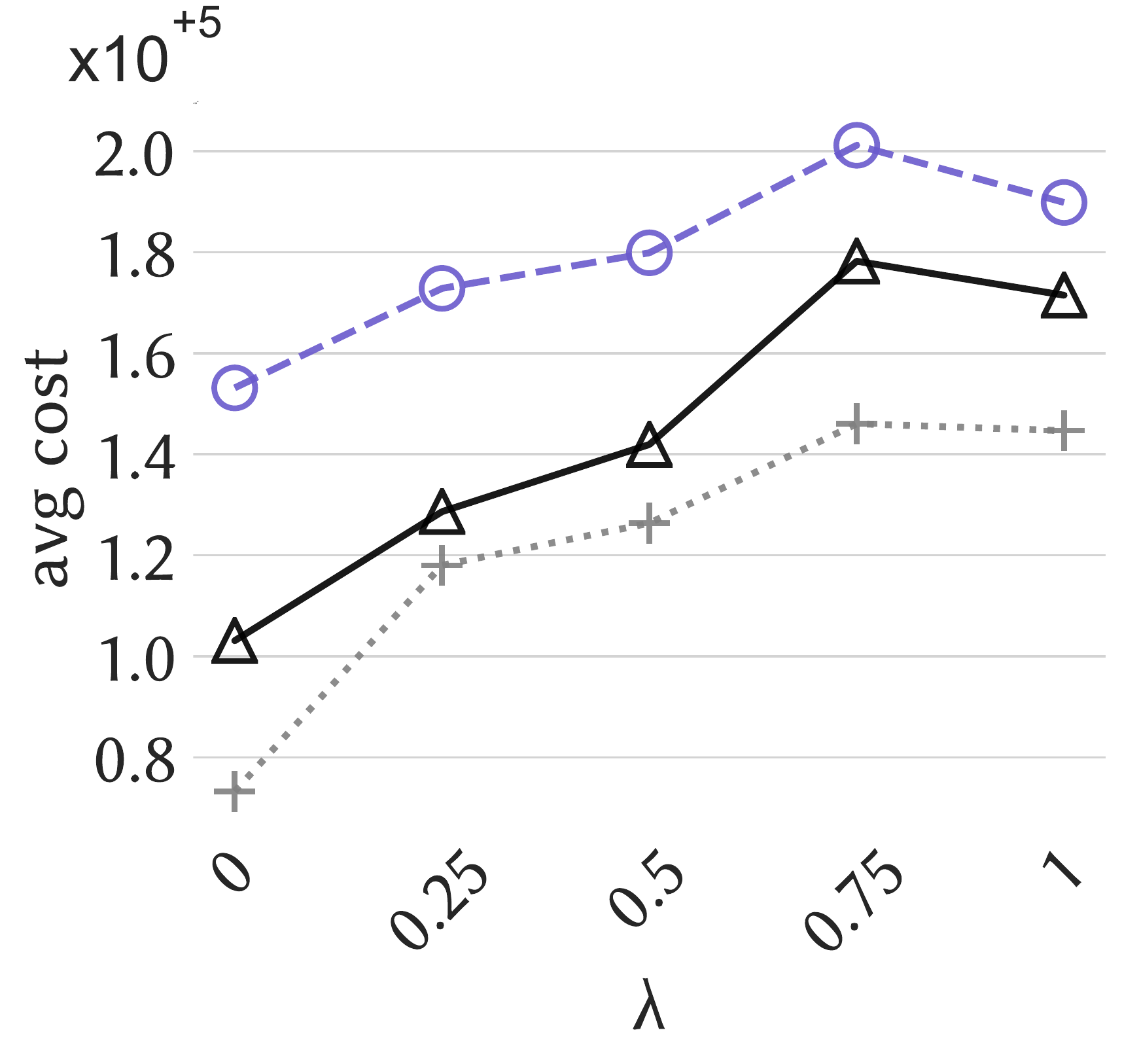}\\ 
		\textsc{Child} &  \textsc{Hepar II} & \textsc{Andes} & \textsc{HailFinder} \vspace{-1mm}\\
		\hspace{-5mm} \includegraphics[width=0.45\columnwidth, height = 0.55\textheight, keepaspectratio]{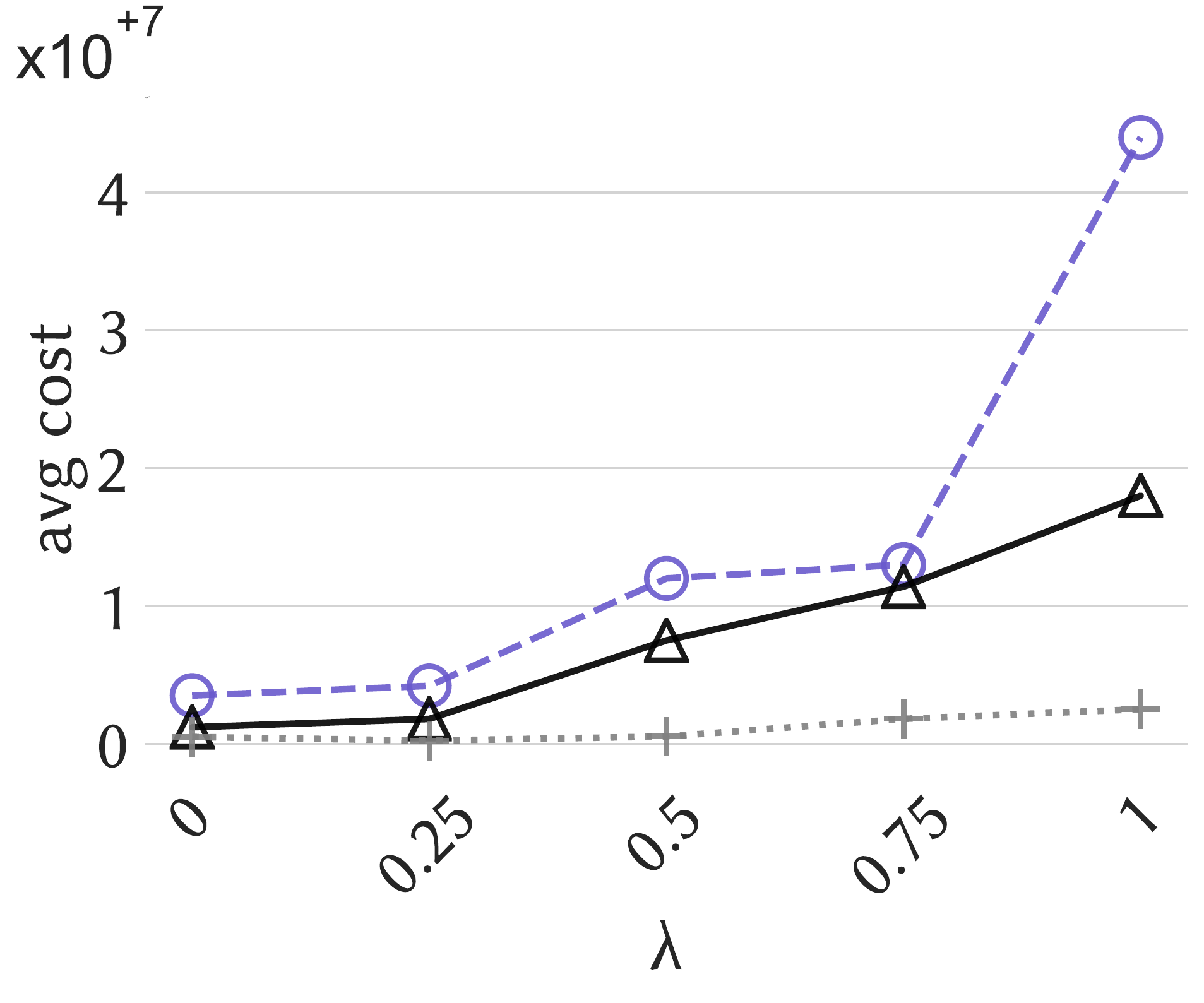}&
		\hspace{-5mm} \includegraphics[width=0.45\columnwidth, height = 0.55\textheight, keepaspectratio]{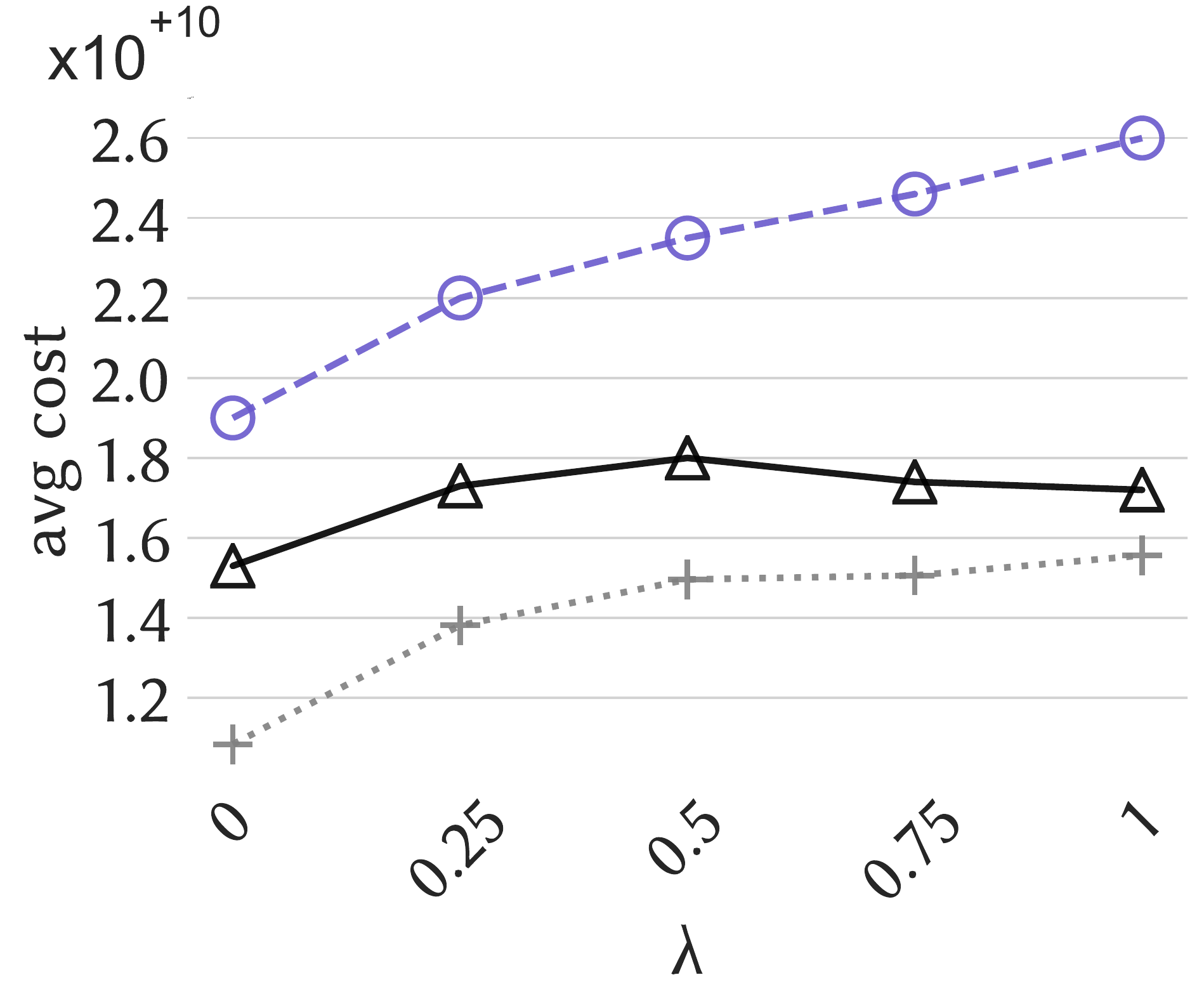}&
		\hspace{-5mm} \includegraphics[width=0.45\columnwidth, height = 0.55\textheight, keepaspectratio]{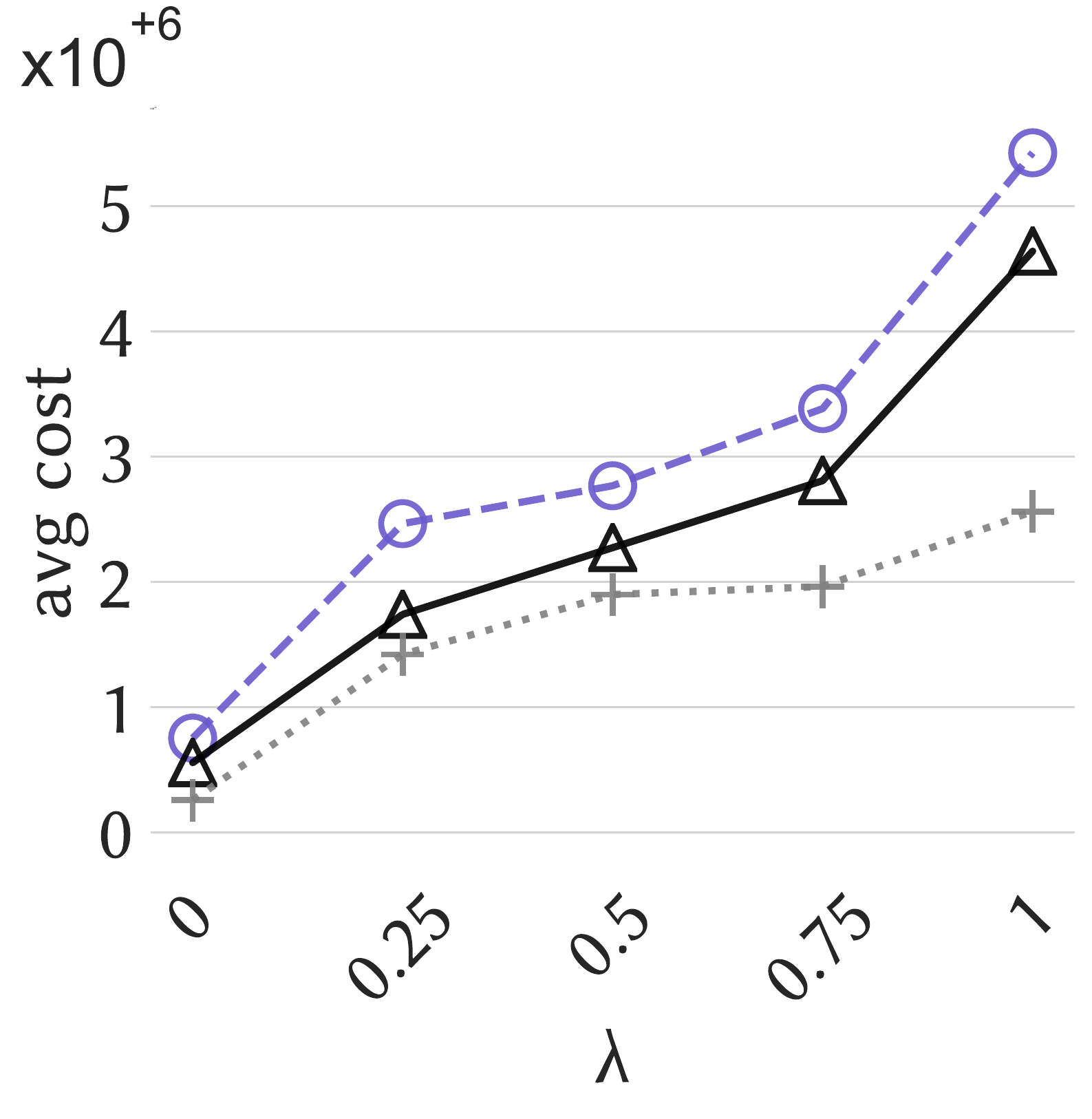}&
		\hspace{-5mm} \includegraphics[width=0.45\columnwidth, height = 0.55\textheight, keepaspectratio]{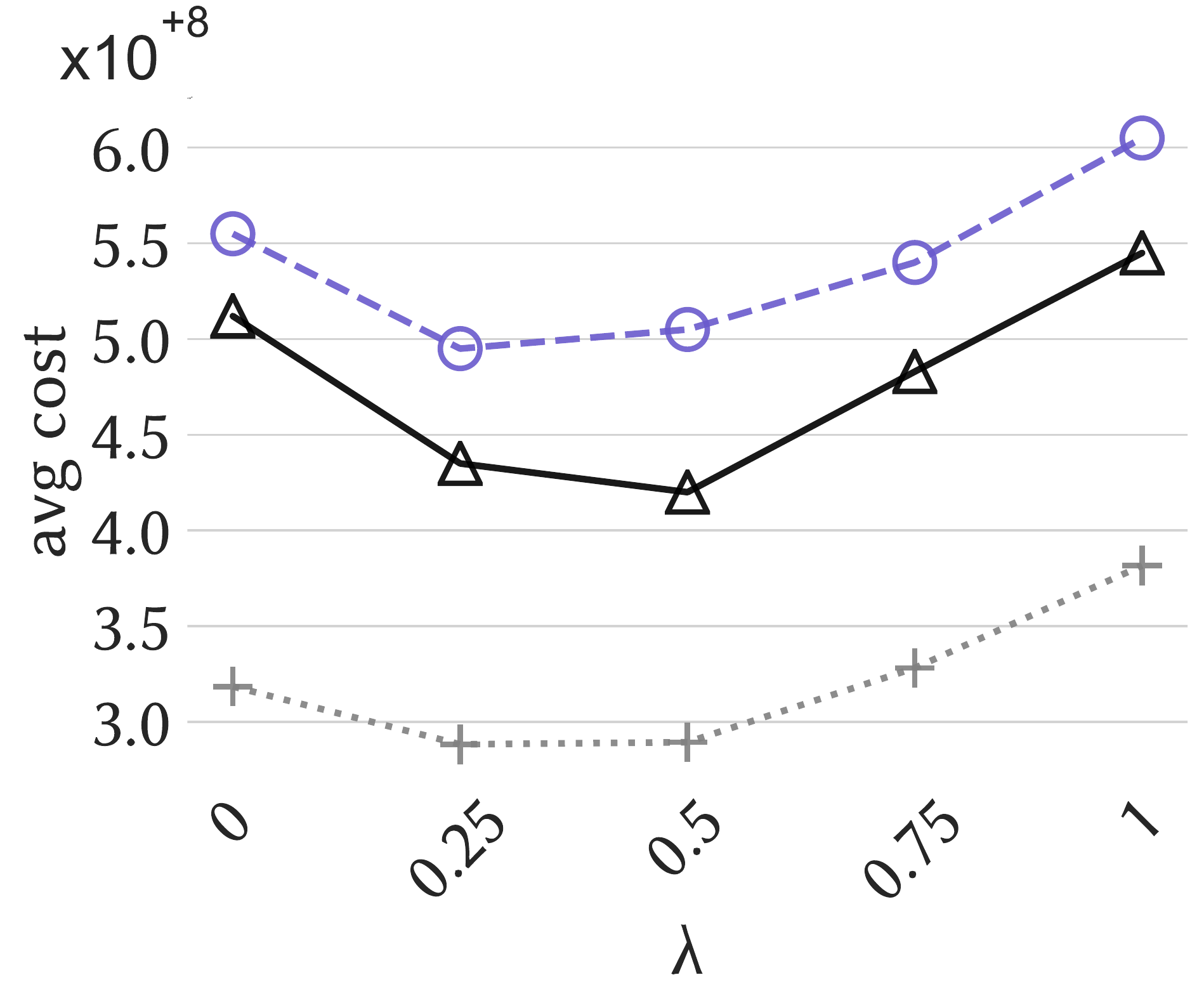}\\ 
		\textsc{TPC-H} &  \textsc{Munin} & \textsc{Pathfinder} &  \textsc{Barley} \vspace{0mm}\\ 
		\vspace*{-0.6cm}\\
	\end{tabular}
	\hspace{-5mm} \caption{\label{fig:robustness_skewed} 
			\revisioncol{Average cost of processing \querylog' against proportion $\lambda$ of queries from \querylog for the standard junction-tree algorithm (JT), \ouralgorithm and \ouralgorithmplus. Here, $\querylog$ is the skewed workload.}
	}
\end{figure*}

\begin{figure*}[t]
	\centering
	\includegraphics[width=1\columnwidth, height = 0.8\textheight, keepaspectratio]{plots_experiments/legend_robustnessPlus.png}\\
	\begin{tabular}{cccc}
		\hspace{-5mm} \includegraphics[width=0.45\columnwidth, height = 0.55\textheight, keepaspectratio]{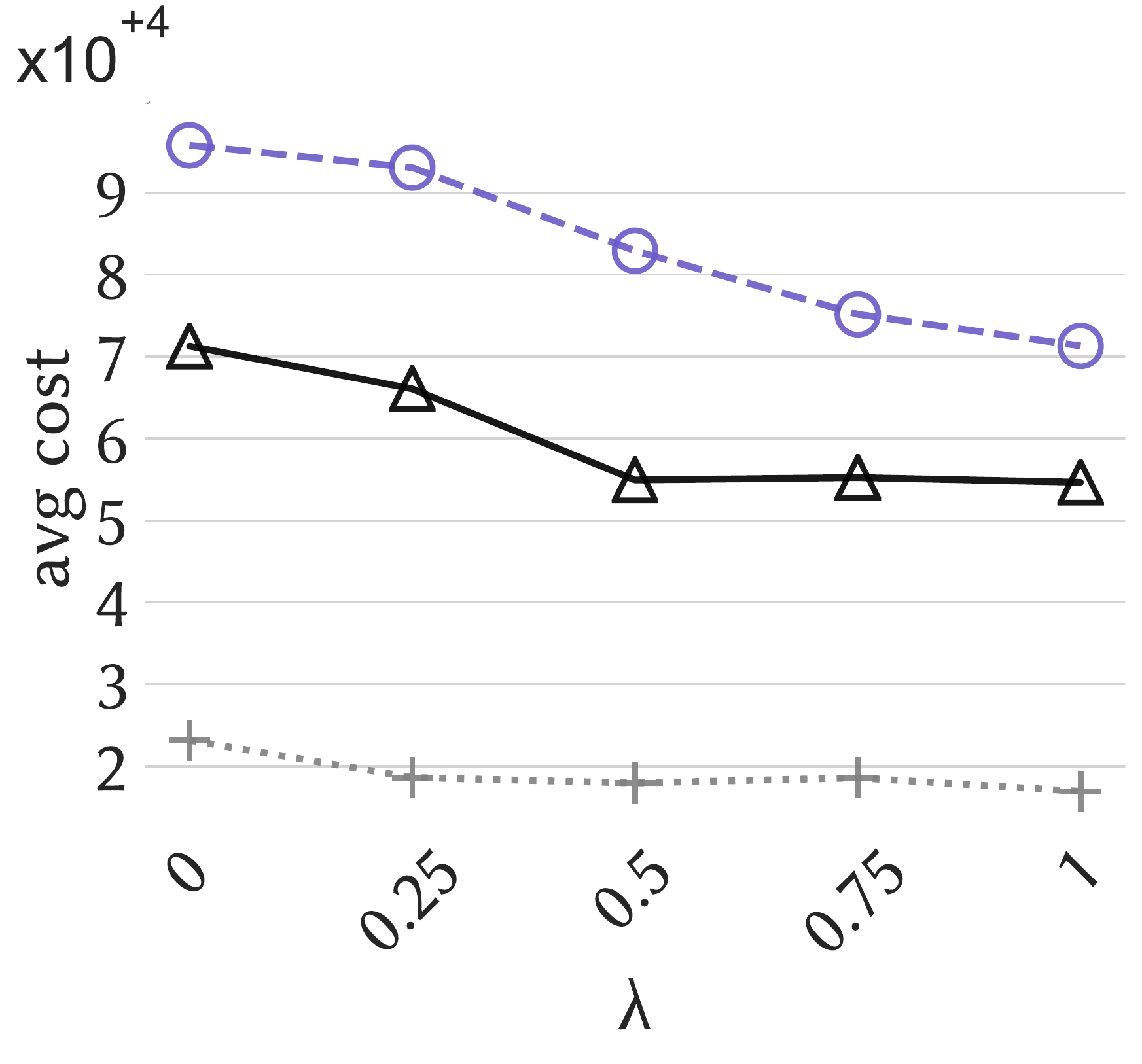}&
		\hspace{-5mm} \includegraphics[width=0.45\columnwidth, height = 0.55\textheight, keepaspectratio]{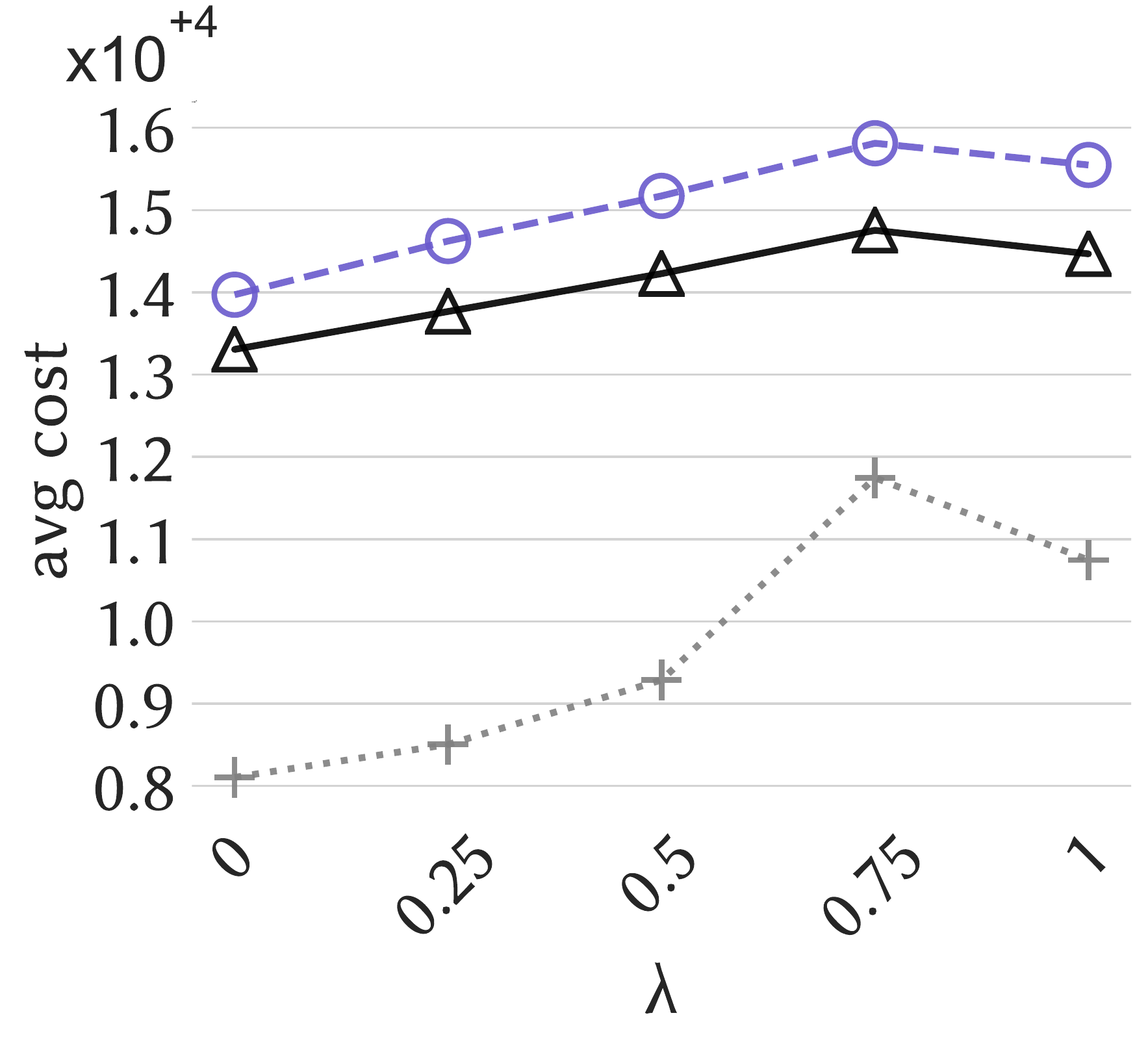}&
		\hspace{-5mm} \includegraphics[width=0.45\columnwidth, height = 0.55\textheight, keepaspectratio]{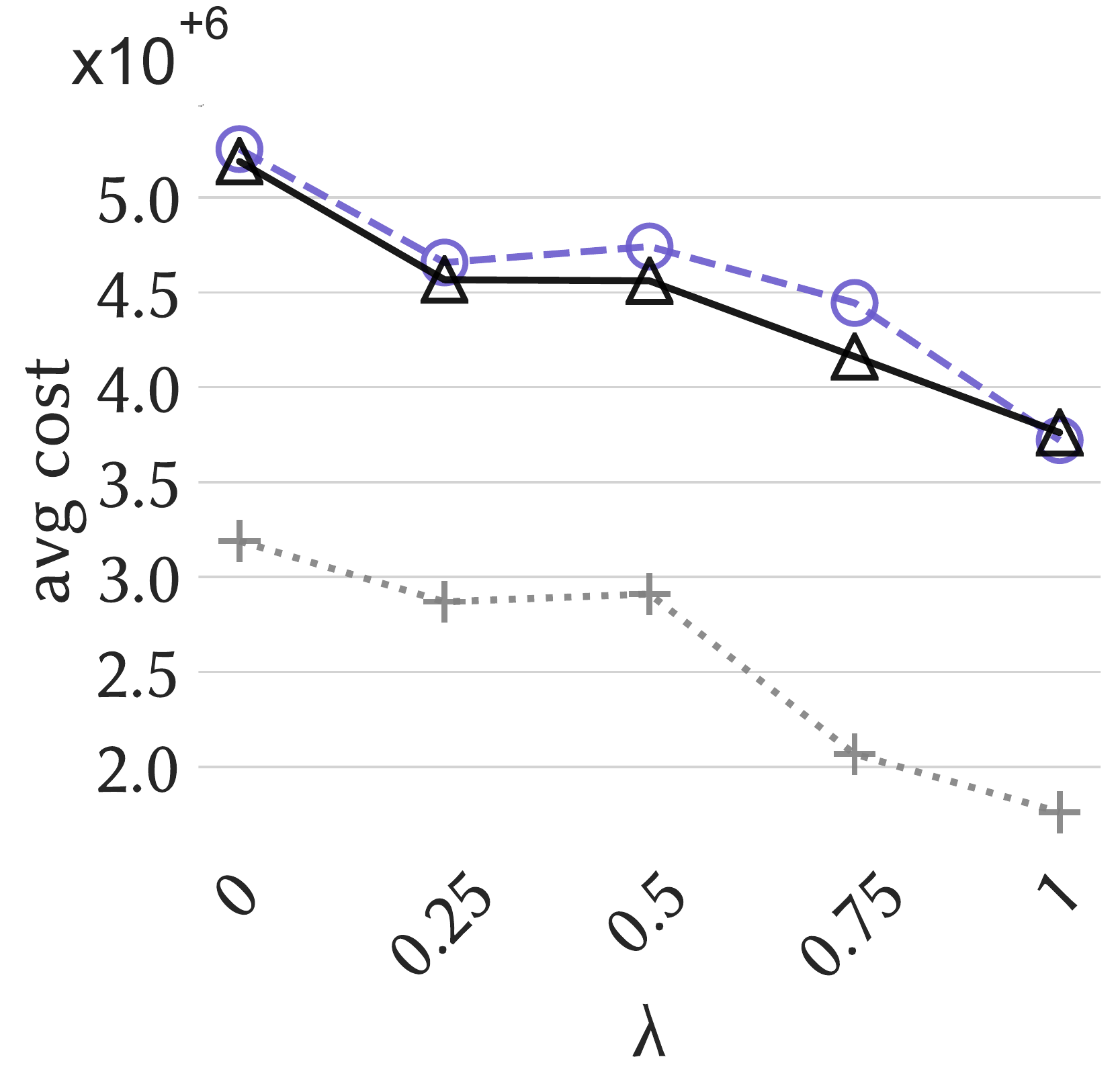}&
		\hspace{-5mm} \includegraphics[width=0.45\columnwidth, height = 0.55\textheight, keepaspectratio]{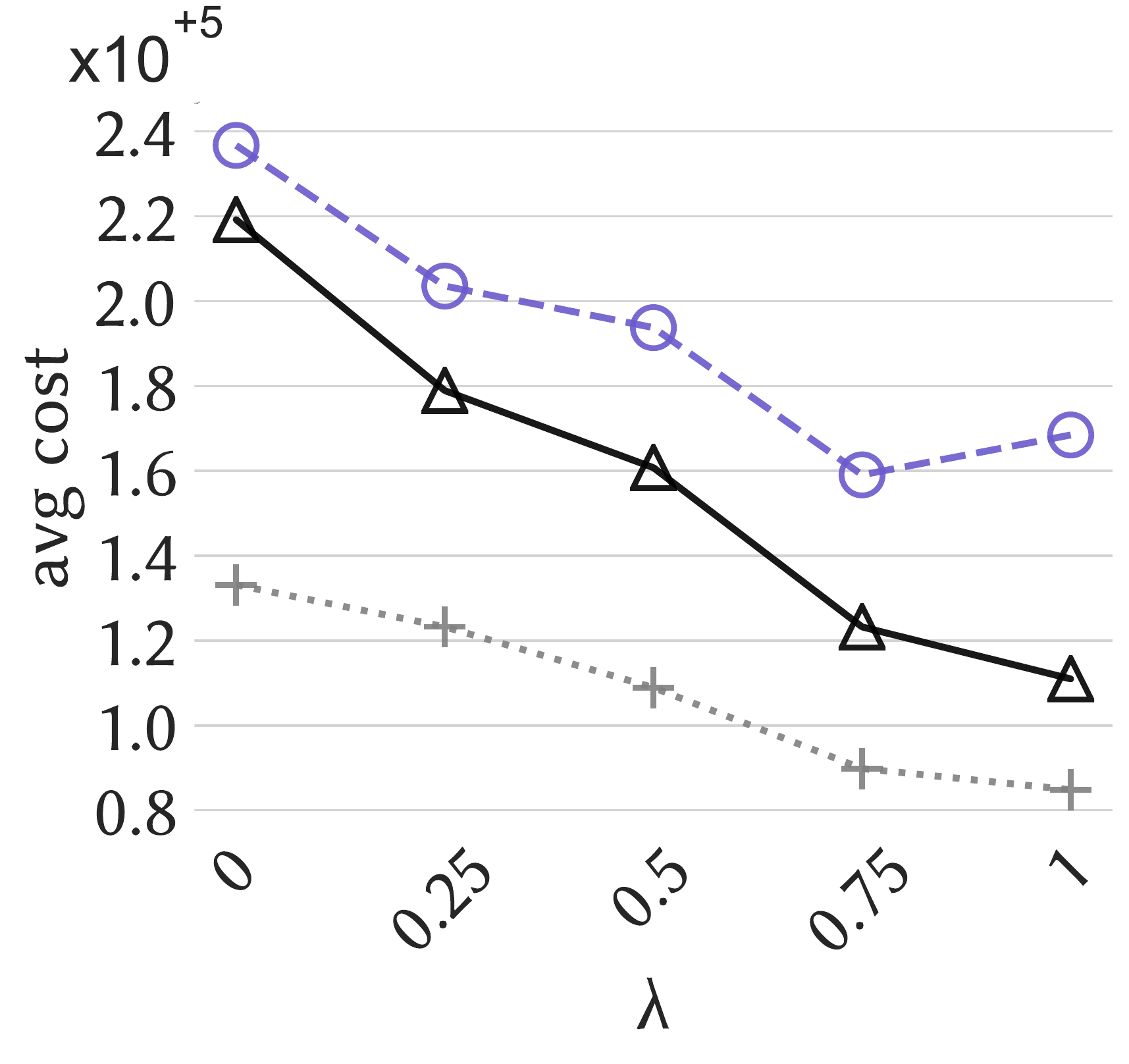} \\ 
		\textsc{Child} &  \textsc{Hepar II} & \textsc{Andes} & \textsc{HailFinder} \vspace{-1mm}\\
		\hspace{-5mm} \includegraphics[width=0.45\columnwidth, height = 0.55\textheight, keepaspectratio]{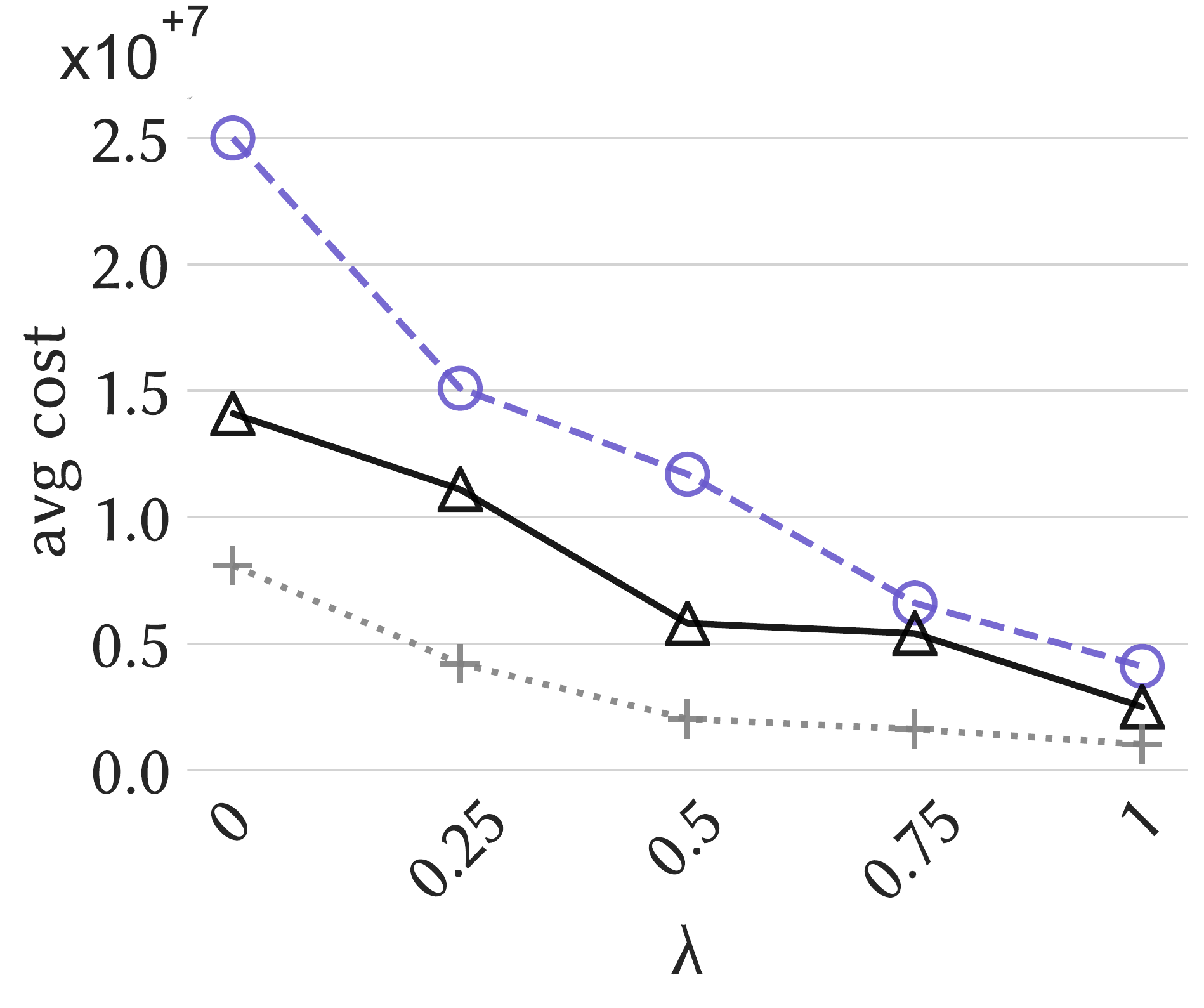}&
		\hspace{-5mm} \includegraphics[width=0.45\columnwidth, height = 0.55\textheight, keepaspectratio]{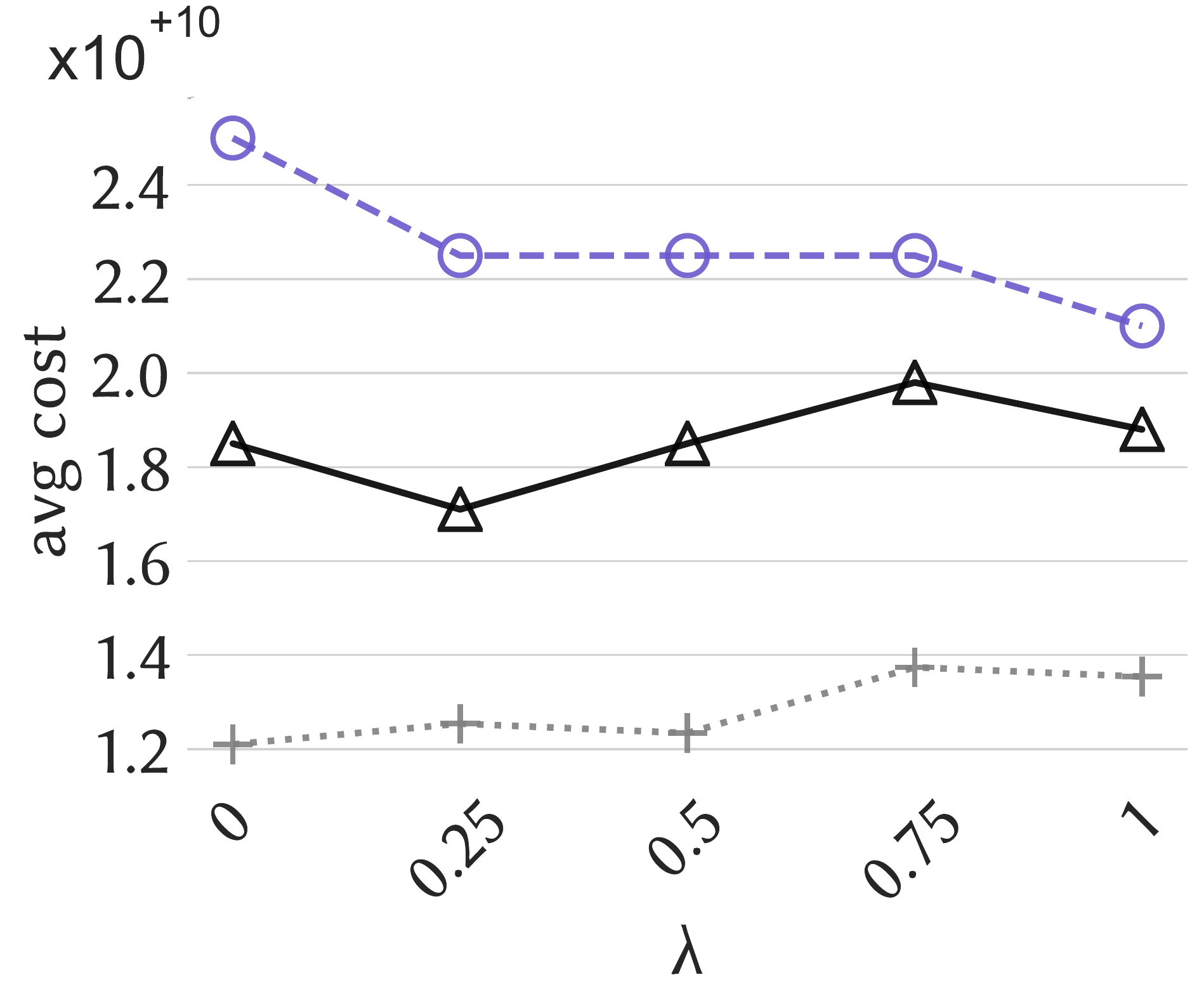}&
		\hspace{-5mm} \includegraphics[width=0.45\columnwidth, height = 0.55\textheight, keepaspectratio]{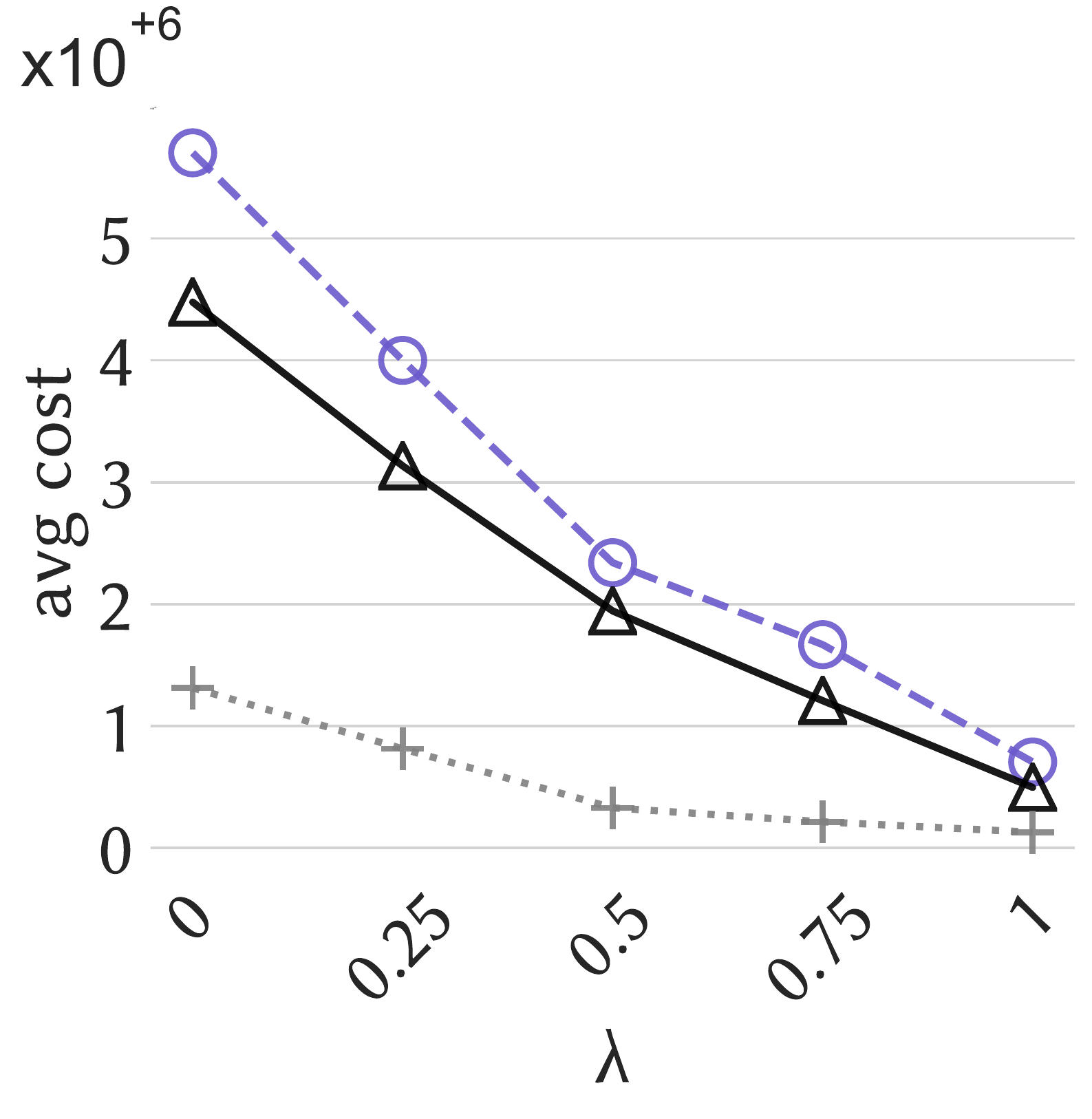}&
		\hspace{-5mm} \includegraphics[width=0.45\columnwidth, height = 0.55\textheight, keepaspectratio]{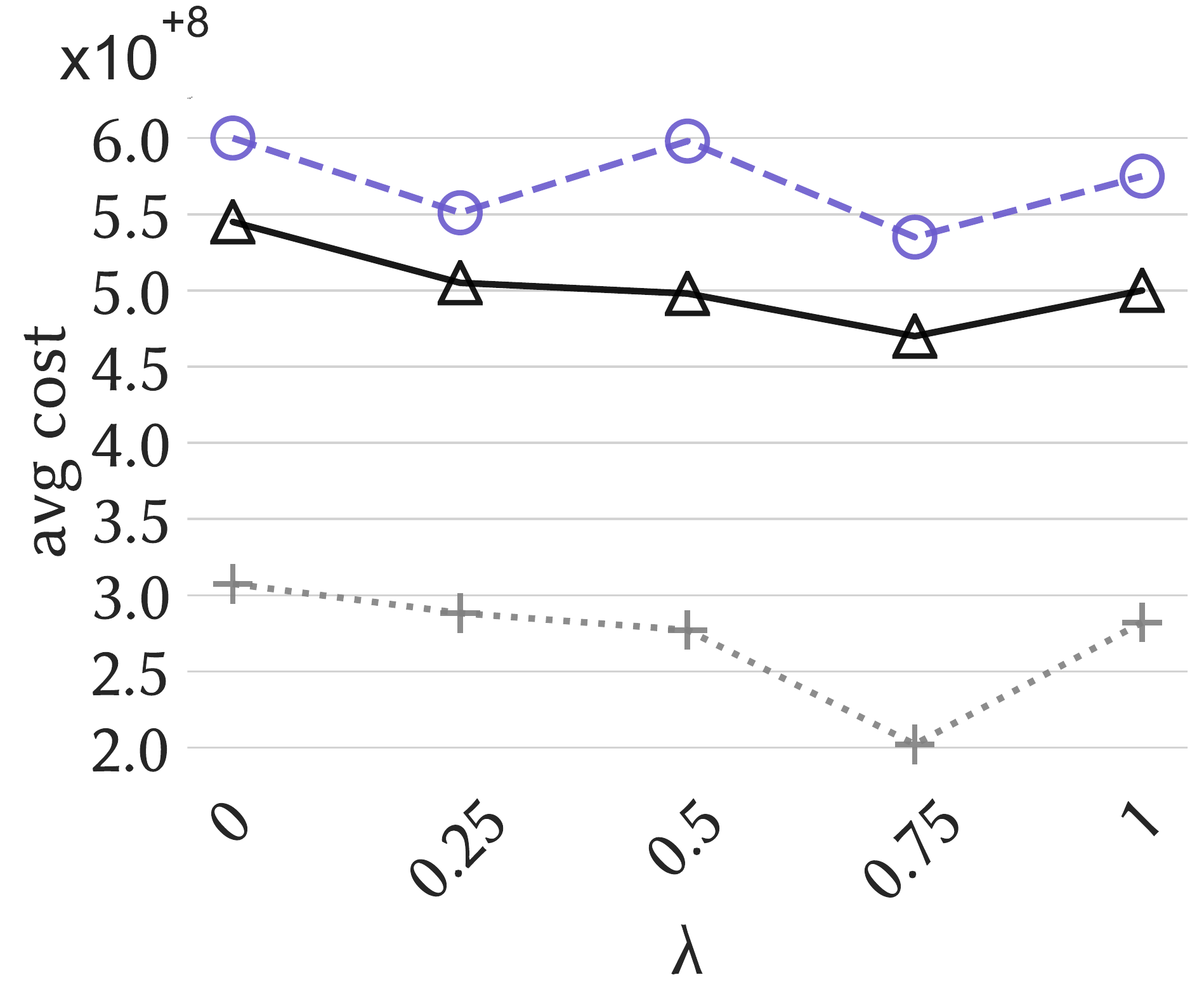}\\ 
		\textsc{TPC-H} &  \textsc{Munin} & \textsc{Pathfinder} &  \textsc{Barley} \vspace{0mm}\\ 
		\vspace*{-0.2cm}\\
	\end{tabular}
	\hspace{-5mm} \caption{\label{fig:robustness_uniform} 
		\revisioncol{Average cost of processing \querylog' against proportion $\lambda$ of queries from \querylog for the standard junction-tree algorithm (JT), \ouralgorithm and \ouralgorithmplus. Here, $\querylog$ is the uniform workload.} 
	}
\end{figure*}

\begin{figure*}[t]
	\centering
	\includegraphics[width=1.2\columnwidth, height = 0.6\textheight, keepaspectratio]{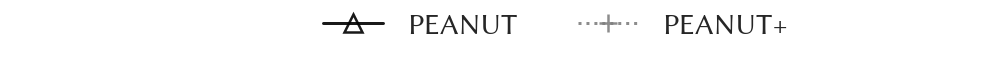}\\
	\vspace*{-0.4cm}
	\begin{tabular}{cccc}
		\hspace{-5mm} \includegraphics[width=0.45\columnwidth, height = 0.55\textheight, keepaspectratio, keepaspectratio]{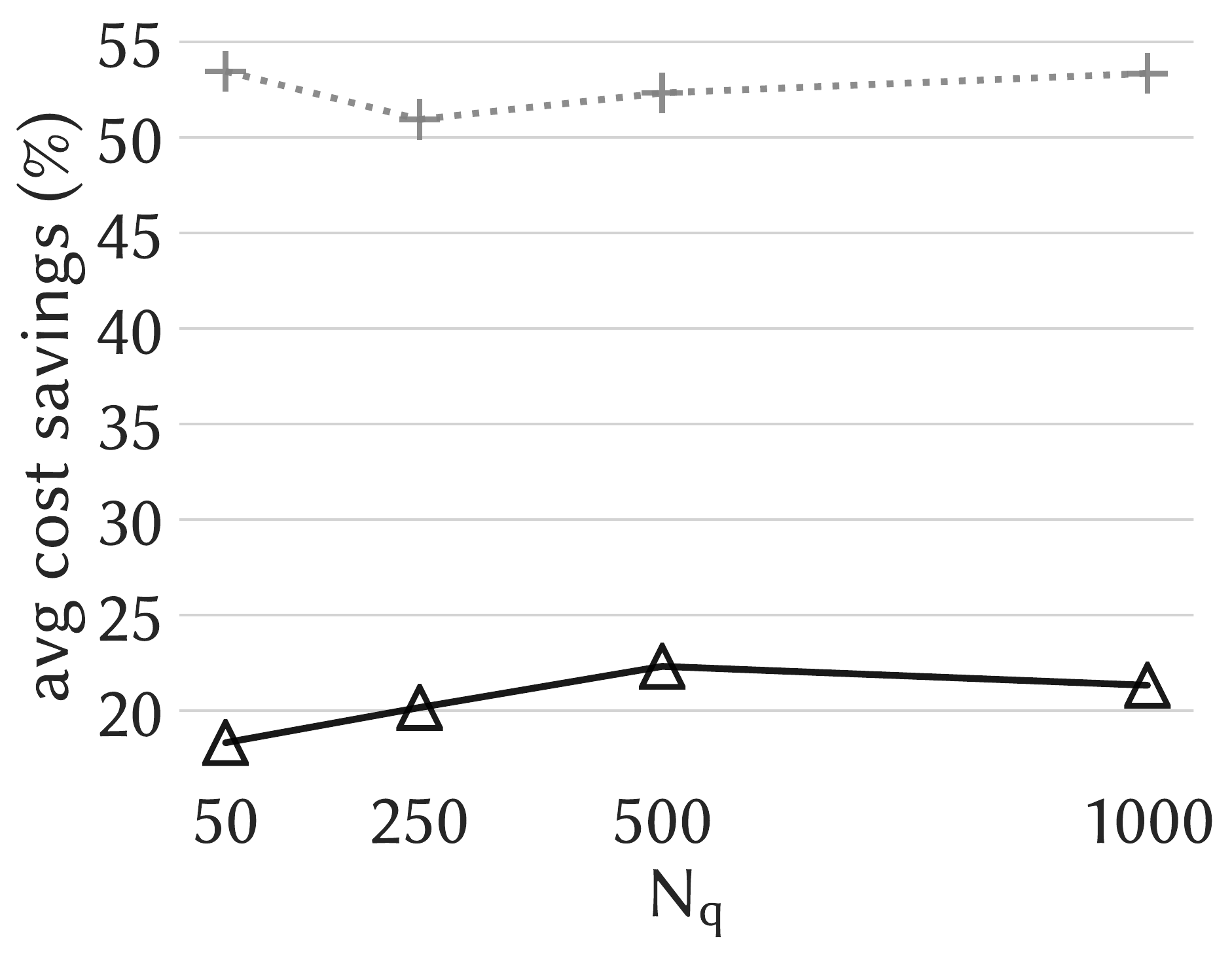}&
		\hspace{-5mm} \includegraphics[width=0.45\columnwidth, height = 0.55\textheight, keepaspectratio, keepaspectratio]{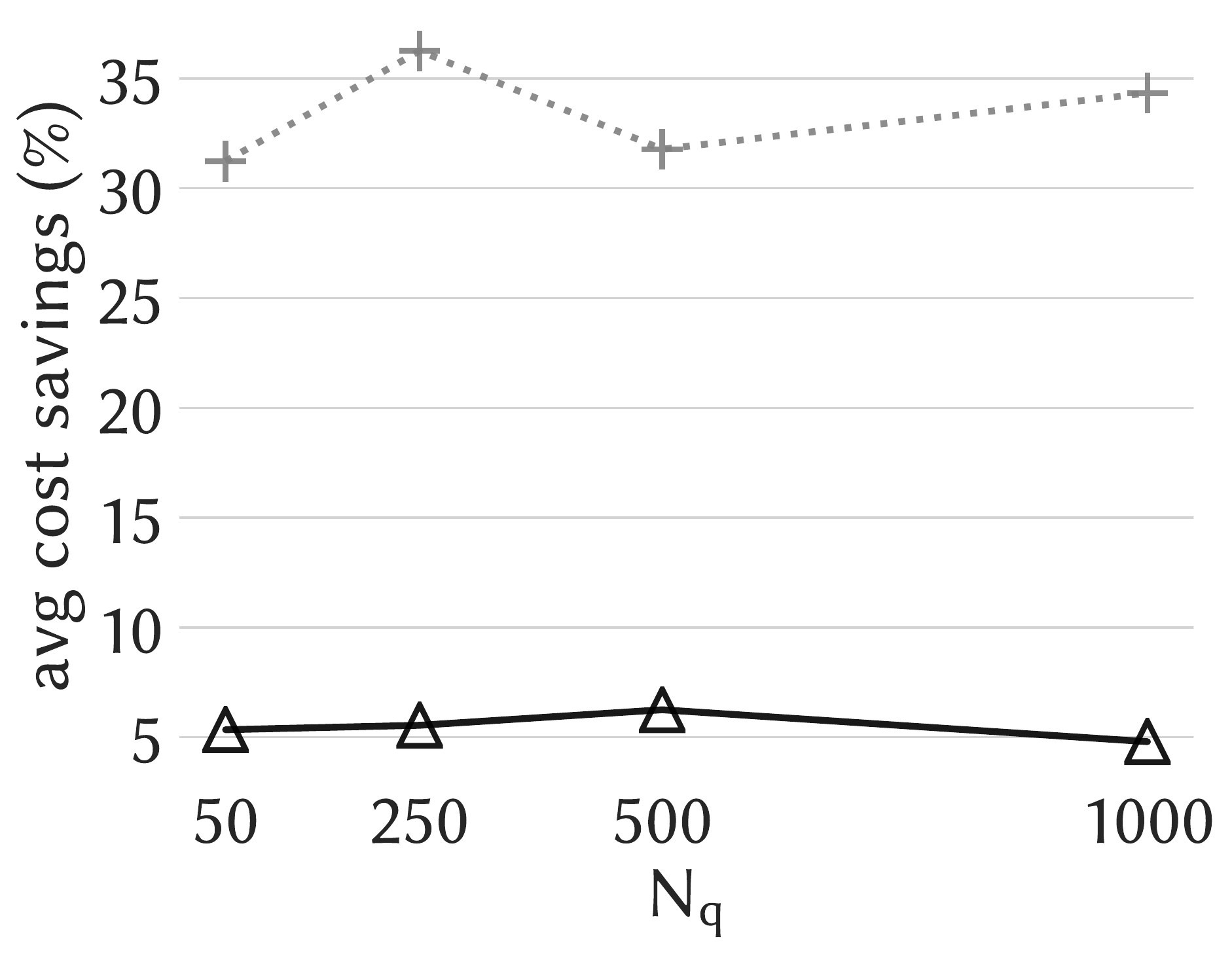}&
		\hspace{-5mm} \includegraphics[width=0.45\columnwidth, height = 0.55\textheight, keepaspectratio]{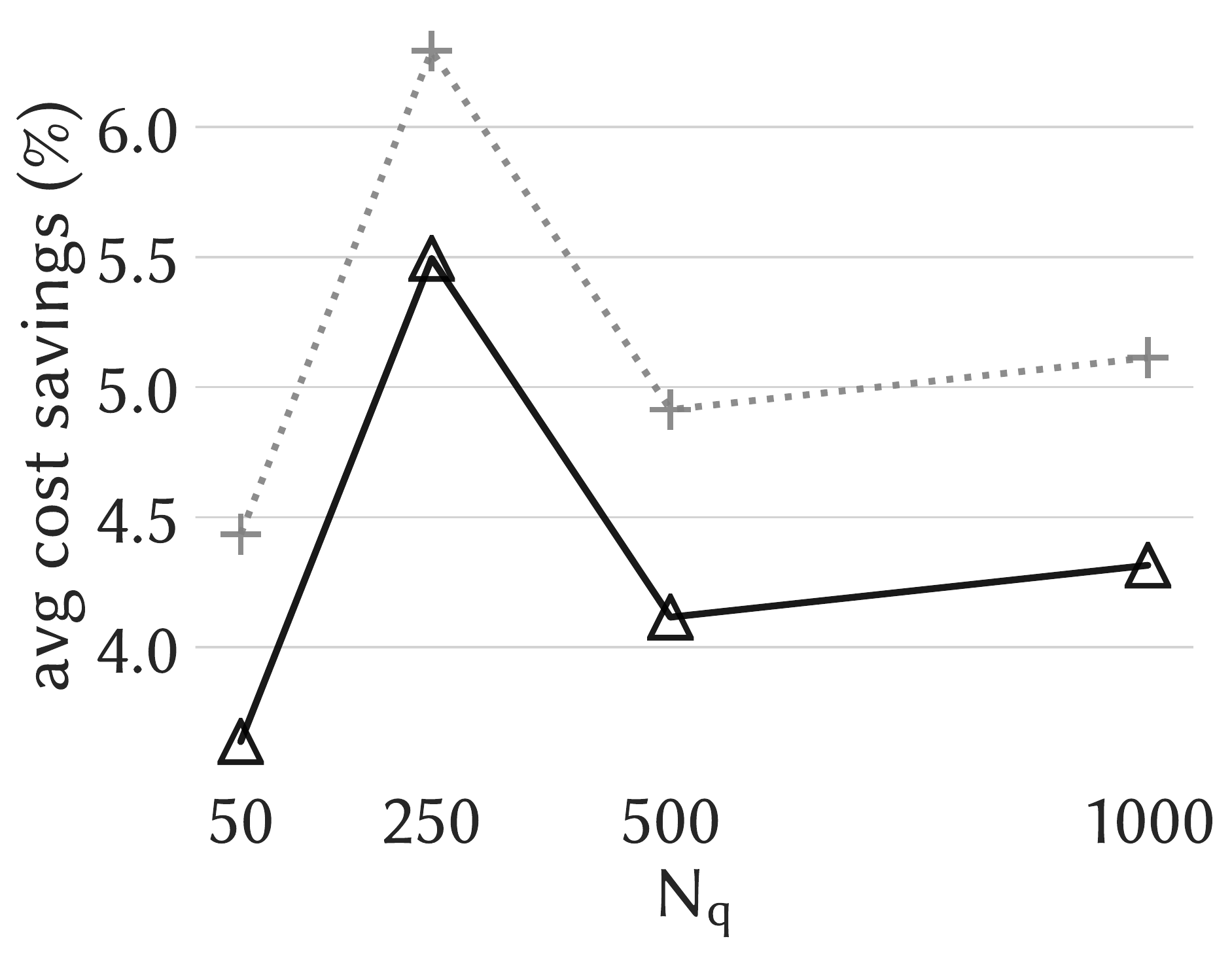}&
		\hspace{-5mm} \includegraphics[width=0.45\columnwidth, height = 0.55\textheight, keepaspectratio]{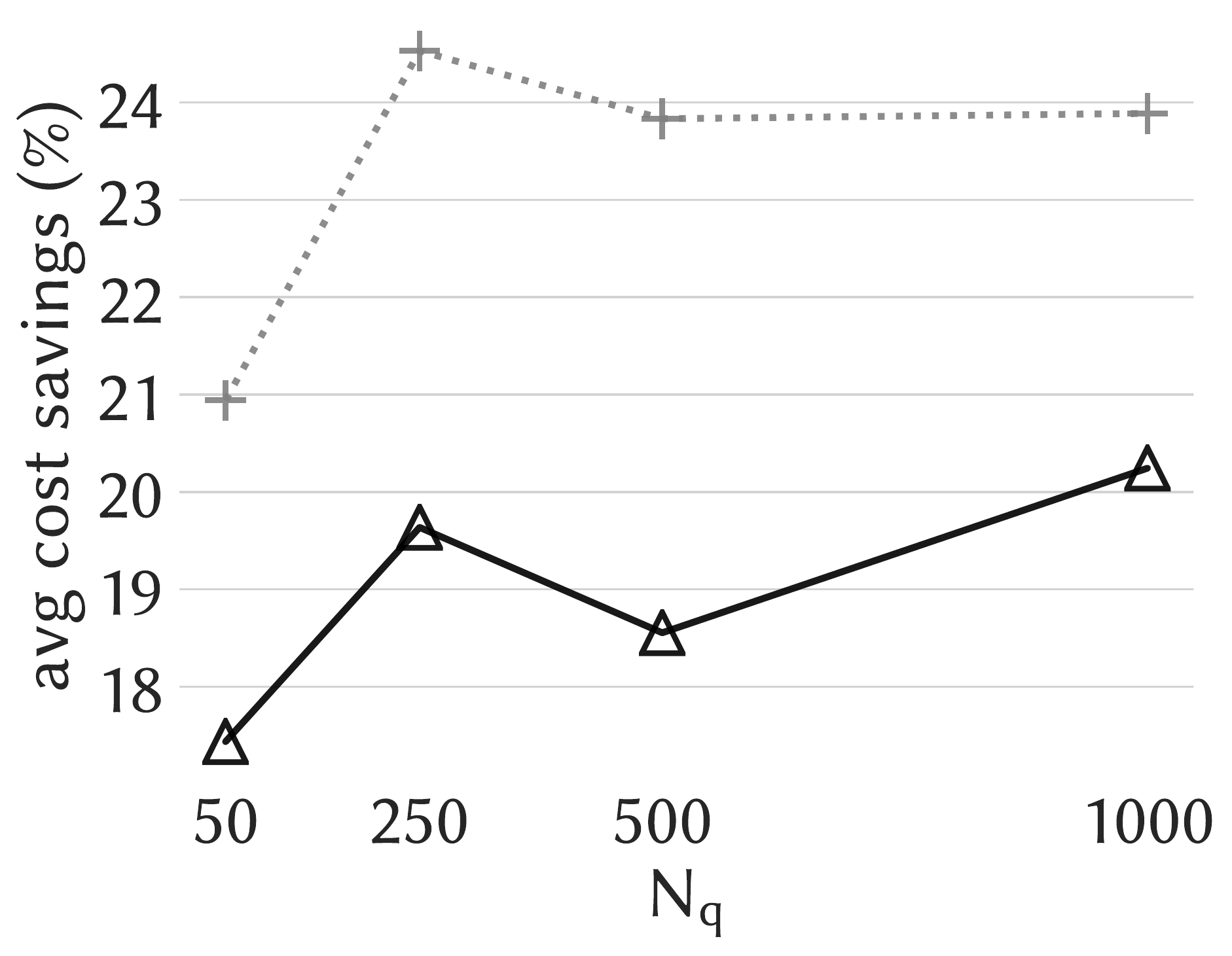}\\ 
		\textsc{Child} &  \textsc{Hepar II} & \textsc{Andes} & \textsc{Hailfinder} \hspace{-5mm}\\
		\hspace{-5mm} \includegraphics[width=0.45\columnwidth, height = 0.55\textheight, keepaspectratio]{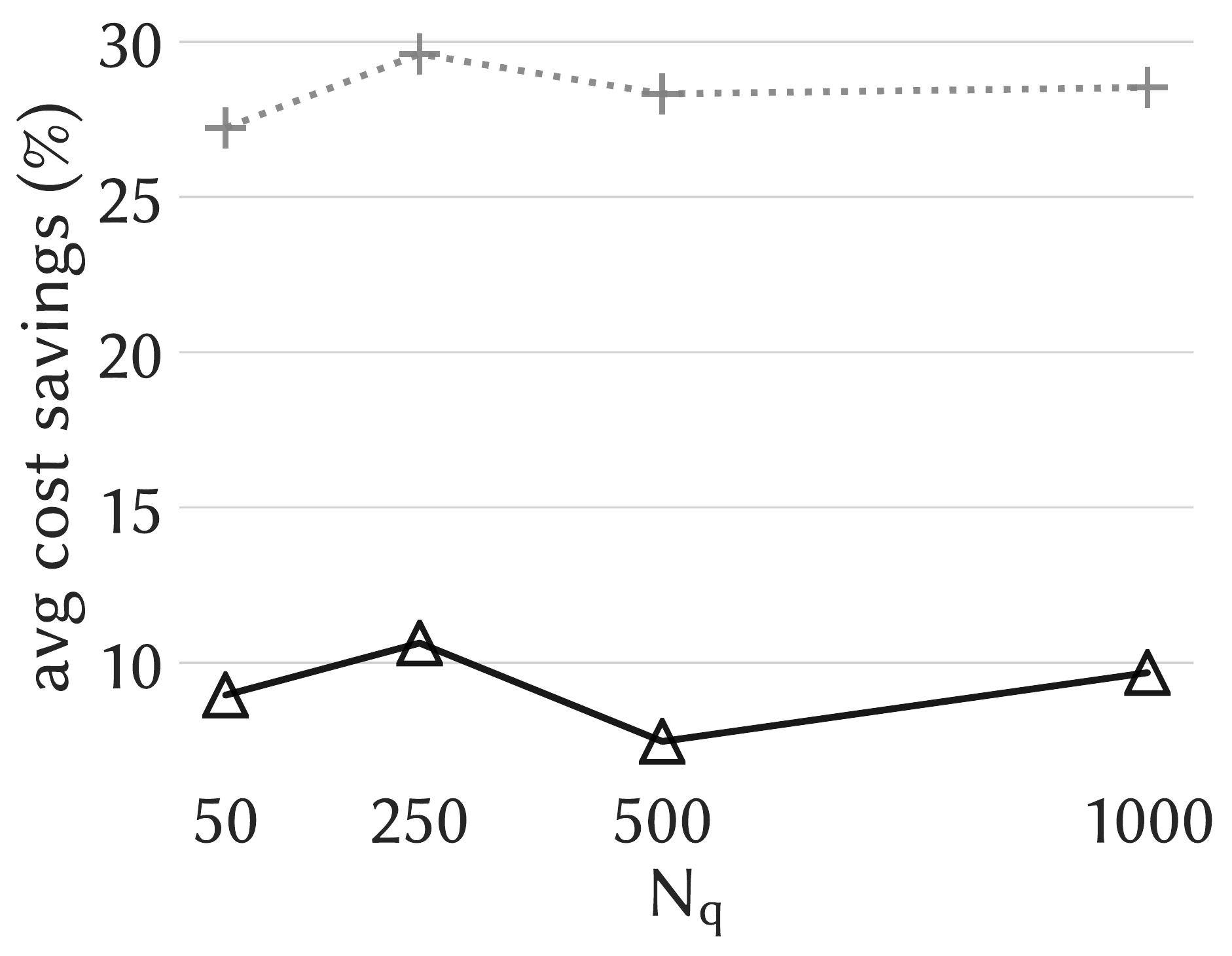} &
		\hspace{-5mm} \includegraphics[width=0.45\columnwidth, height = 0.55\textheight, keepaspectratio]{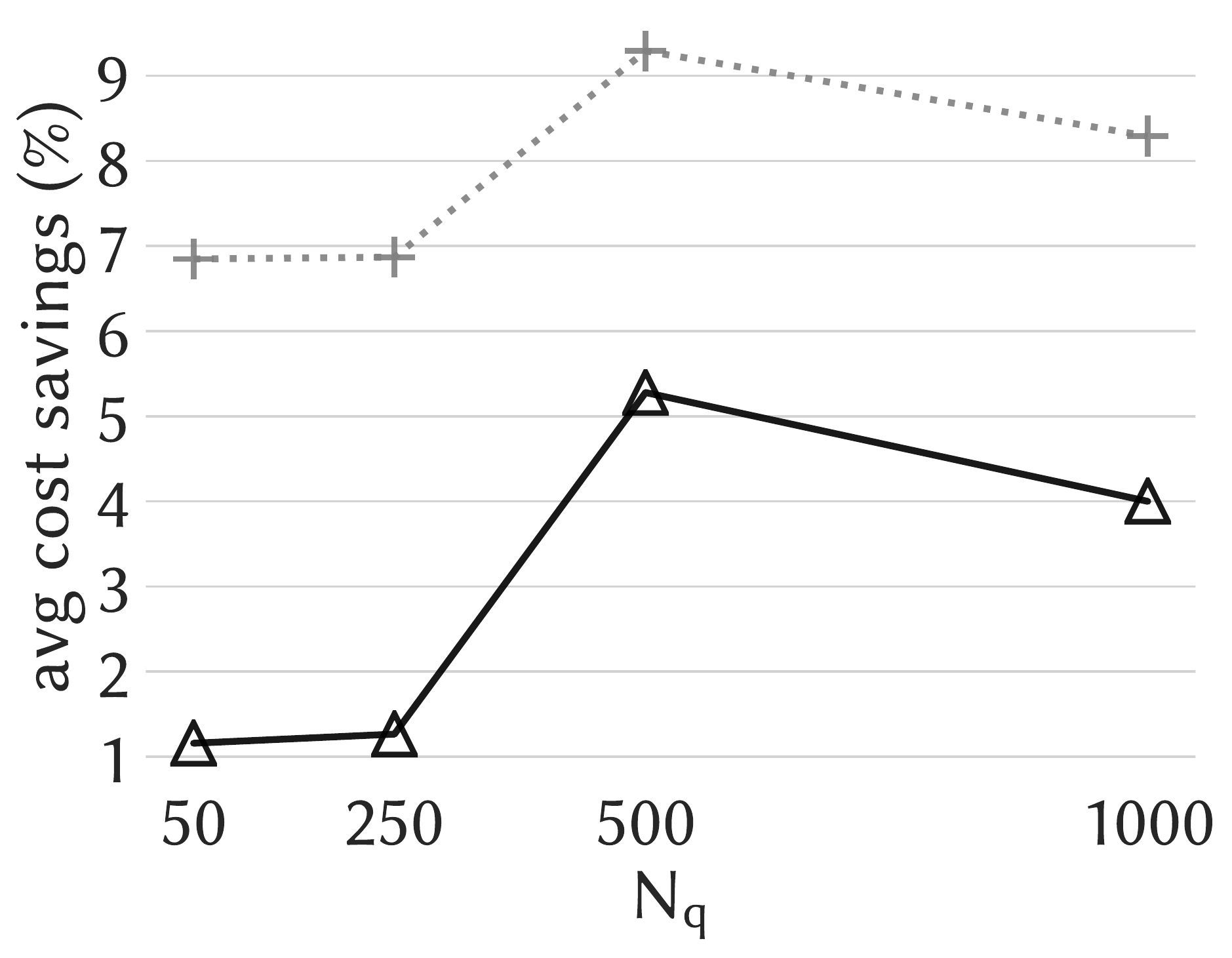} &
		\hspace{-5mm} \includegraphics[width=0.45\columnwidth, height = 0.55\textheight, keepaspectratio]{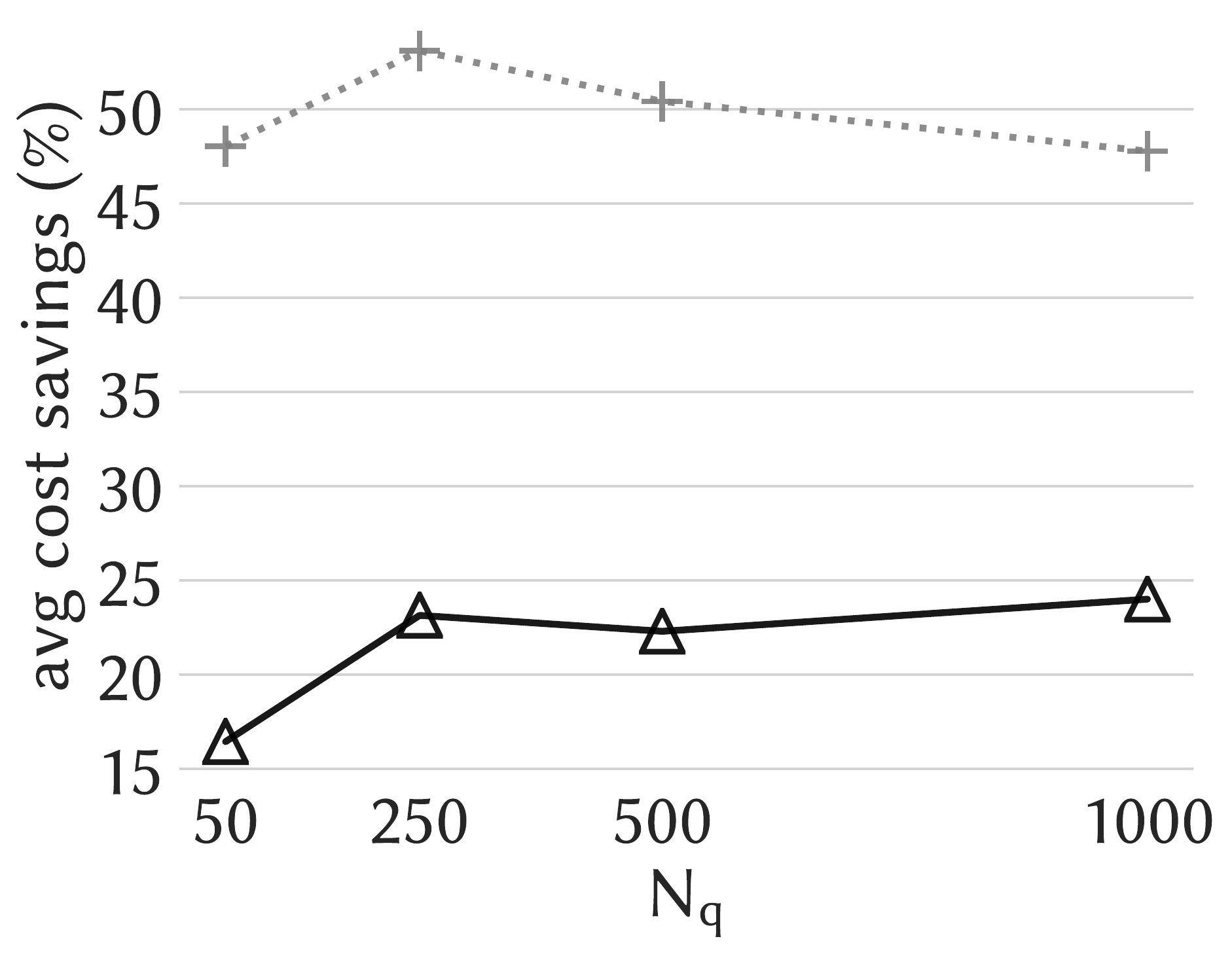} &
		\hspace{-5mm} \includegraphics[width=0.45\columnwidth, height = 0.55\textheight, keepaspectratio]{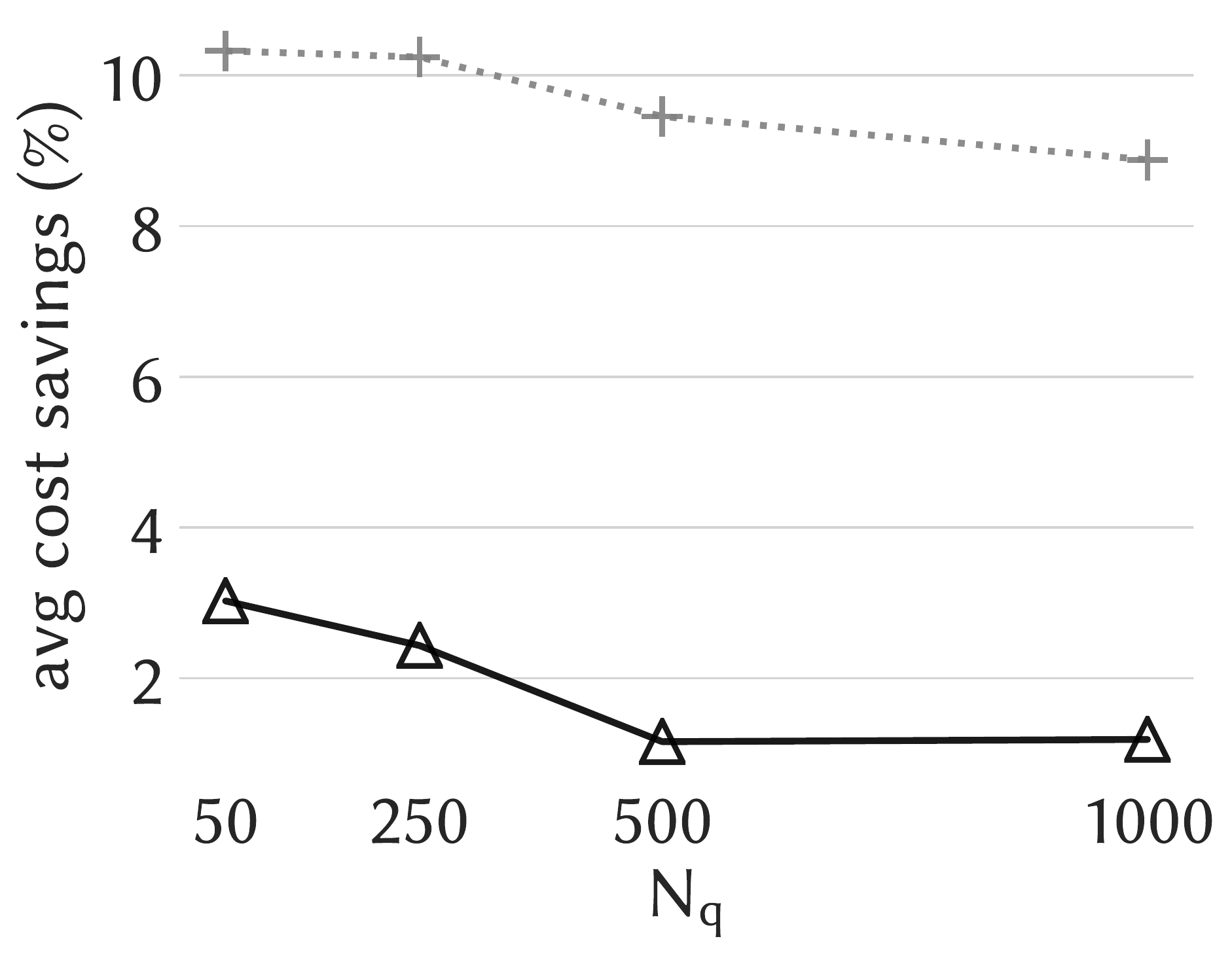}\\ 
		\textsc{TPC-H}  & \textsc{Munin} & \textsc{Pathfinder}   & \textsc{Barley}  \vspace*{-0.3cm}\\
	\end{tabular}
	\caption{\label{fig:impact_nq} 
		\revisioncol{Average cost savings percentage for \ouralgorithm and \ouralgorithmplus against query log size $N_q$.} 
	}
\end{figure*}

\spara{Implementation}. 
Experiments are executed on a machine with 
\textsc{2$\times$10 core Xeon E5 2680 v2 2.80\,GHz} processor and \textsc{256\,GB} memory.
All methods have been implemented in Python.
Our implementation is available online.\footnote{\url{https://github.com/martinociaperoni/PEANUT}}



\begin{table}[t]
	\setlength\tabcolsep{1.0pt}
	\fontsize{6.0}{7.0}\selectfont
	\centering
	\caption{\label{table:offlineStats} \revisioncol{Materialization phase statistics for the experiments comparing \ouralgorithm, \ouralgorithmplus, \indsep and \qtm{n}.}}
	\begin{tabular}
		{lrrrrrrrrrr}
		\toprule
		& \multicolumn{5}{c}{Disk Space (MB)} & \multicolumn{5}{c}{Time (seconds)}  \\
		\cmidrule(lr{0.35em}){2-6}
		\cmidrule(lr{0.35em}){7-11}
		dataset & VE-5 & JT & INDSEP & PEANUT & PEANUT+ & VE-5 & JT & INDSEP  & PEANUT & PEANUT+ \\
		\textsc{Child} & $0.025$ & $0.005$ & $0.002$ & $0.002$ & $ 0.044$  & $0.020$ & $0.035$  & $0.016$ & $34.12$  & $23.013$ \\
		\textsc{Hepar II} & $0.025$ & $0.013$ & $0.011$ & $0.001$ & $1.55$ & $0.040$ & $0.37$  & $0.12$ & $493.86$ & $313.21$  \\ 
		\textsc{Andes} & $0.62$& $70$ & $78$ & $55.16$ & $17.7$\,K & $0.84$ & $3.7$\,K & $2.55$ & $31$\,K & $20$\,K\\ 
		\textsc{Hailfinder}& $0.025$ & $0.071$ & $0.11$ & $0.13$ & $8.28$ & $0.041$ & $0.93$ & $0.056$ & $199.32$ & $78.96$ \\
		\textsc{TPC-H}* & $0.041$ & $23.77$ & $43.19$ & $1.23$ & $483.78$ & $2.76$ & NA & $0.027$ & $320.71$ & $57.83$ \\
		\textsc{Munin}*  &  $265$ & $2.8$\,K & $361.95$ & $358.16$ & $1$\,K & $253.67$  & NA & $2.57$ & $25.6$\,K & $16.9$\,K \\
		\textsc{Pathfinder}  & $0.06$ & $1.37$ & $1.40$ & $0.14$ & $220.79$ & $0.31$ & $302$ & $2.80$ & $1$\,K & $420.32$ \\ 
		\textsc{Barley}* & $24$ & $186.99$ & $3$\,K & $2.2$\,K & $13.2$\,K  & $11.57$ & NA & $0.034$ & $635.051$ & $263.57$  \\ 
	\end{tabular}
\end{table}

\vspace*{1cm}
\subsection{Query-processing results}
\label{sec:experiments_results}

We first report results on the cost of processing inference queries. 

\spara{Skewed workload}. In Figure~\ref{fig:distributions_cost_savings}\revisioncol{,}  we show the performance gains 
due to materialization of shortcut potentials achieved by \indsep and \ouralgorithmplus with different approximation levels when processing the skewed workload of queries.
The performance gains are given by the net cost savings 
obtained by resorting to the shortcut potentials, 
compared to the case when no shortcut potential is used. 
The results confirm that 
\ouralgorithmplus typically outperforms \indsep, even for large values of the approximation parameter $\epsilon$. 
As expected, for \ouralgorithmplus, 
the lowest value of $\epsilon = 1.2$ typically yields the greatest cost savings, 
while $\epsilon = 12$ tends to result in the worst performance. 
\revisioncol{In general, although it is not guaranteed, for both \ouralgorithm and \ouralgorithmplus, increasing $\epsilon$ tend to decrease savings. }
Next, we explore the relation between the net cost savings and the diameter of the Steiner tree extracted for query answering. 
Figure~\ref{fig:diameter} displays the average net cost savings in the skewed workload as a function of the Steiner-tree diameter for \indsep\ReviewOnly{\revisioncol{, \ouralgorithmplus}} and \ouralgorithm with different values of the approximation parameter $\epsilon$. Here, for each query we take the maximum savings over the considered budgets for \indsep, \ouralgorithm and \ouralgorithmplus.  
\FullOnly{The figure reveals that savings grow quickly with the Steiner-tree diameter, because queries which span a larger part of the tree are more likely to hit materialized shortcut potentials. }
\ReviewOnly{\revisioncol{While the figure does not offer a fair comparison between candidate strategies since savings are obtained with largely varying budgets, it reveals that savings grow quickly with the Steiner-tree diameter simply because larger Steiner trees lead to a higher shortcut potential hit rate. The increasing trend is common to all datasets, although the growth rate depends heavily on the characteristics of the materialized shortcut potentials. }}

\vspace{1mm}
\noindent
{\bf Uniform workload}.  In Figure~\ref{fig:comparison_qtm} we compare, in the uniform workload, the total cost for inference based on the junction tree (JT) without additional materialization, \indsep, \ReviewOnly{\revisioncol{ \qtm{n} with $n=5$ materialized factors,  \ouralgorithm and \ouralgorithmplus for fixed $\epsilon = 1.2$ }}. 
\FullOnly{, \ouralgorithmplus with $\epsilon=1.2$, and \qtm{n} with $n=5$ materialized factors} 
In addition, 
we analyse how the cost varies with the query size~$|q|$.  \revisioncol{As expected, larger queries are more computationally demanding because they yield larger Steiner trees and larger messages. }
Clearly, for $|q|=1$, \qtm{n} performs remarkably worse than the other methods. \ouralgorithmplus always leads to the best query processing except for the \textsc{Munin} dataset, in which junction-tree-based approaches exhibit very poor performance. Similarly, \ouralgorithm outperforms \indsep in all datasets and  \qtm{5} in five datasets.

\revisioncol{Finally, it is important to observe that the scale of savings across different datasets is highly variable. 
In particular, the effectiveness of materialization depends largely on the tree structure.
As it can be confirmed with a simple regression analysis of the average cost on characteristics of the datasets, for a fixed number of cliques, the benefits of our methods are larger in sparser Bayesian networks leading to junction trees with limited treewidth. Similarly, the diameter of the junction tree appears to have a relevant positive effect on the performance of \ouralgorithm. }

\revisioncol{\subsection{Robustness analysis}
We carry out additional experiments to explore the robustness of our system \ouralgorithm with respect to drifts in the query workload distribution. In particular, we explore the setting in which materialization is optimized with respect to a query workload $\querylog$ and then the results are evaluated on a different query workload $\querylog'$. Both $\querylog$ and $\querylog'$ are of size $500$. 
When $\querylog$ is the skewed (uniform) query workload considered in our experiments, 
$\querylog'$ consists of queries from the same skewed (uniform) workload in
proportion $\lambda$ and of queries from the uniform (skewed) workload 
in proportion
$1 - \lambda$, with varying $\lambda \in [0, 1]$. 
The average cost of processing $\querylog'$ with and without materialization for different values of $\lambda$ 
in the case that $\querylog$ is the skewed and the uniform workload are shown in Figures~\ref{fig:robustness_skewed} and~\ref{fig:robustness_uniform}, respectively.
Here, \ouralgorithm and \ouralgorithmplus parameters are fixed to $K = 10 b_{T}$ and $\epsilon = 1.2$. 
The cost savings in $\querylog'$ are generally not drastically affected by the value of $\lambda$. Thus, the results of this experiment highlight that \ouralgorithm and \ouralgorithmplus are fairly robust with respect to drifts in the query workload distribution. 
Note that the average cost tends to increase with the proportion of skewed queries. This is not surprising since skewed queries are more likely to correspond to Steiner trees of large diameter. This trend is observed in all datasets except for \textsc{HeparII}, in which, although skewed queries are as usual associated with larger Steiner tree diameters, they also favour query variables with smaller cardinality. 
Despite the robustness of the proposed approach, it is appropriate to recompute the materialization either periodically over time or 
whenever there is sufficient evidence that the query distribution has changed significantly. 


}

\revisioncol{
	\subsection{Impact of query log size}\label{sec:query_size}

	We examine how the number $N_q$ of queries \query in the query log \querylog, used to drive the offline optimization step, impact the performance of \ouralgorithm and \ouralgorithmplus. 
	More specifically, we vary $N_q$  in $\{ 50, 250, 500, 1000 \}$ to examine its effect and we fix \ouralgorithm and \ouralgorithmplus parameters to $\epsilon = 6.0$ and $K = 10 b_{T}$. While the size of the query log $N_q$ used in the optimization is varied, the size of the query log $\querylog'$ used for testing the performance of the methods is held constant at $N'_q = 1000$. 
	We expect the parameter $N_q$ to have a not significant impact on the performance of \ouralgorithm and \ouralgorithmplus. This follows since such methods do not use \querylog to learn a complex function, but only to learn variable probabilities via empirical frequencies. 
	Provided that the size of \querylog is large enough, the empirical frequencies deliver accurate estimates of the variable probabilities and small differences in probability estimates do not have a significant effect on \ouralgorithm and \ouralgorithmplus. 
	Moreover, our robustness analysis suggests that the performance of our methods does not remarkably declines even in the in the scenario in which the distribution of the empirical frequencies deviates consistently from the target distribution of variable probabilities. 
	Figure~\ref{fig:impact_nq} shows results of the experiments aimed at assessing the effect of $N_q$. 
	The average cost savings for \ouralgorithm and \ouralgorithmplus exhibit little variability over values of $N_q$, which confirms our expectations concerning the impact of $N_q$ on the performance of our methods.
}

\section{Conclusions}

We presented a novel technique to accelerate inference queries over Bayesian networks
using the junction-tree algorithm.
Our approach builds on the idea introduced by \citet{kanagal2009indexing} 
to materialize shortcut potentials over the junction tree. 
However, unlike their work, we have framed the problem of choosing shortcut potentials to precompute and 
materialize as a workload-aware optimization problem. 
In particular, we have formulated the problems of selecting a single optimal shortcut potential and 
an optimal set of shortcut potentials in view of evidence provided by historical query logs. 
We have proven hardness results for these problems, 
and we have developed a method, called \ouralgorithm,  
consisting of a two-level dynamic-programming framework that tackles the problems in pseudo-polynomial time. 
We have additionally introduced a strongly-polynomial algorithm
that trades solution quality with speed, as well as \ouralgorithmplus, a simple modification of \ouralgorithm, which ensures a better utilization of the available space budget. 
Extensive experimental evaluation confirms the effectiveness of our materialization methods 
in reducing the computational burden associated with a given query workload and the
superiority with respect to previous work. 
Interesting avenues to explore for future work include extending our techniques to allow for node-overlapping shortcut-potential subtrees in a more principled way, 
or studying a similar approach for optimal materialization of intermediate computations 
to speed up inference based on a different, possibly approximate, algorithm, 
such as \emph{loopy belief propagation}.

\balance
\bibliographystyle{ACM-Reference-Format}
\bibliography{main} 
\balance

\end{document}